\numberwithin{equation}{section}
\newtheorem{theorem}{Theorem}[section]
\newtheorem{lemma}[theorem]{Lemma}
\newtheorem{proposition}[theorem]{Proposition}
\newtheorem{rem}[theorem]{Remark}
\renewenvironment{proof}[1][Proof]{\begin{trivlist}
\item[\hskip \labelsep {\bfseries #1}]}{\qed\end{trivlist}}
\newcommand{\ind}{\mathbf{1}}
\renewcommand{\ge}{\geq}
\renewcommand{\le}{\leq}
\renewcommand{\tilde}{\widetilde}
\renewcommand{\hat}{\widehat}
\DeclareMathSymbol{\leqslant}{\mathalpha}{AMSa}{"36} % nicer `smaller or equal'
\DeclareMathSymbol{\geqslant}{\mathalpha}{AMSa}{"3E} % nicer `larger or equal'
\DeclareMathSymbol{\eset}{\mathalpha}{AMSb}{"3F}     % nicer `emptyset'
\renewcommand{\leq}{\;\leqslant\;}                   % redef. of < or =
\renewcommand{\geq}{\;\geqslant\;}                   % redef. of < or \renewcommand{\geq}{\;\geqslant\;}                   % redef. of > or
\newcommand{\dd}{\,\text{\rm d}}             % a straight d for differentials
\DeclareMathOperator*{\union}{\bigcup}       % \sum-like symbol for union
\newcommand{\sumtwo}[2]{\sum_{\substack{#1 \\ #2}}} % sum with 2 lines
\newcommand{\prodtwo}[2]{\prod_{\substack{#1 \\ #2}}}     % product 2 lines
\renewcommand{\u}[1]{\underline{#1}}
\newcommand{\cN}{{\ensuremath{\mathcal N}} }
\newcommand{\cI}{{\ensuremath{\mathcal I}} }
\newcommand{\bP}{{\ensuremath{\mathbf P}} }
\newcommand{\bE}{{\ensuremath{\mathbf E}} }
\newcommand{\bZ}{{\ensuremath{\mathbf Z}} }
\newcommand{\bbE}{{\ensuremath{\mathbb E}} }
\newcommand{\bbN}{{\ensuremath{\mathbb N}} }
\newcommand{\bbP}{{\ensuremath{\mathbb P}} }
\newcommand{\bbR}{{\ensuremath{\mathbb R}} }
\newcommand{\bbZ}{{\ensuremath{\mathbb Z}} }
\newcommand{\ga}{\alpha}
\newcommand{\gb}{\beta}
\newcommand{\gd}{\delta}
\newcommand{\gep}{\varepsilon}       % \geq already exists...
\newcommand{\gp}{\varphi}
\newcommand{\gD}{\Delta}
\newcommand{\go}{\omega}
\newcommand{\gl}{\lambda}
\def\captionfont@{\footnotesize}
\def\captionheadfont@{\scshape}
\long\def\@makecaption#1#2{%
  \vspace{2mm}
  \setbox\@tempboxa\vbox{\color@setgroup
    \advance\hsize-6pc\noindent
    \captionfont@\captionheadfont@#1\@xp\@ifnotempty\@xp
        {\@cdr#2\@nil}{.\captionfont@\upshape\enspace#2}%
    \unskip\kern-6pc\par
    \global\setbox\@ne\lastbox\color@endgroup}%
  \ifhbox\@ne % the normal case
    \setbox\@ne\hbox{\unhbox\@ne\unskip\unskip\unpenalty\unkern}%
  \fi
  \ifdim\wd\@tempboxa=\z@ % this means caption will fit on one line
    \setbox\@ne\hbox to\columnwidth{\hss\kern-6pc\box\@ne\hss}%
  \else % tempboxa contained more than one line
    \setbox\@ne\vbox{\unvbox\@tempboxa\parskip\z@skip
        \noindent\unhbox\@ne\advance\hsize-6pc\par}%
\fi
  \ifnum\@tempcnta<64 % if the float IS a fig pas inclure ces preuves la.ure...
    \addvspace\abovecaptionskip
    \moveright 3pc\box\@ne
  \else % if the float IS nOT a figure...
    \moveright 3pc\box\@ne
    \nobreak
    \vskip\belowcaptionskip
  \fi
\relax
}
\def\writefig#1 #2 #3 {\rlap{\kern #1 truecm
\raise #2 truecm \hbox{#3}}}
\newcommand{\tf}{\mathtt{F}}
\newcommand{\tg}{\mathtt{G}}
\newcommand{\Var}{\mathbb{V}\mathrm{ar}}
\newcommand{\K}{\mathrm{K}}
\newcommand{\free}{\mathtt{f}}
\renewcommand{\subset}{\subseteq}
\newcommand{\logg}{\log \log }
\newcommand{\m}{\mathtt{m}}
\renewcommand{\phi}{\varphi}
\begin{document}

\title[Disorder relevance for the pinning model]{Pinning on a defect line:\\
characterization of marginal disorder relevance and sharp asymptotics for the critical point shift}

\author{Quentin Berger}
\address{LPMA, Universit\'e Pierre et Marie Curie\\
Campus Jussieu, case 188\\
4 place Jussieu, 75252 Paris Cedex 5, France}
\email{quentin.berger@upmc.fr}

\author{Hubert Lacoin}
\address{IMPA,
Estrada Dona Castorina 110,
Rio de Janeiro / Brasil 22460-320}
\email{lacoin@impa.br}

\begin{abstract}
The effect of disorder for pinning models is a subject which has attracted much attention in theoretical physics and rigorous mathematical physics. A peculiar point of interest is the question of coincidence of the quenched and annealed critical point for a 
small amount of disorder. The question has been mathematically settled in most cases in the last few years, giving in particular 
a rigorous validation of the Harris Criterion on disorder relevance. 
However, the marginal case, where 
the return probability exponent is equal to $1/2$, i.e. where the inter-arrival law of the renewal process is given by 
$\K(n)=n^{-3/2}\phi(n)$
 where $\phi$ is a slowly varying function, has been left partially open. 
In this paper, we give a complete answer to the question by proving 
a simple necessary and sufficient criterion on the return probability for disorder relevance, which confirms earlier 
predictions from the literature.
Moreover, we also provide sharp asymptotics on the critical point shift: 
in the case of the pinning of a one dimensional simple random walk, 
the shift of the critical point satisfies the following high temperature asymptotics
$$
 \lim_{\gb\to 0}\gb^2\log h_c(\gb)= - \frac{\pi}{2}.
$$
This gives a rigorous proof to a claim of
B. Derrida, V. Hakim and J. Vannimenus (Journal of Statistical Physics, 1992).
   \\[10pt]
  2010 \textit{Mathematics Subject Classification: 60K35, 60K37, 82B27, 82B44}
  \\[10pt]
  \textit{Keywords:  Disordered Pinning/Wetting Model, Localization Transition, Disorder Relevance, Harris Criterion.}
\end{abstract}
\maketitle

\section{Introduction}

In statistical mechanics, the introduction of disorder into a system, i.e.~of a random inhomogeneity in the Hamiltonian,
can drastically change its critical behavior. However, this change of behavior does not always occur and in some cases
the disorder system keeps the features of the homogeneous one, at least for small intensities of disorder.
As most systems encountered in nature possesses some kind of microscopic impurities, this question of \textit{disorder relevance}
i.e.\ whether a disordered system behaves like the homogeneous one, has been the object of a lot of attention in the physics community
(see e.g. \cite{cf:Hcrit} and references therein).

\medskip

The present paper deals with the question of influence of disorder for the pinning model. 
This subject has been the object of a lot of studies in the past decades,
either in theoretical physics see e.g.\ \cite{cf:BM1,cf:BM2,cf:CH,cf:DHV,cf:DR,cf:FLNO,cf:GN,cf:KM,cf:KL,cf:Monthus,cf:MG,cf:TC}, or rigorous mathematical physics 
\cite{cf:A06, cf:AZ08, cf:AZ10, cf:AZ12, cf:QB, cf:Magic5, cf:QH, cf:CdH, cf:CTT, cf:CDH, cf:DGLT09, cf:GLT, cf:GLT10, cf:GLT11,
cf:GT_cmp, cf:GT09, cf:Lmart, cf:Poi, cf:T08, cf:T_fractmom}.
A reason why the question of disorder relevance in the special case of the pinning model has focus that much of interest is that 
it is a rather simple framework with a rich phenomenology, and thus gives a good context to test the general prediction made by physicists 
concerning relevance of disorder \cite{cf:Hcrit}. 
Indeed the \emph{pure model} (i.e.\ the one without disorder)
is exactly solvable in the sense that there is an explicit simple expression for the free energy, see \cite{cf:Fisher}, but the specific-heat exponent $\nu_{\mathrm{SH}}$ associated 
to it can take any value in the interval $(-\infty,1]$ by tuning the value of the parameter $\alpha$ introduced in Equation \eqref{alpha}:

\begin{equation}\label{SH}
 \nu_{\mathrm{SH}}=\max(1,2-\alpha^{-1}).
\end{equation}

\medskip

Because of these characteristics, the disordered pinning model has been an ideal candidate to check rigorously the validity
of the renormalization group predictions, and in particular, that of the Harris criterion \cite{cf:Hcrit}.
The principal idea of Harris criterion is that one can predict the effect of a small quantity of disorder by looking
at the properties of the pure system: disorder relevance only depends on the sign of the specific-heat exponent.

\medskip

Specifically, when applied to the pinning model, the criterion leads to the following prediction: 
\begin{itemize}
 \item when the return probability exponent $\alpha$ is strictly larger than $1/2$, then disorder is relevant;
 \item  when $\alpha$ is smaller than $1/2$, disorder is irrelevant;
  \item there is no specific prediction for the case $\alpha=1/2$, where the specific-heat exponent vanishes.
\end{itemize}

Specific studies concerning disordered pinning  \cite{cf:DHV,cf:FLNO} give more detailed predictions:
in the case $\alpha>1/2$, there is a shift of the critical point of the disordered system with respect 
to the annealed one; whereas for $\alpha<1/2$, the two critical points coincide, at least when the inverse temperature $\gb$ is small.
The case $\alpha=1/2$ has been studied in the physics literature but has been the source of some controversy:
in the case where no slowly varying function is present, which corresponds to the classical models of two-dimensional wetting of a rough substrate by a random walk, 
the authors of \cite{cf:FLNO} predicted irrelevance of the disorder, while a few years later \cite{cf:DHV} claimed that the critical point 
was shifted. Both predictions then found supporters in the physics literature until the case was solve mathematically 
(see \cite{cf:GLT10} and references therein).
The full claim in \cite{cf:DHV} is that for the \textsl{wetting} model (see  \eqref{cf:RWpinning}-\eqref{cf:RWwetting} 
for the difference between pinning a wetting)
of a $(p-q)$ random-walk the difference between the quenched critical point $h_c(\gb)$ and the one of the pure system $h_c(0)$ 
satisfies (see \cite[Equation (1.7)]{cf:DHV})
\begin{equation}\label{derridasconjecture}
 \lim_{\gb\to 0+} \gb^2 \log \left( h_c(\gb)-h_c(0)\right)=-\frac{p\pi}{(2-p)^2}.
\end{equation}
The smallness of this conjectured $h_c(\gb)$ for $\gb$ close to zero explains why numerical simulations where not able to produce a general agreement 
between physicists.

\medskip

While the cases covered by the Harris criterion have all been brought on a rigorous ground
\cite{cf:A06, cf:AZ08, cf:DGLT09, cf:GLT, cf:L, cf:T08,cf:T_fractmom}, it turns out that 
the marginal case is still partially open. 
In \cite{cf:GLT10,cf:GLT11} it has been proved that there is indeed a shift in the critical point in the controversial case 
-- $\alpha=1/2$, $\gp(n)$ equivalent to a constant -- 
the best lower bound which is known on $h_c(\gb)$ is 
$\exp\left(-c_b\gb^{-b}\right)$ for all $b>2$ \cite{cf:GLT11}, while the best upper-bound is given by 
$\exp\left(-c\gb^{-2}\right)$ for some non optimal constant~$c$ \cite{cf:A06,cf:T08}.

\medskip

Moreover there remains a very narrow window of slowly varying function for which the issue of disorder relevance is still open
(e.g. $\phi(n)=\sqrt{\log n}$).
The aim of this paper is to settle these two issues by exhibiting a simple necessary and sufficient criterion on the return probability for disorder relevance;
and by proving a generalized version of conjecture \eqref{derridasconjecture}.

\section{Model and results }

\subsection{The disordered pinning model}

We now define in full details the disordered pinning model.
Let $\tau= (\tau_n)_{n\geq 0}$ be a reccurent renewal process, i.e.\ a random sequence  whose increments 
$(\tau_{n+1}-\tau_n)_{n\geq 0}$ are identically distributed positive integers.
We assume that $\tau_0=0$, and that inter-arrival distribution satisfies
\begin{equation}\label{alpha}
\K(n):= \bP(\tau_1 = n) =  (2\pi)^{-1}\gp(n) \, n^{-(1+\ga)}.
\end{equation}
for some $\ga\geq 0$ and slowly varying function $\gp(\cdot)$ 
(the presence of $(2\pi)^{-1}$ in the formula is rather artificial but simplifies further notations).
We denote by $\bP$ the law of $\tau$. With a small abuse of notation we will sometimes consider $\tau$ as a subset of $\bbN$.

\medskip

With no loss of generality, we assume that 
our renewal process is recurrent, i.e.\ that  
$$\bP(\tau_1 = \infty)=1-\sum_{n=1}^{\infty} \K(n)=0.$$
Indeed in the case of transient renewal process, the partition function can be rewritten as of a recurrent renewal, at the cost of a change in the parameter $h$
(see \cite{Chapter 1}[cf:GB]).

\medskip

Let $\go=(\go_n)_{n\in \bbN}$ (the random environment) be a realization of a sequence of IID random variable 
whose law is denoted by $\bbP$.
We assume that the variables $\go_n$ have exponential moments of all order, and set for $\gb\in \bbR$
\begin{equation}\label{defgl}
 \gl(\gb):=\log \bbE[e^{\gb \go}]<\infty.
\end{equation}
We assume (with no loss of generality) that the $\go$'s are centered and have unit variance. 

\medskip

Given $h\in \bbR$ (the pinning parameter), $\gb>0$ (the inverse temperature), and $N\in \bbN$, we define 
a modified renewal measure $\bP^{\gb,h,\go}_{N}$ whose Radon-Nikodym derivative w.r.t.\ $\bP$ is given by
\begin{equation}
\frac{\dd \bP^{\gb,h,\go}_{N}}{\dd \bP}(\tau):=\frac{1}{Z^{\gb,h,\go}_{N}}\exp\left(\sum_{n=1}^N (\gb \go_n+ h-\gl(\gb))\ind_{\{n\in \tau\}}\right) \ind_{\{N\in \tau\}}\, ,
\end{equation}
where $Z^{\gb,h,\go}_{N}$ is the partition function,
\begin{equation}
Z^{\gb,h,\go}_{N}:=\bE\left[\exp\left(\sum_{n=1}^N (\gb \go_n+ h-\gl(\gb))\ind_{\{n\in \tau\}}\right) \ind_{\{N\in \tau\}}\right].
\end{equation}
The free energy per monomer is given by
\begin{equation}
\tf(\gb,h):=\lim_{N\to \infty}\frac 1 N \log Z^{\gb,h,\go}_{N} \stackrel{\bbP-a.s.}{=}\lim_{N\to \infty}\frac 1 N \bbE\left[  \log Z^{\gb,h,\go}_{N}\right].
\end{equation}
See e.g.\ \cite[Theorem 4.1]{cf:G} for a proof of the existence and non-randomness of the limit. 
It is not difficult to check that it is a non-negative, non-decreasing convex function.
When $\gb=0$, there is no dependence in $\go$ and we choose to denote the measure, partition function, and free energy respectively by
$P_N^h$, $Z^h_N$ and $\tf(h)$.
Note that with our convention $\bbE\big[ Z^{\gb,h,\go}_{N}\big]=Z^h_N$, 
so that the partition function and free-energy of the annealed system  (which is obtained by averaging the Boltzmann weight over $\go$) corresponds 
to that of the pure one.

\medskip

The pure free energy has an explicit expression:
\begin{equation}
 \tf(h)=
\begin{cases}
0 \quad \text{ when } h< 0,\\
\tg^{-1}(h) \quad \text{ when } h\ge 0,
\end{cases}
\end{equation}
where $\tg^{-1}$ is the inverse of the function
\begin{equation}
\tg: \  \ \begin{array}{lcl}
 \bbR_+ &\to &\bbR_+,\\
x &\mapsto &-\log \Big( \sum_{n=1}^{\infty} e^{-nx}\K(n)\Big).
 \end{array}
\end{equation}
In particular, this implies that for $\alpha\in (0,1)$
\begin{equation}\label{groomit}
 \tf(h)=h^{\alpha^{-1}}\hat \phi(1/h),
\end{equation}
where $\hat \phi(1/h)$ is an explicit slowly varying function (similar results exists for the cases 
$\alpha\ge 1$ and $\alpha=0$, we refer to \cite[Theorem 2.1]{cf:GB}).
A simple use of Jensen's inequality gives
\begin{equation}
\bbE \left[ \log Z^{\gb,h,\go}_{N}\right] \le \log \bbE\left[ Z^{\gb,h,\go}_{N}\right]=\log Z^h_N , 
\end{equation}
and hence 
\begin{equation}\label{ineq}
 \tf(\gb,h)\le \tf(h).
\end{equation}
Some other convexity property (see \cite[Proposition 5.1]{cf:GB}), on the other hand, implies that  
\begin{equation}
 \tf(\gb,h)\ge \tf(h-\gl(\gb)). 
\end{equation}
Hence the quenched system also presents a phase transition
\begin{equation}
h_c(\gb):=\inf\{ h\in \bbR \ | \ \tf(\gb,h)>0\},
 \end{equation}
and we have 
\begin{equation}
 0\le h_c(\gb)\le \gl(\gb).
\end{equation}
The inequality on the r.h.s.\ is in fact always strict: we have $h_c(\gb)<\gl(\gb)$ (see \cite{cf:AS}).
On the other hand, the question whether $h_c(\gb)$ is equal to zero or not turns out to have a more complex answer, and is deeply related to the 
problem of disorder relevance.

\subsection{Back to the origins: The random walk pinning/wetting models}

Let us also, for the sake of completeness, describe models for pinning/wetting of a simple random-walk which is the one 
introduced and studied in \cite{cf:DHV}.
Given a fixed parameter $p\in(0,1)$, let $\bP$ denote the law of a one dimensional nearest-neighbor simple random walk $S$ on $\bbZ$: 
$S_0=0$ and the increments $X_n:=(S_n-S_{n-1})_{n\ge 1}$ form a sequence of IID variables 
$$\bP(X_n=\pm 1) =p/2 \quad \text{and } \quad \bP[X_n=0]=q=1-p.$$
We define $\bP^{\gb,h,\go}_N$ which is a probability measure defined by its Radon-Nikodym derivative:
\begin{equation}\label{cf:RWpinning}
\frac{\dd \bP^{\gb,h,\go}_{N}}{\dd \bP}(S):=\frac{1}{Z^{\gb,h,\go}_{N}}\exp\left(\sum_{n=1}^{N} (\gb \go_n+ h-\gl(\gb))
\ind_{\{S_n=0\}}\right) \ind_{\{S_{N}=0\}}\, ,
\end{equation}
where $Z^{\gb,h,\go}_{N}$ is the partition function,
\begin{equation}
Z^{\gb,h,\go}_{N}:=\bE\left[\exp\left(\sum_{n=1}^{2N} (\gb \go_n+ h-\gl(\gb))\ind_{\{S_n=0\}}\right) \ind_{\{S_{2N}=0\}}\right].
\end{equation}
We notice that the set 
$\tau:=\left\{ n \ | \  S_{n}=0 \right\}$ is a renewal process. It satisfies for $p\in(0,1)$
(cf. \cite[Proposition A.10]{cf:GB})
\begin{equation}\label{crac}
\bP(\tau_1=n)= \bP[ S_{n}=0\ ; \  S_{k}\ne  0, \forall k\in \{1, \dots, n-1\} ]  \stackrel{n\to\infty}{\sim} \sqrt{\frac{p}{2\pi}} \, n^{-3/2}\, ,
\end{equation}
It thus falls in our framework with $\alpha=1/2$ and $\phi(n)$ converging to $\sqrt{2p\pi}$.
\begin{rem}\
The reader can check that the case $p=1$ which corresponds to the simple random walk on $\bbZ$ 
is a bit different for periodicity issue (the condition $S_N=0$ can only be satisfied for even values of $N$) 
but is equivalent to $p=1/2$ after rescaling space by a factor $2$.
\end{rem}
\medskip

The wetting measure which is the one studied in \cite{cf:DHV} is defined in a similar manner but with the additional constraint that $S$ has to remain positive, to model the presence of a rigid substrate which the interface cannot cross,
\begin{equation}\label{cf:RWpinning}
\frac{\dd \tilde \bP^{\gb,h,\go}_{N}}{\dd  \bP}(S):=\frac{1}{\tilde Z^{\gb,h,\go}_{N}}\exp\left(\sum_{n=1}^{N} (\gb \go_n+ h-\gl(\gb))
\ind_{\{S_n=0\}}\right) \ind_{\{S_{N}=0\ ; \ S_{n\ge 0}, \forall n\in [0,N]\}}\, .
\end{equation}
The constraint has the effect of shifting the pure critical point which is not equal to zero. One has 
\begin{equation}
h_c(0):=\log \frac{2}{2-p}.
\end{equation} 
Even though this is less obvious,
the model also falls in our framework (see \cite[Chapter 1]{cf:GB} for details) and
the associated recurrent renewal process has inter-arrival law.
\begin{equation}\label{croc}
K(n):= \frac{2}{2-p}\bP[ S_{n}=0\ ; \ S_{k}> 0, \forall k\in \{1,\dots, n-1\} ]\stackrel{n\to\infty}{\sim}\frac{1}{2-p}\sqrt{\frac{p}{2\pi}} \, n^{-3/2}\, 
\end{equation}
In particular one has 
$\alpha=1/2$ and $\phi(n)$ converges to $\sqrt{2p\pi}/(2-p)$.

\subsection{Critical point-shift and disorder relevance}

Knowing whether the inequality  $h_c(\gb)\ge 0$ is sharp for small $\beta$ is an important question in terms of disorder relevance.
It corresponds to knowing whether the annealed and quenched critical points coincide.
This question has been the object of a lot of attention of theoretical physicists and mathematicians in the past twenty years
\cite{cf:A06,cf:AZ08,cf:DGLT09,cf:DHV,cf:GLT, cf:GLT10,cf:GLT11, cf:Lmart,cf:T08}.

In \cite{cf:DHV}, Derrida, Hakim and Vannimenus exposed a heuristic argument based on the Harris criterion \cite{cf:Hcrit}, which yields several 
predictions for the critical point shift  for a related hierarchical model. Their claims can 
be translated as follows in the case where the slowly varying function
$\phi$ is asymptotically equivalent to a constant
\begin{itemize}
 \item [(A)] When $\alpha<1/2$ disorder is irrelevant ;
 \item [(B)] When $\alpha>1/2$ disorder is relevant, and $h_c(\gb)$ is of order $\gb^{\frac{2\alpha}{2\alpha-1}}$ ;
 \item [(C)] When $\alpha=1/2$ disorder is relevant, and $-\log h_c(\gb)$ is of order $\gb^{-2}$. 
\end{itemize}
Note that the case $(C)$ presents a special interest, as it includes the pinning of a simple random walk \eqref{cf:RWpinning}. 
Moreover, whereas $(A)$ and $(B)$
have met a general agreement in the physics community, prediction $(C)$ was in opposition to the earlier conclusion of \cite{cf:FLNO},
and remained controversial for a long time (see \cite{cf:GLT10} and references therein).

\medskip

The heuristic argument which is presented in \cite{cf:DHV} is based on second moment computations, and
can easily be generalized for the case of non-trival slowly varying $\phi$ (see e.g the discussion in \cite[Section 1.3]{cf:GLT11}).
The prediction becomes: 
\begin{itemize}
\item [(D)] Disorder is relevant if and only if the renewal process $\tau':=\tau^{(1)}\cap\tau^{(2)}$, 
obtained by intersecting two independent copies of $\tau$, is recurrent.
\end{itemize}
\medskip

Since their publication, these predictions have mostly been brought onto rigorous ground.
In \cite{cf:A06} (see \cite{cf:Lmart,cf:T08} for alternative shorter proofs) it has been shown that when $\tau'$ is terminating 
(i.e.\ is finite), then the disorder is irrelevant.
In \cite{cf:DGLT09} (see also \cite{ cf:AZ10,cf:GLT}), the prediction $(B)$ above was shown to hold true. The existence 
of the limit 
\[c_{\ga} = \lim_{\gb\downarrow 0} h_c(\gb)\gb^{-\frac{2\alpha}{2\alpha-1}}\]
has been proved recently in \cite{cf:CTT}, and it is shown that $c_{\ga}$ is universal, 
in the sense that it does not depend on the law $\bbP$. In \cite{cf:GLT10}, the prediction $(C)$ was partially proved, it was shown that $h_c(\gb)>0$ for all $\beta$ 
with a suboptimal lower-bound. The best standing lower-bound on $h_c(\gb)$ is given in \cite{cf:GLT11}
where is is shown that for any $\gep>0$, $h_c(\gb)\ge e^{-\frac{c_\gep}{\gb^{2+\gep}}}$.

\medskip

Furthermore, the papers \cite{cf:GLT10, cf:GLT11} do not provide a complete proof of prediction $(D)$ but fails very close to it:
for the case $\alpha=1/2$ and $\phi(n)\sim (\log (n))^\kappa$, the method in \cite{cf:GLT11} is sufficient to prove disorder
relevance for $\kappa>1/2$, and is not able to provide give a result only for $\kappa=1/2$ ($\kappa<1/2$ corresponds to disorder irrelevance).

\medskip

In this paper, we prove that $(D)$ holds and prove a sharp estimate for the critical point shift in the case $\alpha=1/2$.

\medskip
We also mention the recent works  \cite{cf:CSZ2, cf:CSZ3} which  proposes an alternative approach to disorder relevance for pinning model.
In the case where  $\tau'=\tau^{(1)}\cap\tau^{(2)}$ is recurrent, the authors consider weak coupling limits of the model by scaling 
$\gb$ and $h$ with $N$ adequately.
 When  $\alpha<1/2$ the right choice is to choose $N$ of the order of the correlation length of the pure system, and $\gb$ such that the variance of the partition function function remains bounded. The existence of a non-trivial scaling limit is derived using the framework of polynomial chaos \cite{cf:CSZ1}.
The case $\ga=1/2$ presents some extra-challenge and is the object of ongoing work \cite{cf:CSZ3}.

\subsection{Results}

Our first theorem is the confirmation of the validity of prediction $(D)$, in the standard interpretation of disorder relevance.

\begin{theorem}
\label{thm:main}
 We have 
 \begin{equation*}
 \left\{ \,  \forall \gb>0, \quad h_c(\gb)>0\,  \right\}
 \end{equation*}
if and only if 
\begin{equation}\label{groupmh}
\sum_{n\ge 1}\frac{1}{n^{2(1-\alpha)}(\phi(n))^2}=\infty.
\end{equation}
\end{theorem}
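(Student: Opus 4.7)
My plan has two parts matching the two directions of the equivalence. A preliminary observation is that, by Doney's renewal theorem applied to \eqref{alpha}, $\bP(n\in\tau)\asymp n^{\ga-1}/\gp(n)$, so \eqref{groupmh} is equivalent, up to multiplicative constants, to $\sum_n \bP(n\in\tau)^2=\infty$, i.e.\ to the recurrence of the intersection renewal $\tau':=\tau^{(1)}\cap\tau^{(2)}$ built from two independent copies of $\tau$. This is the probabilistic object which governs second moment computations for $Z_N^{\gb,h,\go}$.

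For the direction ``\eqref{groupmh} fails $\Rightarrow\; h_c(\gb)=0$ for some $\gb>0$'', I would run the classical second moment method of \cite{cf:A06,cf:T08,cf:Lmart}. The identity
\begin{equation*}
\bbE\!\left[(Z_N^{\gb,0,\go})^2\right]=\bE^{\otimes 2}\!\left[\exp\!\Big((\gl(2\gb)-2\gl(\gb))\,|\tau^{(1)}\cap\tau^{(2)}\cap[1,N]|\Big)\ind_{N\in\tau^{(1)}\cap\tau^{(2)}}\right]
\end{equation*}
reduces the computation to a renewal sum on $\tau'$. When $\tau'$ is transient and $\gb$ is small enough that $\gl(2\gb)-2\gl(\gb)\approx\gb^2$ is dominated by the Green function of $\tau'$, this is $O(1)$ uniformly in $N$. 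Paley--Zygmund then forces $\tf(\gb,0)=0$ and $h_c(\gb)=0$.

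The real work is the reverse direction. I would apply the fractional moment and coarse-graining strategy of \cite{cf:DGLT09,cf:GLT10,cf:GLT11}. Fix $\gga\in(0,1)$ and a block length $\ell=\ell(\gb)$ representing the correlation scale. Decomposing $Z_N^{\gb,h,\go}$ according to the sequence of extremal blocks $[(k-1)\ell,k\ell]$ visited by $\tau$ and using $(\sum a_i)^{\gga}\leq \sum a_i^{\gga}$, the problem reduces to producing, for a suitable $h=h(\gb)>0$, a bound
\begin{equation*}
\bbE\!\left[(Z_\ell^{\gb,h,\go})^{\gga}\right]\;\;\text{sufficiently small in }\ell.
\end{equation*}
Such a bound is obtained by introducing a tilted law $\tilde\bbP$ on one block, applying H\"older's inequality to get $\bbE[Z^\gga]\le C(\tilde\bbP)\,\tilde\bbE[Z]^{\gga}$ at a multiplicative cost $C(\tilde\bbP)$ involving moments of $\dd\tilde\bbP/\dd\bbP$, and showing that under $\tilde\bbP$ the tilted annealed partition function is strongly depleted; choosing $h(\gb)$ just below the threshold where this quantity is $O(1)$ then closes the loop.

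The main obstacle, and the genuine novelty in the marginal regime, is the design of $\tilde\bbP$. The constant shift used for $\ga>1/2$ in \cite{cf:DGLT09} and the piecewise single-block shift of \cite{cf:GLT10} are too crude when only the weak divergence \eqref{groupmh} is available. I would instead take $\tilde\bbP$ with a Gaussian quadratic Radon--Nikodym derivative of the form
\begin{equation*}
\frac{\dd\tilde\bbP}{\dd\bbP}(\go)\;\propto\;\exp\!\Big(-\sum_{1\le i<j\le\ell} J_{i,j}\,\go_i\go_j\Big),\qquad J_{i,j}\;=\;c_\gb\,\bP(j-i\in\tau),
\end{equation*}
with a small constant $c_\gb$. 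The kernel is engineered so that (i) the moments of $\dd\tilde\bbP/\dd\bbP$ stay bounded uniformly in $\ell$ when $c_\gb$ is small; (ii) $J_{i,j}$ correlates pairs of disorder variables precisely at the scales $j-i$ at which $\tau^{(1)}\cap\tau^{(2)}$ typically has matching renewal points, depleting the annealed partition function by a factor whose logarithm is of order $\sum_{n\leq\ell}1/(n^{2(1-\ga)}\gp(n)^2)$. Under \eqref{groupmh} this partial sum diverges as $\ell\to\infty$, yielding the desired contraction and hence $h_c(\gb)>0$ for every $\gb>0$.
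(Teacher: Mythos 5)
Your first direction (\eqref{groupmh} fails $\Rightarrow$ $h_c(\gb)=0$ for small $\gb$) is fine; it is the classical second moment/martingale argument of \cite{cf:A06,cf:T08,cf:Lmart} that the paper also relies on without re-proving.

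The gap is in the converse direction, and it is precisely the point the paper is about. Your proposed change of measure $\dd\tilde\bbP/\dd\bbP \propto \exp\!\big(-\sum_{i<j}J_{i,j}\go_i\go_j\big)$ with $J_{i,j}=c_\gb\,\bP(j-i\in\tau)$ is, in substance, the \emph{bilinear} tilt used in \cite{cf:GLT10}, and its polynomial-chaos generalization to a \emph{fixed} order $q$ is the tilt of \cite{cf:GLT11}. As the paper explicitly records, these methods provably do \emph{not} suffice to establish the full criterion \eqref{groupmh}: for $\alpha=1/2$ and $\gp(n)\sim(\log n)^{\kappa}$, the technique of \cite{cf:GLT11} gives disorder relevance for $\kappa>1/2$ but is stuck exactly at $\kappa=1/2$, which is the borderline case where \eqref{groupmh} still diverges. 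The reason is a cost--benefit tension you have not addressed: the H\"older cost of the quadratic (or fixed-$q$ multilinear) tilt and the resulting per-block depletion are both $O(1)$ constants, and with a fixed-order form the depletion constant cannot be made large enough to beat the coarse-graining entropy in the critical window. Your claim (ii), that the log-depletion is of order $D(\ell)=\sum_{n\le\ell}1/(n^{2(1-\alpha)}\gp(n)^2)$ while the moments of $\dd\tilde\bbP/\dd\bbP$ stay uniformly bounded, cannot both hold simultaneously with a bilinear kernel: the constraint $\|J\|$ small forces $c_\gb$ to shrink with $\ell$ and caps the achievable depletion.

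The actual proof closes this gap with two ideas not present in your plan. First, $X$ is a $q$-linear form $X^1(\go)=\ell^{-1/2}D(t)^{-q/2}\sum_{\u{i}\in J_{\ell,t}}U(\u{i})\go_{\u{i}}$ with $q=q_\ell$ growing to infinity with $\ell$ (and with a truncated interaction range $t=\lfloor\ell^{1/4}\rfloor$), see \eqref{def:q} and \eqref{def:X}; this is what makes the per-block gain beat the entropy uniformly under \eqref{groupmh}. Second, the penalization is not an exponential tilt but the bounded indicator cutoff $g_i(\go)=\exp(-M\ind_{\{X^i(\go)\ge e^{M^2}\}})$, so that the H\"older cost $\bbE[g_{\cI}^{-3}]^{1/4}$ is controlled by a simple Chebyshev bound via \eqref{assumpX} and \eqref{costchange}, even as $q\to\infty$; the exponential tilt would not have manageable moments in this regime. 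Lemmas \ref{lem:newX}, \ref{lem:newvarX} and \ref{lem:ontrime} are then the technical heart (controlling first and second moments of the growing-$q$ chaos under the $\tau$-dependent tilted law $\hat\bbP_\tau$), and none of them has an analogue in your plan.
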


Note that as mentioned in the previous section, most of the theorem is proved in previous papers, 
and we only need to prove one implication in the case $\alpha=1/2$.

\medskip

Our second result concerns the sharp estimate for the critical point shift in the case 
$\alpha=1/2$ in the absence of slowly varying function. In particular, in view of \eqref{crac} and \eqref{croc} 
it provides a proof of the limit stated in the abstract and of the claim
\eqref{derridasconjecture}.

\begin{theorem}
\label{thm:gap}
Assume that there exists a constant $c_{\gp}$ such that $\lim_{n\to\infty} \gp(n) = c_{\gp}$, or equivalently
$$\K(n)\stackrel{n\to\infty}{\sim} (2\pi)^{-1}\, c_\gp\,  n^{-3/2}.$$
Then we have 
 \begin{equation}\label{daresult}
\lim_{\gb\to 0} \gb^2 \log h_c(\gb) = - \frac12 \, (c_\gp)^2 \, .
 \end{equation}
\end{theorem}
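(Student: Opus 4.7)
The target is the double inequality
\begin{equation*}
\limsup_{\gb\to 0}\gb^2\log h_c(\gb) \leq -\tfrac12 c_\gp^{2} \leq \liminf_{\gb\to 0}\gb^2\log h_c(\gb),
\end{equation*}
so the proof splits into matching upper and lower bounds on $h_c(\gb)$. Both bounds are naturally carried out at the scale $N^{\star}(\gb):=\lceil\exp(c_\gp^{2}/\gb^{2})\rceil$. The motivation is that the hypothesis $\K(n)\sim(2\pi)^{-1}c_\gp n^{-3/2}$ together with Doney's renewal theorem yields $\bP(n\in\tau)\sim c_\gp^{-1}n^{-1/2}$, hence for two independent copies of $\tau$
\begin{equation*}
\bE^{\otimes 2}\bigl[\bigl|\tau^{(1)}\cap\tau^{(2)}\cap[1,N]\bigr|\bigr]=\sum_{n=1}^{N}\bP(n\in\tau)^{2}\sim c_\gp^{-2}\log N,
\end{equation*}
which equals $\gb^{-2}$ precisely at $N=N^\star$. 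Since $\gl(2\gb)-2\gl(\gb)\sim\gb^{2}$, this is the exact scale at which the second-moment ratio $\bbE[(Z^{\gb,h,\go}_{N})^{2}]/\bbE[Z^{\gb,h,\go}_{N}]^{2}$ crosses from bounded to divergent; the constant $c_\gp^{-2}$ in the overlap is the source of the $1/2$ in the statement.

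For the upper bound on $h_c(\gb)$ (the relevance side), I would run a conditioned second moment argument at scale $N^{\star}$ with $h=\exp(-(1-\gep)c_\gp^{2}/(2\gb^{2}))$. The plan is to build a measurable event $\cA\subset\gO$ of positive probability on which no short interval of $\go$ carries an anomalously large cluster, and such that on $\cA$ the truncated second-moment ratio $\bbE[(Z^{\gb,h,\go}_{N^\star})^{2}\ind_{\cA}]/\bbE[Z^{\gb,h,\go}_{N^\star}\ind_{\cA}]^{2}$ remains bounded. Paley--Zygmund then gives $Z^{\gb,h,\go}_{N^\star}\geq c\,Z^{h}_{N^\star}$ on an event of positive $\bbP$-probability; combining with concentration of $\log Z$ and super-additivity $\bbE[\log Z_{kN^\star}]\geq k\bbE[\log Z_{N^\star}]$ forces $\tf(\gb,h)>0$, hence $h_c(\gb)\leq h$.

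For the lower bound on $h_c(\gb)$, I would refine the fractional moment/change-of-measure scheme of Giacomin--Lacoin--Toninelli~\cite{cf:GLT10,cf:GLT11}. Fix $\gga\in(0,1)$ and $h=\exp(-(1+\gep)c_\gp^{2}/(2\gb^{2}))$. H\"older's inequality gives
\begin{equation*}
\bbE\bigl[(Z^{\gb,h,\go}_{N})^{\gga}\bigr]\leq \bbE\bigl[g(\go)^{-\gga/(1-\gga)}\bigr]^{1-\gga}\,\bbE\bigl[g(\go)\,Z^{\gb,h,\go}_{N}\bigr]^{\gga},
\end{equation*}
for a tilt $g(\go)=Z_{g}^{-1}\exp(-\tfrac12\la\go,A\go\ra)$ whose quadratic form $A$ is modelled on the renewal overlap kernel $(i,j)\mapsto\bP(j-i\in\tau^{(1)}\cap\tau^{(2)})$ restricted to $[1,N^{\star}]$. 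The tilted first moment $\bbE[g\,Z^{\gb,h,\go}_{N^\star}]$ reduces via a Girsanov-type computation to a pure pinning problem with a negatively shifted effective drift, comparable to $Z^{h}_{N^\star}$; the Gaussian cost $\bbE[g^{-\gga/(1-\gga)}]^{1-\gga}$ is controlled by $\mathrm{tr}(A)$, which is the same log-divergent sum as above. The two contributions balance at $N=N^{\star}$ to give $\bbE[(Z^{\gb,h,\go}_{N^\star})^{\gga}]\leq\gd<1$; block coarse-graining at scale $N^{\star}$ then yields $\tf(\gb,h)=0$, i.e.\ $h_c(\gb)\geq h$.

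The main obstacle, and the genuinely new input compared to~\cite{cf:GLT11}, is capturing the \emph{sharp} constant $c_\gp^{2}/2$. A product-form tilt built from single-site Gaussian densities (as in previous work) is blind to the log-divergent overlap and at best yields $h_c(\gb)\geq\exp(-c/\gb^{2+\gep})$. The multi-site, quadratic tilt above is designed so that its quadratic form is an asymptotically exact renormalization of the renewal overlap kernel; dually, the truncation $\cA$ for the upper bound must surgically remove the atypical second-moment contribution of that same kernel. Thus both halves of the proof hinge on an asymptotically exact evaluation of $\sum_{1\leq i<j\leq N^{\star}}\bP(j-i\in\tau^{(1)}\cap\tau^{(2)})\sim c_\gp^{-2}N^{\star}\log N^{\star}$, and the coefficient $c_\gp^{-2}$ is exactly what produces the $1/2$ in \eqref{daresult}.
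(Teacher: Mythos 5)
Your global strategy---match second--moment upper and lower bounds at the scale $N^{\star}\approx e^{c_\gp^{2}/\gb^{2}}$, with the constant $c_\gp^{-2}$ in the overlap $\sum_{n\le N}u(n)^{2}\sim c_\gp^{-2}\log N$ as the source of the $1/2$---is exactly the right heuristic and coincides with the paper's. The upper bound on $h_c$ is also in the same spirit as the paper's Proposition~\ref{prop:uppergap}, though the paper avoids the event $\cA$ and any truncated second moment or concentration input altogether: it simply bounds $\bbE[(Z^{\gb,0,\go}_{N,\free})^{2}]$ outright (Lemma~\ref{lem:variance}), applies Paley--Zygmund to lower bound the contact fraction at $h=0$ (Lemma~\ref{lem:measure}), and then plugs this into a finite-volume criterion of the form $h_c(\gb)\le 2\log N\big/\bbE\bE^{\gb,0,\go}_{N,\free}[\sum\gd_n]$ (Lemma~\ref{lem:finitevol}). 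Your conditioning on $\cA$ is a heavier machine than needed, and the ``concentration of $\log Z$'' step is not actually required when one works with the contact fraction criterion rather than directly with $\log Z_{N^\star}$.

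The genuine gap is in your lower bound. You propose a Gaussian quadratic tilt $g(\go)\propto\exp(-\tfrac12\la\go,A\go\ra)$ with $A$ built from the overlap kernel $u(j-i)^2$. But a tilt of fixed polynomial degree in $\go$ is precisely what the paper identifies as insufficient for the sharp constant. The change of measure in \cite{cf:GLT11} is an indicator cut-off $\exp(-M\ind_{\{X\ge c\}})$ of a $q$-linear form $X(\go)=\ell^{-1/2}D(t)^{-q/2}\sum U(\u{i})\go_{i_0}\cdots\go_{i_q}$ with $q$ \emph{fixed}; that already outperforms a quadratic tilt, yet it only yields $h_c(\gb)\ge e^{-c_b\gb^{-b}}$ for arbitrary $b>2$, falling short of the exponent $2$. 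The paper's essential innovation---stated explicitly in the introduction to Section~\ref{sec:chgtmeas}---is to let $q=q_\ell\to\infty$ with the coarse-graining scale (roughly $q\asymp\log\log\ell$) together with a truncation of the interaction range $i_k-i_{k-1}\le t_\ell$, and this is what turns $2+\gep$ into $2$ and recovers the exact constant $c_\gp^{2}/2$. Philosophically, the growing-order multilinear form is designed to approximate the partition function itself (it is a truncated chaos expansion of $Z_{d,f}$), so that $g(\go)$ penalizes precisely the atypical environments that inflate $Z_{d,f}$; a quadratic form is structurally blind to the higher-order renewal correlations and cannot do this to leading order. As written, your proposal omits the central idea of the proof, and there is no indication that the Gaussian tilt can be pushed to give the sharp asymptotics.
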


In fact we obtain a more general version of \eqref{daresult} which gives precise asymptotic estimates for arbitrary $\phi$ 
(see Propositions \ref{prop:lowergap}-\ref{prop:uppergap}).

\subsection{Open questions}

Our result completely settles the question of whether a small quantity of disorder induces critical-point shift the pinning model.
However, some issues concerning disorder relevance are still open and even not settled at the heuristic level.
This is in particular the issue of smoothing of the free energy curve.

\medskip

It has been shown in \cite{cf:GT_cmp} that for Gaussian disorder, the growth of the free energy at the vicinity of the critical point is 
at most quadratic
\begin{equation}
 \tf(h,\gb)\le \frac{1+\ga}{2 \gb^2}\, (h-h_c(\gb))^2 .
\end{equation}
When $\alpha<1/2$, in particular, this implies that the free energy curve of the disordered system is \textit{smoother} than that of 
the pure one (this is true in great generality, see \cite{cf:CdH}). For $\alpha>1/2$ instead, it is known \cite{cf:A06, cf:T08, cf:Lmart} that for small $\beta$ the quenched free energy and annealed free energy have the same critical behavior (and in fact more precise results are known \cite{cf:GT09}).
However, it is not known if the quenched and annealed critical behavior coincide in the marginal case $\alpha=1/2$.
Moreover, there is no general agreement on what the critical exponent should be for the disordered system as soon as 
$h_c(\gb)>0$. Let us mention the recent work \cite{cf:DR} where heuristics in favor of an infinitely smooth transition, of the type $\exp\big(-\frac{cst.}{\sqrt{h-h_c(\gb)}}\big)$, are exposed.

\medskip

Note that for our pinning model, critical point-shift of the free energy and smoothing of the free energy curve come together. However,
these two phenomena are not always associated. Let us mention a few variations concerning pinning.
\begin{itemize}
 \item In \cite{cf:QH}, a special case of the pinning model is studied, for which the environment is $\go$ is not IID.
 For this model, there is a smoothing of the free energy curve induced by disorder, but no critical point shift.
 \item In \cite{cf:lighttail}, the case of much lighter renewal $\K(n)\sim e^{-n^{\gamma}}$, $\gamma\in (0,1)$ is considered.
 In that case, there is always a shift of the critical point, but when $\gamma>1/2$, there is no smoothing and the 
 transition of the disordered system is of first order.
 \item In \cite{cf:GL15}, the case of the pinning of the lattice free-field is considered. This is somehow the higher-dimensional
 generalization of the model considered here. It is shown there that for $d\ge 3$, $h_c(\gb)=0$ for all $\beta$,
 but that the quenched free energy grows quadratically at criticality whereas the annealed transition is of first order.
\end{itemize}

\subsection{About the proof}

The proof of Theorem \ref{thm:main} and of the lower-bound in Theorem \ref{thm:gap} are based on the same method, 
which is an improvement of the one used in \cite{cf:GLT11}.
We can divide the process into three steps
\begin{itemize}
\item [(i)] We perform a coarse graining of the system, by dividing it in cells of large size $\ell$.
\item [(ii)] For a power $\theta<1$, we use the inequality $(\sum a_i)^{\theta} \le \sum a_i^{\theta}$ for non-negative $a_i$'s,
in order to reduce the problem to the estimate of the fractional moments of partition functions reduced to a coarse grained trajectory.
\item [(iii)] In the end, we estimate these fractional moments using a change of measure based on a tilt by a multilinear form of the $\go$.
\end{itemize}
Even though we tried to simplify the proof of some technical lemmas, the steps $\rm (i)$ and $\rm (ii)$ are essentially the same than in \cite{cf:GLT11}. 
The novelty lies in the change of measure which is used: instead of using a $q$-linear form with $q$ fixed,
we choose a $q$ which depend on $\ell$ (and we optimize the choice of $q$).

\medskip

This is a rather simple and natural idea, indeed it appears in \cite{cf:GLT11} that a larger value of $q$ gives a better result.
However 
its implementation turns out to be  tricky, as most technical estimates of \cite{cf:GLT11} blow up  much too fast when $q$ goes to infinity. For this reason, 
we have to introduce several refinements which allow us to prove better estimates.
To prove the lower-bound in Theorem \ref{thm:gap}, we have to optimize the constant in several estimates and this has the effect of introducing many small $\gep$'s 
in the computations: this makes the proof of some technical Lemma a bit more delicate. 
Thus, for pedagogical purpuse and readability, it is more suitable to prove first a non-optimal result.

\begin{theorem}
\label{thm:gap2}
Assume that there exists a constant $c_{\gp}$ such that $\lim_{n\to\infty} \gp(n) = c_\gp$.
Then there exists a constant $c_1$ and some $\gb_0>0$ such that, for all $\gb\leq \gb_0$ one has that,
 \begin{equation} \label{marjshift}
h_c(\gb) \geq e^{-c_1 \gb^{-2}}\, .
 \end{equation}
\end{theorem}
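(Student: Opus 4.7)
The claim is equivalent to showing that for $\gb$ small enough and $h\leq e^{-c_1\gb^{-2}}$ one has $\tf(\gb,h)=0$. Pick $\theta\in(0,1)$ sufficiently close to $1$. Jensen's inequality applied to the concave function $\log$ gives $\theta\,\bbE[\log Z_N^{\gb,h,\go}]=\bbE[\log(Z_N^{\gb,h,\go})^\theta]\leq \log\bbE[(Z_N^{\gb,h,\go})^\theta]$, hence
\[
\tf(\gb,h)\;\leq\;\tfrac{1}{\theta}\limsup_{N\to\infty}\tfrac{1}{N}\log\bbE\bigl[(Z_N^{\gb,h,\go})^\theta\bigr],
\]
so it suffices to bound this fractional moment uniformly in $N$. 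I would then coarse grain: fix a block length $\ell=\ell(\gb)$ (morally just below the correlation length, so of order $e^{c_1\gb^{-2}}$) and decompose $\{1,\dots,N\}$ into blocks $B_i=\{(i-1)\ell+1,\dots,i\ell\}$. For each subset $I\subseteq\{1,\dots,N/\ell\}$ containing the last index, let $Z_{N,I}^{\gb,h,\go}$ denote the contribution to $Z_N^{\gb,h,\go}$ of trajectories of $\tau$ whose set of visited blocks is exactly $I$. The sub-additivity inequality $(\sum a_j)^\theta\leq \sum a_j^\theta$ then gives
\[
\bbE\bigl[(Z_N^{\gb,h,\go})^\theta\bigr]\;\leq\;\sum_I\bbE\bigl[(Z_{N,I}^{\gb,h,\go})^\theta\bigr],
\]
reducing the task to bounding each summand and summing over $I$ with renewal estimates.

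The heart of the proof is a change of measure inside each block. For $i\in I$ I introduce a positive function $g_{B_i}(\go)$ of $(\go_n)_{n\in B_i}$ with $\bbE[g_{B_i}]=1$, set $g_I=\prod_{i\in I}g_{B_i}$, and apply Hölder's inequality:
\[
\bbE\bigl[(Z_{N,I}^{\gb,h,\go})^\theta\bigr]\;\leq\;\bbE\bigl[g_I^{-\theta/(1-\theta)}\bigr]^{1-\theta}\bigl(\bbE[g_I Z_{N,I}^{\gb,h,\go}]\bigr)^\theta.
\]
Following the prescription in Section~1.5, I take $g_{B_i}$ to be the normalised and truncated exponential of a $q$-linear statistic $T_q^{(i)}=\sum_{\{n_1<\cdots<n_q\}\subset B_i}\go_{n_1}\cdots\go_{n_q}$, with $q=q(\gb)\to\infty$ as $\gb\to 0$ (this is the key novelty with respect to \cite{cf:GLT11}, where $q$ was fixed). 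By independence across blocks, $\bbE[g_I^{-\theta/(1-\theta)}]=\prod_{i\in I}\bbE[g_{B_i}^{-\theta/(1-\theta)}]$, and the amplitude and truncation level of the tilt are to be calibrated so that each factor is bounded by a constant $C$. Expanding the second Hölder factor in the tilt amplitude shows that $\bbE[g_{B_i}Z_{B_i}^{\gb,h,\go}]$ differs from the annealed partition function by a correction governed by the $q$-fold intersection renewal $\tau^{(1)}\cap\cdots\cap\tau^{(q)}$ inside $B_i$, whose renewal mass in a block of size $\ell$ grows logarithmically in $\ell$ (in the marginal regime $\alpha=1/2$, with a prefactor involving $c_\gp$ via Doney's renewal theorem). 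A careful joint choice of $q$, of the tilt amplitude and of $\ell\sim e^{c_1\gb^{-2}}$ then yields a one-block bound
\[
\bbE\bigl[g_{B_i}Z_{B_i}^{\gb,h,\go}\bigr]\;\leq\;\gep
\]
with $\gep<1$, the assumption $h\ell\leq 1$ ensuring that the pinning reward inside a block is harmless.

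Combining the block estimates gives $\bbE[(Z_{N,I}^{\gb,h,\go})^\theta]\leq C^{|I|}\gep^{\theta|I|}\prod_j \K\bigl((i_j-i_{j-1})\ell\bigr)^\theta$, where the last product comes from the renewal jumps between consecutive visited blocks. Since $\K(n)^\theta\sim (2\pi)^{-\theta}(c_\gp)^\theta n^{-3\theta/2}$ is summable for $\theta>2/3$, summing over $I$ via the standard renewal bookkeeping organises the total into a convergent geometric series in $\gep^\theta$, yielding a uniform bound on $\sum_I \bbE[(Z_{N,I}^{\gb,h,\go})^\theta]$ and hence $\tf(\gb,h)=0$.

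The main obstacle, as the authors themselves emphasise, is the change-of-measure step when $q=q(\gb)\to\infty$. The moments $\bbE[g_{B_i}^{-\theta/(1-\theta)}]$ of an exponentiated $q$-linear form of i.i.d.\ variables naturally bring in factorially many terms in the expansion, and the Gaussian-type bounds used for fixed $q$ in \cite{cf:GLT11} blow up as $q$ grows. The fine point is to truncate the tilt, pick its amplitude as a decreasing function of $q$ and $\ell$, and still recover enough contraction in $\bbE[g_{B_i}Z_{B_i}^{\gb,h,\go}]$ to beat the Hölder cost; achieving this with the correct $\gb^{-2}$ scaling in $\log\ell$ is where all the real work sits. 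The coarse graining and the renewal-sum step then go through essentially as in \cite{cf:GLT11}.
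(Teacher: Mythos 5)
Your high-level architecture — coarse graining into blocks of size $\ell\sim e^{c_1\gb^{-2}}$, fractional moment of order $\theta\in(2/3,1)$, H\"older to push a change of measure $g_I=\prod g_{B_i}$ into each visited block, then renewal bookkeeping — is exactly the paper's, and you correctly identify the one genuinely new idea here, namely letting the chaos order $q=q(\gb)\to\infty$. But the change of measure you actually write down differs from the paper's in two respects, and both matter.

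First, your chaos statistic is \emph{unweighted}: $T_q^{(i)}=\sum_{n_1<\cdots<n_q\subset B_i}\go_{n_1}\cdots\go_{n_q}$. The paper's statistic (see \eqref{def:X}) is
\begin{equation*}
X^i(\go)=\frac{1}{\ell^{1/2}D(t)^{q/2}}\sum_{\u i\in J_{\ell,t}}\Big(\textstyle\prod_{k=1}^q u(i_k-i_{k-1})\Big)\go_{i_0}\cdots\go_{i_q},
\end{equation*}
where the weights $U(\u i)=\prod_k u(i_k-i_{k-1})$ are the renewal kernels of $\tau$, and the indices are forced to have gaps $\leq t=\lfloor\ell^{1/4}\rfloor$. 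These weights are not decoration: they are what makes the statistic a chaos-expansion approximation of the one-block partition function, so that under the tilted measure $\hat\bbP_\tau$ the shifted mean reorganizes telescopically as $\hat\bbE_\tau[X]\approx \m_\gb^{q+1}D(t)^{q/2}\ell^{-1/2}\sum_j\delta_j$, which, calibrating $\m_\gb^2D(t)\approx A$, is of order $A^{q/2}$ and thus large. With the unweighted $T_q$ normalized by its standard deviation $\sqrt{\binom{\ell}{q}}$, the tilted mean is roughly $\m_\gb^q\binom{|\tau\cap B_i|}{q}/\sqrt{\binom{\ell}{q}}\sim\gb^q/\sqrt{q!}$, which tends to zero — the contraction you need in $\bbE[g_{B_i}Z_{B_i}]$ is absent. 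The truncation of gaps to $\leq t$ is also essential to tame the second-moment computation of $X$ under $\hat\bbP_\tau$ (Lemma~\ref{lem:newvarX} and the recursive bound of Lemma~\ref{lem:ontrime}), and does not appear in your plan.

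Second, you take $g_{B_i}$ to be a ``normalised and truncated exponential'' of the chaos, and then you flag the control of $\bbE[g_{B_i}^{-\theta/(1-\theta)}]$ as $q\to\infty$ as the main open obstacle. The paper sidesteps this obstacle entirely by not using an exponential tilt: it sets $g_i(\go)=\exp\big(-M\ind_{\{X^i(\go)\geq e^{M^2}\}}\big)$, i.e.\ a two-valued penalty. Then $\bbE[g_i^{-3}]=1+(e^{3M}-1)\bbP(X^i\geq e^{M^2})\leq 1+e^{3M-2M^2}\leq 2$ by Chebyshev, uniformly in $q$ and $\ell$ — there is nothing to control as $q\to\infty$ in the H\"older cost; all the difficulty is displaced to proving that under $\hat\bbP_\tau$ one has $X^i\geq e^{M^2}$ with high probability (Lemmas~\ref{lem:newX}--\ref{lem:newvarX}), and that is where the work sits. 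Your explanation via a ``$q$-fold intersection renewal'' is also not how the one-block estimate goes: the bound comes from a first-moment lower bound and a second-moment upper bound on $X$ under $\hat\bbP_\tau$, the latter involving only pairwise renewal intersections via $D(t)=\sum_{n\leq t}u(n)^2$.

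In short: the skeleton is right, but the proposed change of measure, as stated, does not deliver the block-wise contraction and also creates a moment-control problem that the paper's step-function penalty avoids by design. Replacing $T_q$ by the renewal-weighted, gap-restricted $X^i$ of \eqref{def:X} and replacing the exponential tilt by the indicator penalty \eqref{def:g} are both necessary for the argument to close.
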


\medskip
We now outline the organisation of the rest of the paper, which is divided into two main parts.

\smallskip
 In Sections \ref{sec:coarse}-\ref{sec:chgtmeas}-\ref{sec:oneblock}, we jointly prove  
 Theorem \ref{thm:main} and Theorem~\ref{thm:gap2}.
 Matching asymptotics for $\log h_c(\gb)$ are respectively proved in  Sections \ref{sec:adapt} (lower bound) and \ref{sec:upper} (upper-bound)
 to complete the proof of Theorem \ref{thm:gap}.

\smallskip
- In Section \ref{sec:coarse}, we present the coarse graining scheme: we expose our choice of coarse graining length and, in Proposition \ref{coarsegrained},
 we explain how the proof reduces to having an estimate on the non-integer (\textit{fractional}) moments of partition functions corresponding to 
coarse-grained trajectories.

 - In Section \ref{sec:chgtmeas}, we explain how these fractional moments can be estimated by modifying the law of the environmnent in the blocks  corresponding to 
 the contact points of the coarse grained trajectories. More precisely we show how the proof of our main result reduces to estimating the
 partition function in (a fraction of) one coarse-grained block with modified environment (cf.~Lemma~\ref{lem:oneblock}).
 We also give in Section \ref{sec:descriptionX} a motivated description of the peculiar modification of the environment that we use, which is based on a 
 multilinear form of the $(\go_n)$ with positive coefficients. 
 
 \medskip

- Section \ref{sec:oneblock} is devoted to the more delicate point: the proof of  Lemma \ref{lem:oneblock}. It relies on controlling moments of the multilinear form introduced in the 
previous section. What makes this step difficult is that one has to deal with sums with a very large number of interacting term, which requires several \textit{ad-hoc} tricks to be estimated.

\smallskip

- In Section \ref{sec:upper}, we prove the optimal upper bound on $h_c(\gb)$ of Theorem \ref{thm:gap}, see Proposition \ref{prop:uppergap}. 
The technique is derived from \cite{cf:Lmart} but relies only on a simple second moment computations and does not use Martingale theory (see also \cite[Section 4.2]{cf:G}).

\smallskip

- Finally, in Section \ref{sec:adapt} we adapt the techniques developed in Sections\ref{sec:coarse}-\ref{sec:chgtmeas}-\ref{sec:oneblock}, 
and we obtain the optimal lower bound on $h_c(\gb)$ of Theorem \ref{thm:gap}, see Proposition \ref{prop:lowergap}. 
While it mostly relies on optimizing the constant in the proof, several important modifications are needed: in particular 
we must change the relation between $h$ and the coarse graining length, and prove a substantial improvement of Lemma \ref{lem:oneblock}.

\subsection{On other potential applications of the technique}

This  method combining coarse-graining, fractional moments and change of measure, which originates from \cite{cf:GLT} before being refined in \cite{cf:DGLT09,   cf:GLT10, cf:GLT11, cf:Lhier,cf:T08}, has also been fruitfully adapted for different models:
copolymers \cite{cf:Magic5, cf:BGLT}, random-walk pinning model \cite{cf:QF1, cf:BS1, cf:BS2}, directed polymers and semi-directed polymer 
\cite{cf:L, cf:Nak, cf:Z}, large deviation for random walk in a random environment \cite{cf:YZ},  self-avoiding walk in a random environment \cite{cf:LSAW}.

\smallskip

We believe that the improvement of the method presented in this paper could improve some of the known results in these various areas. 
Let us provide here an example: for the directed polymer model, the adaptation of the method of \cite{cf:GLT11} in \cite{cf:Nak} improved the lower bound 
for the difference between quenched and annealed free energy of the directed polymer  in dimension $1+2$ \cite{cf:L}, from $\exp(-c\gb^{-4})$ to
$\exp(-c_b\gb^{-b})$, $b>2$. It is very likely that the method presented in the present paper, with suitable modification, 
would allow to obtain a bound $\exp(-c\gb^{-2})$ which matches
 the upper bound.

\subsection{Notations}

For $n\in \bbN$, we let $u(n)$ denote the probability that $n$ is a renewal point (by convention we set $u(0)=1$). 
The asymptotic behavior of $u(n)$ was studied by Doney \cite[Thm. B]{cf:Doney},
who proved that for $\alpha\in (0,1)$
\begin{equation}
\label{def:u}
u(n):=\bP(n\in\tau) \stackrel{n\to\infty}{\sim} 2\alpha \sin (\pi \alpha) \gp(n)^{-1} n^{-(1-\ga)},
\end{equation}

When $\ga=\frac12$ (this is the case on which we focus) we obtain 
\begin{equation}
\label{Doney1/2}
 u(n)\stackrel{n\to\infty}{\sim} \frac{1}{\gp(n)\sqrt{n}} \, .
\end{equation}

Recall that $\tau'=\tau^{(1)}\cap \tau^{(2)}$ denotes the intersection of two independent renewals with law $\bP$.
We have 
\begin{equation}
 \bP^{\otimes 2} \big( n\in \tau^{(1)}\cap \tau^{(2)} \big)= u(n)^2
\end{equation}
and thus, from \eqref{def:u}, one deduces that \eqref{groupmh} is equivalent to the reccurence of $\tau'$.
We introduce the quantity
\begin{equation}
D(N):= \sum_{n=1}^{N} u(n)^2.
\end{equation}
Note that $D(N)$ is a non-decreasing sequence.
With some abuse of notation for $x> 0$ 
we set  
\begin{equation}\label{Dinverse}
D^{-1}(x):=\max\{ N \in \bbN \ | \ D(N)\le  x\}.
\end{equation}

When $\lim_{n\to\infty} \gp(n) = c_\gp$, or equivalently $\K(n)\stackrel{n\to\infty}{\sim} (2\pi)^{-1} c_\gp n^{-3/2} $, we have in particular 
\begin{equation}
 D(N) \stackrel{N\to\infty}{\sim} (c_{\gp})^{-2} \log N. 
\end{equation}

\section{Coarse graining and fractional moment}
\label{sec:coarse}

In this section, we explain how our estimate on the critical point shift can be deduced from estimates of 
the fractional moment of partition functions corresponding to coarse grained trajectories.

\subsection{Choice of the coarse-graining length}

We let $\ell$ denote the scale at which our coarse graining will be performed.
Let us fix
$$ A:= 64 e^4\, ,$$
(the choice is quite arbitrary), and set 
\begin{equation}
\label{def:correllength}
\ell_{\gb,A}:=\inf\{ n\in \bbN \ | \ D( \lfloor n^{1/4} \rfloor) \ge A\gb^{-2}\}.
\end{equation}
The reason for this particular choice will appear in the course of the proof.
We are interested in estimating the free-energy for
\begin{equation}\label{defh}
h=h_{\gb,A}:=1/\ell_{\gb,A}. 
\end{equation} 
More precisely our aim is to prove
\begin{proposition}\label{damainpropz}
There exists some $\gb_0>0$ 
such that for all $\gb\leq \gb_0$, 
$$\tf(\gb,h_{\gb,A})=0.$$
\end{proposition}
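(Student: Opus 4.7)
\emph{Plan.} I would establish Proposition \ref{damainpropz} by the three-step coarse-graining/fractional moment/change of measure strategy summarized in Section 1.7. Fix $\theta \in (0,1)$, write $N = m\ell$ with $\ell = \ell_{\gb, A}$ and $h = 1/\ell$, and partition $\{1, \dots, N\}$ into blocks $B_j := ((j-1)\ell, j\ell]$. For each renewal configuration associate the coarse-grained contact pattern $I(\tau) := \{j : \tau \cap B_j \ne \eset\}$, and decompose
\[
Z_N^{\gb, h, \go} \;=\; \sum_{I \ni m} Z_N^{I, \go},
\]
where $Z_N^{I, \go}$ restricts to trajectories with contact pattern exactly $I$. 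Applying the inequality $(\sum a_i)^\theta \le \sum a_i^\theta$ and averaging over $\go$ gives
\[
\bbE\bigl[(Z_N^{\gb, h, \go})^\theta\bigr] \;\le\; \sum_{I \ni m} \bbE\bigl[(Z_N^{I, \go})^\theta\bigr].
\]
If this sum stays bounded uniformly in $m$, then by Jensen's inequality $\tfrac{1}{N}\bbE[\log Z_N] \le \tfrac{1}{\theta N} \log \bbE[(Z_N)^\theta] \to 0$, which forces $\tf(\gb, h_{\gb, A}) = 0$. The task thus reduces to a uniform per-pattern estimate.

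To bound each term I would introduce a change of measure $g_I(\go) = \prod_{j \in I} f_j(\go_{B_j})$, where each $f_j \geq 0$ is local in $B_j$ and has unit $\bbP$-mean, and apply H\"older:
\[
\bbE\bigl[(Z_N^{I, \go})^\theta\bigr] \;\le\; \bigl(\bbE[g_I^{-\theta/(1-\theta)}]\bigr)^{1-\theta} \bigl(\bbE[g_I \cdot Z_N^{I, \go}]\bigr)^\theta,
\]
with both factors decoupling across blocks by independence of $\go$. The shape of $f_j$ is the key input: following \cite{cf:GLT11} I would build $f_j$ from a multilinear form $X_j$ of degree $q$ in the variables $(\go_n)_{n \in B_j}$, with coefficients proportional to $\prod_{i=1}^{q-1} u(t_{i+1} - t_i)^{1/2}$ over ordered $q$-tuples $t_1 < \dots < t_q$ confined to a sub-window of $B_j$ of width $\ell^{1/4}$, but with the crucial improvement that $q = q(\ell) \to \infty$ with $\ell$ (whereas $q$ was fixed in \cite{cf:GLT11}). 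Expanding the tilted annealed factor $\bbE[g_I Z_N^{I, \go}]$ brings out powers of $\gb^2 D(\ell^{1/4})$, which by the very definition \eqref{def:correllength} and the choice $A = 64 e^4$ collapse into a per-block geometric gain strictly smaller than $1$.

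The main obstacle is the sharpened one-block estimate (to be stated as Lemma \ref{lem:oneblock}) that this scheme requires. Because the degree $q$ of the multilinear form is no longer bounded, the second- and higher-moment computations of $X_j$ used in \cite{cf:GLT11} degrade dramatically: the $q!$ factor arising from non-diagonal pairings of $q$-tuples must now be cancelled against the $1/(\ell^{1/4})^q$ normalization of the coefficients, and this requires a careful control of the intermediate ``partial diagonal'' contributions where only a subset of the $2q$ indices coincide. Once this refined per-block bound $\rho < 1$ is in hand, summing the resulting $\rho^{\theta |I|}$ over coarse-grained patterns $I$, weighted by the inter-block propagators $\K$ governing excursions between consecutive active blocks, produces a convergent renewal-type series whose total mass is bounded uniformly in $m$, thereby closing the argument and yielding $h_c(\gb) \ge 1/\ell_{\gb, A}$; in the pure constant case $\gp \to c_\gp$ this is $\asymp \exp(-c_1 \gb^{-2})$, as required by Theorem \ref{thm:gap2}.
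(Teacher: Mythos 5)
Your plan reproduces the paper's strategy step by step: the reduction $\bbE[\log Z_N]\le \theta^{-1}\log\bbE[(Z_N)^\theta]$, the block decomposition, the inequality $(\sum a_i)^\theta\le\sum a_i^\theta$, the H\"older change of measure with a block-local functional built from a multilinear form $X_j$, and, crucially, the paper's one genuine innovation over \cite{cf:GLT11}: taking the degree $q=q(\ell)\to\infty$ with the coarse-graining scale rather than keeping $q$ fixed. You also correctly see that the sum over coarse-grained patterns, weighted by the $\K$ propagators, closes as a renewal series once a per-block contraction is available, and that the choice of $A$ in \eqref{def:correllength} is what converts $\gb^2 D(\lfloor\ell^{1/4}\rfloor)$ into that contraction.

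Two caveats. A minor one: the coefficients of $X_j$ in the paper are $U(\u{i})=\prod_k u(i_k-i_{k-1})$ with the renewal mass function itself (not $u^{1/2}$), normalized by $\ell^{1/2}(D(t))^{q/2}$; and the constraint $i_j-i_{j-1}\le t=\lfloor\ell^{1/4}\rfloor$ bounds the increments, not the window containing the whole tuple, which can sit anywhere in the block $B_j$. A more substantive one: your proposal identifies the one-block moment bounds as ``the main obstacle'' but only gestures at a fix via ``$q!$ cancellation.'' The paper does not cancel a $q!$; the actual mechanism is the first-moment convergence $W_\ell\Rightarrow|Z|/\sqrt{2\pi}$ of Lemma \ref{lem:W}, combined with the triple induction of Lemma \ref{lem:ontrime}, which bounds the cross sums $\sum U(\u{i})U(\u{k})U(\u{i}\,\u{k})$ by $(C\,D(t))^{q+q'}$ uniformly. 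Without that induction (and the tricky bookkeeping in Lemma \ref{lestermesenr} tracking the partial-pairing terms), the second-moment estimate for $X$ at unbounded degree does not close. So as written, your proposal is a faithful roadmap that matches the paper's architecture, but it defers the entire content of Sections 5--6 (i.e.\ Lemma \ref{lem:oneblock}), which is where the theorem is actually won.
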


Note that this is sufficient to prove both Theorem \ref{thm:main}.
and Theorem \ref{thm:gap2}. For Theorem \ref{thm:main}, one just needs to remark that the 
restriction to $\gb\le \gb_0$ does not matter since $\gb\mapsto h_c(\gb)$ is an increasing function 
\cite[Proposition 6.1]{cf:GLT11}.
For Theorem \ref{thm:gap2}, it follows from the definition that 
\begin{equation}\label{soixante4}
 \liminf_{\gb\to 0} \gb^2 \log(h_{\gb,A})\ge  -256\, e^4 (c_{\phi})^2.
\end{equation}

\subsection{Coarse-graining procedure}

The very first step is to transform the problem of estimating the expectation of $\log Z_N$ 
to that of estimating a non-integer moment of $Z_N$.
This is very simply achieved by using the  the concavity of $\log$. We have
\begin{equation}
 \bbE\left[  \log Z^{\gb,h,\go}_N \right]=  \frac{4}{3}\bbE\left[\log (Z^{\gb,h,\go}_N)^{3/4}\right]\le \frac{4}{3} 
 \log \bbE \left[ \left(Z^{\gb,h,\go}_N\right)^{3/4} \right].
\end{equation}
The choice of the exponent $3/4$ here is arbitrary and any value in the interval $(2/3,1)$ would do.
Hence Proposition \eqref{damainpropz} is proved if one can show that
\begin{equation}\label{klfsd}
\liminf_{N\to \infty} \bbE\left[ (Z^{\gb,h_{\gb,A},\go}_N)^{3/4} \right]\le C .
\end{equation}
for some constant $C>0$.

We consider a system whose size $N=m \ell$ is an integer multiple of $\ell$. 
We split the system into blocks of size $\ell$,
\begin{equation}
 \forall i \in \{1,\dots, m \}, \ B_i:=\{\ell (i-1)+1,\ell (i-1)+2, \dots, \ell i\}.
\end{equation}
Given $\cI = \{i_1,\ldots, i_{l}\} \subset \{1,\ldots, m\}$
we define the event 
\begin{equation}
E_{\cI}:= \Big\{ \, \left\{ i\in \{1,\dots, m\} \ | \ \tau\cap B_i\ne \emptyset \, \right\}= \mathcal I  \, \Big\}\, ,
\end{equation}
and set $Z^{\cI}$ to be the contribution to the partition function of the event $E_{\cI}$,
\begin{equation}
Z^{\cI}:= Z^{\gb,h,\go}_{N}(E_{\cI})=Z^{\gb,h,\go}_{N}\bE^{\gb,h,\go}_{N}[E_{\cI}].
\end{equation}
Note that $Z^{\cI}>0$ if and only if $m\in \cI$. When $\tau\in E_{\cI}$, the set $\cI$ is called the coarse-grained trajectory of $\tau$.
As the $E_{\cI}$ are mutually disjoint events, $Z^{\gb,h,\go}_{N}=\sum_{\cI\subset\{1,\dots, m\}}Z^{\cI}$ and thus
using the inequality $(\sum a_i)^{3/4} \leq \sum a_i^{3/4}$ for non-negative $a_i$'s, we obtain
\begin{equation}
\bbE\left[\left( Z^{\gb,h,\go}_{N}\right)^{\frac{3}{4}}\right]\le \sum_{\cI\subset\{1,\dots, m\}} \bbE\left[\left( Z^{\cI}\right)^{\frac{3}{4}}\right].
\end{equation}
We therefore reduced the proof to that of an upper bound on $\bbE\left[\left( Z^{\cI}\right)^{\frac{3}{4}}\right]$, which can be interpreted as the contribution of the coarse grained trajectory $\cI$ to the fractional moment of the partition function.

\begin{proposition}\label{coarsegrained}
Given $\gamma>0$, there exist a constant $\gb_0>0$ such that for all $\gb\le \gb_0$, there exists a constant $C_{\ell}$ which satisfies,
for all $m \ge 1$ and $\cI\subset \{1,\dots,m\}$,
\begin{equation}\label{smallpining}
  \bbE\left[\left( Z^{\cI}\right)^{\frac{3}{4}}\right]\le C_{\ell} \prod_{k=1}^{|\cI|} \frac{\gamma}{(i_{k}-i_{k-1})^{10/9}},
\end{equation}
where by convention we have set $i_0:=0$.
\end{proposition}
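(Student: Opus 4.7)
\emph{Plan.} I would combine the renewal-Markov decomposition of $Z^\cI$ with a block-wise change of measure, and then invoke the one-block estimate of Lemma~\ref{lem:oneblock} (the main technical content of Section~\ref{sec:oneblock}). On the event $E_\cI$, let $d_k := \min(\tau \cap B_{i_k})$ and $f_k := \max(\tau \cap B_{i_k})$ be the first and last contact points of $\tau$ in the $k$-th selected block, and set $f_0 := 0$. Splitting $\tau$ into its ``gap'' jumps from $f_{k-1}$ to $d_k$ and its ``block'' excursions on $[d_k, f_k]$ yields
$$
Z^\cI = \sum_{\substack{d_k, f_k \in B_{i_k} \\ d_k \le f_k}} \prod_{k=1}^{|\cI|} \K(d_k - f_{k-1}) \; \hat Z^{(k)}(d_k, f_k),
$$
where $\hat Z^{(k)}(d_k, f_k)$ is the partition function of a renewal starting at $d_k$, ending at $f_k$, confined to $B_{i_k}$, and carrying the Boltzmann weight $e^{\gb \go_n + h - \gl(\gb)}$ at each contact point. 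Crucially, $\hat Z^{(k)}$ depends only on $\go|_{B_{i_k}}$, and these disorder slices are independent across $k$.

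I would then pull the power $3/4$ inside the sum via $(\sum a_i)^{3/4} \le \sum a_i^{3/4}$ and use the cross-block independence to factorize the expectation:
$$
\bbE\bigl[(Z^\cI)^{3/4}\bigr] \le \sum_{(d_k, f_k)_k} \prod_{k=1}^{|\cI|} \K(d_k - f_{k-1})^{3/4} \; \bbE\bigl[ \hat Z^{(k)}(d_k, f_k)^{3/4}\bigr].
$$
To handle each block-wise expectation, introduce a positive tilting function $g_k = g_k(\go|_{B_{i_k}})$---the multilinear form in $\go$ of degree $q = q(\ell)$ announced in the introduction, to be built in Section~\ref{sec:chgtmeas}---and apply H\"older's inequality with conjugate exponents $4$ and $4/3$:
$$
\bbE\bigl[\hat Z^{(k)}(d_k, f_k)^{3/4}\bigr] \le \bbE[g_k^{-3}]^{1/4} \, \bbE\bigl[g_k \hat Z^{(k)}(d_k, f_k)\bigr]^{3/4}.
$$
The output of Lemma~\ref{lem:oneblock} (used with $h = h_{\gb,A} = 1/\ell_{\gb,A}$) is that, given any prescribed tolerance $\eta > 0$, a judicious choice of $g_k$ makes this right-hand side at most $\eta$ uniformly in $(d_k, f_k) \in B_{i_k}^2$, as soon as $\gb$ is sufficiently small.

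To finish, I would sum the resulting bound over $(d_k, f_k) \in B_{i_k}^2$: with $\K(n)^{3/4} \sim (2\pi)^{-3/4}\gp(n)^{3/4}\, n^{-9/8}$ and $d_k - f_{k-1}$ of order $(i_k - i_{k-1})\ell$, the double sum of $\ell^2$ terms produces a per-block contribution of order $\ell^{7/8}\gp(\ell)^{3/4} \cdot (i_k - i_{k-1})^{-9/8}$. The exponent $9/8$ is then downgraded to $10/9$ (using $9/8 > 10/9$), and the tolerance $\eta$ of Lemma~\ref{lem:oneblock} is chosen small enough to compensate the $\ell^{7/8}\gp(\ell)^{3/4}$ prefactor and yield the advertised constant $\gamma$ per block. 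The boundary terms ($k=1$ has no incoming gap, $k=|\cI|$ carries the constraint $N \in \tau$) introduce additional $\ell$-dependent factors that are absorbed into $C_\ell$.

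The main obstacle lies in Lemma~\ref{lem:oneblock}. A fixed-degree multilinear change of measure of the type used in \cite{cf:GLT11} is insufficient in the marginal regime $\ga = 1/2$: the degree $q$ of $g_k$ must grow with $\ell$ in order to exploit enough correlations of the tilted renewal, and one must estimate the high moments of the resulting multilinear form very carefully so that neither the cost $\bbE[g_k^{-3}]^{1/4}$ nor the tilt gain $\bbE[g_k \hat Z^{(k)}]^{3/4}$ blows up as $q \to \infty$. This is where the \emph{ad hoc} combinatorial arguments of Section~\ref{sec:oneblock} enter.
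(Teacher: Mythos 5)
Your overall plan — renewal decomposition of $Z^\cI$, change of measure block by block, H\"older, then sum — is the right skeleton, and the paper does indeed prove exactly this via the one-block estimate (Lemma~\ref{lem:oneblock}). But two intertwined errors in how you combine the pieces would make the computation fail.

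First, the claimed output of Lemma~\ref{lem:oneblock} is misstated. The lemma does \emph{not} give $\bbE[g_k\hat Z^{(k)}(d_k,f_k)]\le \eta$ uniformly in $(d_k,f_k)\in B_{i_k}^2$: it gives the small bound $e^{-M/2}$ only when $f_k-d_k\ge \eta\ell$, and for short excursions $f_k-d_k<\eta\ell$ the change of measure buys you nothing beyond the trivial bound of constant order (the paper keeps exactly $e^{-(M/2)\ind_{\{f_j-d_j\ge\eta\ell\}}}$ in \eqref{Z33} and uses the structure of the renewal to show short excursions contribute little, cf.\ \eqref{karkov}--\eqref{minsk2}). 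So the uniform smallness you lean on simply isn't available.

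Second, and more fatally, you apply the subadditivity $(\sum a_i)^{3/4}\le\sum a_i^{3/4}$ across the $\cO(\ell^{2})$ choices of $(d_k,f_k)$ \emph{per block}, not just across coarse-grained trajectories $\cI$. That costs a divergent power of $\ell$ per block: the $u(f_k-d_k)^{3/4}$ factor (which must be present since $\bbE[\hat Z^{(k)}(d_k,f_k)]\asymp u(f_k-d_k)$) sums to $\asymp\ell^{5/8}\gp(\ell)^{-3/4}$, and $\K(\cdot)^{3/4}$ sums to $\asymp\ell^{-1/8}\gp(\ell)^{3/4}(i_k-i_{k-1})^{-9/8}$, leaving a factor $\asymp\ell^{1/2}$ per block. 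You propose to absorb this into the ``tolerance $\eta$'' of Lemma~\ref{lem:oneblock}, but $M$ (hence $e^{-M/2}$) and $\eta$ there are fixed constants, chosen \emph{before} $\gb_0$, and cannot be taken $\ell$-dependent. Since $\ell=\ell_{\gb,A}\to\infty$ as $\gb\to 0$, the per-block constant cannot be made $\le\gamma$ uniformly in $\gb$. The paper avoids this by applying the fractional inequality only over $\cI$, then a single global H\"older with $g_\cI=\prod_{i\in\cI}g_i$ — this costs just $2^{|\cI|/4}$ from $\bbE[g_\cI^{-3}]^{1/4}$ — and then bounds $\bbE[g_\cI Z^\cI]$, inside which the sum over $(d_k,f_k)$ carries $u(f_k-d_k)$ to the power $1$, not $3/4$. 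This is what makes the per-block sums converge with an $\ell$-independent constant.

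\end{document}
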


\begin{proof}[Proof of Proposition \ref{damainpropz} from Proposition \ref{coarsegrained}]
Now we can notice that the r.h.s of \eqref{smallpining} corresponds (appart from the constant $C_{\ell}$)
to the probability of the a renewal trajectory whose inter-arrival probability is given by
\begin{equation}\label{hatk}
\hat \K(n)={\gamma}{n^{-10/9}},
\end{equation}
provided that the sum is smaller than $1$.
Hence we simply apply the result with
\begin{equation}
 \gamma=\left(\sum_{n=1}^\infty n^{-10/9}\right)^{-1}\, ,
\end{equation}
and we let $\hat\tau$ be the renewal associated to \eqref{hatk}.
With this setup,
\begin{equation}
\sumtwo{\cI\subset \{1,\dots, m \}}{m\in \cI}\bbE\left[\left( Z^{\cI}\right)^{\frac{3}{4}}\right]\le C_{\ell} \bP[m \in \hat \tau]\le C_{\ell}\, ,
 \end{equation}
and thus \eqref{klfsd} is proved.
 \end{proof}

\section{Change of measure}
\label{sec:chgtmeas}

\subsection{Using H\"older's inequality to penalize favorable environments}

The starting idea to prove Proposition \eqref{coarsegrained} is to introduce a change in the law of $\go$ in the cells 
$(B_i)_{i\in \cI}$, which will have the effect of lowering the expectation of $Z^{\cI}$.
Let $g_{\cI}(\go)$ be a positive function of $(\go_n)_{n\in \union_{i\in \cI} B_i}$ 
(that can be interpreted as a probability density if renormalized to have expectation $1$).
Using H\"older's inequality, we have 
\begin{equation}
\label{eq:holder}
\bbE\left[ \left(Z^{\cI}\right)^{3/4}\right] \leq \left(\bbE\left[g_{\cI}(\go)^{-3} \right]\right)^{1/4} \ \left(\bbE\left[g_{\cI}(\go) Z^{\cI} \right]\right)^{3/4}.
\end{equation}
The underlying idea is that most 
of the expectation of $Z^{\cI}$ is carried by atypical environment for which  $Z^{\cI}$ is unusually large.
One should think of applying this inequality to a function $g_{\cI}(\go)$ which is typically equal to one, but which 
takes a small value for the atypical environments which are too favorable.

\medskip

We also want the first term $\bbE[g_{\cI}(\go)^{-3} ]^{1/4}$ to be small, or more precisely, not much larger than one.
Due to our coarse graining procedure, it is natural to choose $g_{\cI}(\go)$ as a product of functions of 
$(\go_n)_{n\in B_i}$, for $i\in \cI$.

\medskip

Our idea, which follows the one introduced in \cite{cf:GLT11}, is to give a penalty when the environment in a block 
has too much ``positive correlation''. The ``amount of correlation'' in a block is expressed using a multilinear form of $(\go_n)_{n\in B_i}$ 
with positive coefficient, 
let us call it $X^i(\go)$ (see \eqref{def:X}). 
We normalize it so that 
\begin{equation}\label{assumpX}
\bbE[(X^i(\go))^2]= 0 \quad \text{and} \quad \bbE[(X^i(\go))^2]\le 1.
\end{equation}
Then we set 
\begin{equation}
\label{def:g}
\begin{split}
 g_i(\go)&:=\exp\left(-M\ind_{\{X^i(\go)\ge e^{M^2}\}} \right),\\
 g_{\cI}(\go)&:=\prod_{i\in B_i}  g_i(\go).
\end{split}\end{equation}
for an adequate value of $M$.

\medskip

With this choice, and for $M\ge 10$, we have for all $i\geq 1$
\begin{equation}
 \bbE\left[g_{i}(\go)^{-3}\right]= 1+(e^{3M}-1)\bbP\left[X^{i}(\go)\ge  e^{M^2}\right]\le 1+(e^{3M}-1)e^{-2M^2}\le 2,
 \end{equation}
and hence
\begin{equation}
\label{costchange}
\bbE[g_{\cI}(\go)^{-3}]= \bbE[g_{1}(\go)^{-3}]^{|\cI|}\le 2^{|\cI|}.
\end{equation}

\subsection{Description of $X_\ell$}
\label{sec:descriptionX}

In order to have a clear idea of the effect of the multiplication by $g_{\cI}$ on the expectation of $Z^\cI$, let us first expand 
$Z^{\cI}$ in order to isolate the contribution of each block. Assume that $\cI:=\{i_1,i_2, \dots, i_l\},$
and let $d_j$ and $f_j$ denote the first and last contact points  in $\tau\cap B_{i_j}$. We have
\begin{equation}
\label{def:hatZ}
Z^{\cI}:= \sumtwo{d_1,f_1 \in B_{i_1}}{d_1\leq f_1}  \cdots \sumtwo{d_l \in B_{i_l}}{f_l=mN}
\K(d_1) u(f_1-d_1) Z^h_{d_1,f_1} \K(d_2-f_1)  \ \cdots\  \K(d_l-f_{l-1}) u(N-d_l ) Z^h_{d_l,N} \, ,
\end{equation}
where we set
\begin{equation}
Z^h_{a,b}:= \bE\left[ \exp\left( \sum_{n=a}^b (\gb \go_n+ h- \gl(\gb)) \, \ind_{\{n\in \tau\}} \right) \ \middle| \ a,b\in\tau \right].
\end{equation}
At the cost of loosing a constant factor per coarse-grained contact point, we can get rid of the influence of $h$:
with our definition of $h=1/\ell$ we have, for all choices of $d_j\le f_j$ in $B_{i_j}$
\begin{equation}
\label{getridofh}
 Z^h_{d_i,f_i}\le e^{h\ell}Z^0_{d_i,f_i}=e Z^0_{d_i,f_i}.
 \end{equation}
Writing $Z_{a,b}$ for $Z^0_{a,b}$ (note that $\bbE[Z_{a,b}]=1$), we have
\begin{equation}
\label{def:hatZ2}
Z^{\cI}\le e^{|\cI|}\sumtwo{d_1,f_1 \in B_{i_1}}{d_1\leq f_1}  \cdots \sum_{d_l \in B_{i_l}}  
\K(d_1) u(f_1-d_1) Z_{d_1,f_1} \K(d_2-f_1)  \ \cdots\  \K(d_l-f_{l-1}) u(N-d_l ) Z_{d_l,N}\, .
\end{equation}
Thus, we want to choose $X^i$ such that for most choices of $d_i,f_i$ one has, for some small $\delta$
\begin{equation}
 \bbE[g_i(\go)Z_{d_i,f_i}]\le \delta  \bbE[Z_{d_i,f_i}]=\delta.
\end{equation}
A natural choice for $X^i$ would be to choose something like
\begin{equation}\label{expression}
 \sum_{d_i,f_i\in B_i} u(f_i-d_i) Z_{d_i,f_i} \, ,
\end{equation}
shifted and scaled to satisfies the assumption on the variance and expectation. Indeed with this choice, $g(\go)$ would be small
whenever the $(Z_{d_i,f_i})$'s are too big.
However, in order to be able to perform the computations, 
it turns out to be better to choose $X$ as a linear combination of products of the $(\go_n)_{n\ge 0}$: hence instead of \eqref{expression}, 
we will choose $X^i$ to be something similar to a high order term in the Taylor expansion in $\go$ of \eqref{expression} (and which is close in spirit to the chaos expansion presented in \cite{cf:CSZ1}).

\medskip

This is the idea behind our choice of $X$:
we perform a formal expansion of
$$u(b-a)\bE \left[\exp\left( \sum_{n=a}^b \left( \gb \go_n- \gl(\gb)\right)\ind_{\{n\in \tau\}} \right) \ \Big|  \ a, b \in \tau \right],$$ 
which is analogous to the Wick expansion of the exponential of a stochastic integral: we drop the normalizing term $\gl(\gb)$ but also all the diagonal terms 
in the expansion.
The term of order $q+1$ is given by 
\begin{equation}
\gb^{q+1} \sum_{a=i_0\le i_1 < i_2 < \dots < i_q=b}  \go_{i_0}u(i_1-i_0)\go_{i_1}\dots u(i_q-i_{q-1})\go_{i_q}.
\end{equation}
Hence after summing along all possibilities for $a$ and $b$ in the block (say $B_1$), and 
renormalizing, we obtain something proportional to 
\begin{equation}\label{grotix}
\sum_{0\le i_0< i_1 < i_2 < \dots < i_q\le \ell}  \go_{i_0}u(i_1-i_0)\go_{i_1}\dots u(i_q-i_{q-1})\go_{i_q}.
\end{equation}
This is the choice of $X$ which was made in  \cite{cf:GLT11}. It was then remarked that when $q$ got larger, it allowed to obtain sharper bounds on $h_c(\gb)$. 
A philosophical reason for this is that by increasing $q$ on gets closer, in some sense, to the expression \eqref{expression}.

\medskip

The novelty of our approach compared to that of \cite{cf:GLT11} is to take $q$ going to infinity with the correlation length $\ell$:
we choose
\begin{equation}
\label{def:q}
q_{\ell}=  \max\Big\{   \log \Big(\sup_{x\leq \ell} \gp(x)\Big) \, ;\, \log D(\ell)  \Big\}
\end{equation} 
Note that for most cases, and in particular when $\gp$ is asymptotically equivalent to a constant, one can take $q=\logg \ell$.
With this choice of growing $q$ the computations become trickier, and to keep them tractable we also choose to reduce the interaction range: we restrict the sum \eqref{grotix} to indices which satisfies $i_j-i_{j-1}\le t_{\ell}$ for some $t_{\ell}\ll \ell$.
We actually take $t=\lfloor \ell^{1/4} \rfloor$, in reference to the definition of the correlation length \eqref{def:correllength}, so that $D(t)\geq A \gb^{-2}$, and $D(t)\leq A \gb^{-2}+1$.
We frequently omit the dependence in $\ell$ in the notation for the sake of readability.

We consider the set of increasing sequences of indices whose increments are not larger than $t$
\begin{equation}
J_{\ell,t}:= \left\{ \u{i}=(i_0,\dots, i_q)\in \bbN^{q+1} \ | \ 1\le i_0<i_1<\cdots < i_q \le \ell \ ; \  
\forall j\in \{1,\dots q\}, i_j-i_{j-1}\le t \right\}. 
\end{equation}
We set
\begin{equation}
\label{def:X}
X^1(\go) := \frac{1}{\ell^{1/2}\, (D(t))^{q/2}}\sum_{\u{i}\in J_{\ell,t}} U(\u{i})\go_{\u{i}}
\end{equation}
where $\go_{\u{i}} = \prod_{k=0}^q \go_{i_k}$, and
\begin{equation}
U(\u{i}) = \prod_{k=1}^q u(i_k-i_{k-1}).
\end{equation}
For $i\ge 1$ we define
\begin{equation}
X^i(\go):=X^1(\theta^{(i-1)\ell}\go))\, ,
\end{equation}
where $\theta$ is the shift operator: $(\theta \go)_n = = \go_{n+1}$.
It is a simple computation to check \eqref{assumpX}, in particular
\begin{equation}\label{eq:varX}
\bbE[(X^i)^2] 
= \frac{1}{\ell\,  (D( t) )^{q}}  \sum_{\u{i} \in J_{\ell,t}} U(\u{i})^2 \leq 1.
\end{equation}

\subsection{Proof of Proposition \ref{coarsegrained}}
\label{sec:endcoarsegraining}

The main step in the proof is to show that this choice of change of measure indeed penalizes the partition function $Z_{d,f}$ for most choices of  $d,f$ inside the cell $B_i$.

\begin{lemma}
\label{lem:oneblock}
For any $M\ge 10$ there exist a choice for the constants $\eta$ and $\gb_0$, such that 
for any $\gb\leq \gb_0$, for all $d,f \in B_i$ with $f-d\ge \eta \ell$, one has
\begin{equation}
\bbE[g_{i}(\go) Z_{d,f}] \leq e^{-M/2} \, .
\end{equation}
\end{lemma}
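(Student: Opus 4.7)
Since $g_i$ takes only the two values $1$ and $e^{-M}$, and since $\bbE[Z_{d,f}]=1$, we have the exact identity
\begin{equation*}
\bbE\bigl[g_i(\go)\,Z_{d,f}\bigr]\;=\;e^{-M}\;+\;(1-e^{-M})\,\tilde\bbP\bigl(X^i<e^{M^2}\bigr),
\end{equation*}
where $\tilde\bbP$ denotes the probability with density $Z_{d,f}$ with respect to $\bbP$. For $M\ge 10$ the constant $e^{-M}$ is already well below $\tfrac12 e^{-M/2}$, so the goal reduces to showing $\tilde\bbP(X^i<e^{M^2})\le \tfrac12 e^{-M/2}$. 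In other words, once we tilt the law of $\go$ by the partition function $Z_{d,f}$, the multilinear statistic $X^i$ must be larger than $e^{M^2}$ with overwhelming probability. I will establish this by a Chebyshev bound
\begin{equation*}
\tilde\bbP\bigl(X^i<e^{M^2}\bigr)\;\le\;\frac{\tilde\bbE[(X^i)^2]}{\bigl(\tilde\bbE[X^i]-e^{M^2}\bigr)^2},
\end{equation*}
valid as soon as $\tilde\bbE[X^i]>e^{M^2}$. It therefore suffices to prove (a) $\tilde\bbE[X^i]\gg e^{M^2}$ and (b) $\tilde\bbE[(X^i)^2]\le \gep_\gb\,\tilde\bbE[X^i]^2$ with $\gep_\gb\to 0$ as $\gb\to 0$, uniformly in $d,f\in B_i$ with $f-d\ge\eta\ell$.

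\textbf{First moment.} For distinct $\u{i}=(i_0,\dots,i_q)\subset[d,f]$, integrating each $\go_{i_k}$ separately gives $\bbE[\go_n e^{\gb\go_n-\gl(\gb)\mathbf{1}_{n\in\tau}}]=\gl'(\gb)\mathbf{1}_{n\in\tau}$, hence $\bbE[\go_{\u i}Z_{d,f}]=(\gl'(\gb))^{q+1}\bP[\u i\subset\tau\mid d,f\in\tau]$, and by the Markov property of $\tau$ the conditional probability factorises as $u(i_0-d)U(\u i)u(f-i_q)/u(f-d)$. Substituting into the definition of $X^i$,
\begin{equation*}
\tilde\bbE[X^i]\;=\;\frac{(\gl'(\gb))^{q+1}}{\ell^{1/2}\,D(t)^{q/2}\,u(f-d)}\sum_{\substack{\u i\in J_{\ell,t}\\ \u i\subset[d,f]}} u(i_0-d)\,U(\u i)^2\,u(f-i_q).
\end{equation*}
Writing the sum in terms of gaps $j_0=i_0-d$, $j_k=i_k-i_{k-1}$, $j_{q+1}=f-i_q$, the internal gaps $j_1,\dots,j_q$ (each at most $t$) have typical combined size $\ll\ell$ because $q=q_\ell$ is only logarithmic and $t=\lfloor\ell^{1/4}\rfloor$; the boundary piece $\sum_{j_0,j_{q+1}} u(j_0)u(j_{q+1})$ is thus close to $(u*u)(f-d)\sim\pi c_u^2/\phi(\ell)^2$ (using $u(n)\sim(\phi(n)\sqrt n)^{-1}$ for $\alpha=1/2$), while the internal piece saturates to $D(t)^q$. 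Combining with $u(f-d)^{-1}\sim\phi(\ell)\sqrt\ell/c_u$ (valid since $f-d\ge\eta\ell$), $\gl'(\gb)\sim\gb$, and the defining inequality $\gb^2D(t)\ge A$ from \eqref{def:correllength}, one obtains
\begin{equation*}
\tilde\bbE[X^i]\;\gtrsim\;\gb\,\frac{A^{q/2}}{\phi(\ell)},
\end{equation*}
which diverges as $\gb\to 0$ (since $A>1$ and $q=q_\ell\to\infty$), and in particular dominates $e^{M^2}$ once $\gb\le\gb_0(M)$.

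\textbf{Second moment and main obstacle.} Expanding $(X^i)^2$,
\begin{equation*}
\tilde\bbE[(X^i)^2]\;=\;\frac{1}{\ell\,D(t)^q}\sum_{\u i,\u j\in J_{\ell,t}}U(\u i)\,U(\u j)\,\bbE[\go_{\u i}\go_{\u j}\,Z_{d,f}],
\end{equation*}
integrating every coordinate of $\go$ separately gives, for each index $n\in\u i\cup\u j$, a factor $\gl'(\gb)\sim\gb$ when its multiplicity in $\u i\cup\u j$ equals $1$ and $\gl''(\gb)+\gl'(\gb)^2\sim 1$ when it equals $2$, times the Markov probability that all of these points belong to $\tau$. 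Classifying pairs $(\u i,\u j)$ by their intersection $C=\u i\cap\u j$, each pair carries $2(q+1-|C|)$ unshared (multiplicity-$1$) indices and $|C|$ shared ones; the plan is to sum carefully over configurations so that, for each fixed $|C|$, the unshared indices can be absorbed against the $u^2$-weights, producing a geometric small factor from $\gb^2 D(t)^{-1}\le A^{-1}$ that overwhelms the combinatorial proliferation in $|C|$. The ultimate goal is a bound $\tilde\bbE[(X^i)^2]\le \gep_\gb\,\tilde\bbE[X^i]^2$ with $\gep_\gb\to 0$, which combined with the first-moment estimate closes Chebyshev.

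\textbf{This second-moment bound is the main obstacle, and the main technical novelty of the paper over \cite{cf:GLT11}:} because $q=q_\ell\to\infty$ (whereas in \cite{cf:GLT11} the analogous parameter was fixed), the expansion involves unboundedly many overlap configurations, and the quantitative estimates there deteriorate badly as $q$ grows. Taming this sum uniformly in $q$ is precisely why the truncation $j_k\le t=\lfloor\ell^{1/4}\rfloor$ is imposed and why the scale $\ell$ is calibrated so that $\gb^2D(t)\asymp A$: each unshared index must contribute an independent geometric factor of $A^{-1}$. The ad hoc refinements needed to carry out this combinatorial control are where the rest of Section~\ref{sec:oneblock} is expected to be spent.
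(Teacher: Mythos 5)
Your starting identity and choice of measure are both correct: the size-biased $\tilde\bbP$ (density $Z_{d,f}$ w.r.t.\ $\bbP$) is exactly the composite measure $\bE_{d,f}\hat\bbP_\tau$ that the paper introduces, and the first- and second-moment expansions you begin are the right objects. However, the plan to close the argument by a single Chebyshev inequality under $\tilde\bbP$ cannot succeed. The stated target $(b)$, namely $\tilde\bbE[(X^i)^2]\le\gep_\gb\,\tilde\bbE[X^i]^2$ with $\gep_\gb\to 0$, contradicts Jensen's inequality $\tilde\bbE[(X^i)^2]\ge\tilde\bbE[X^i]^2$; presumably you meant $\Var_{\tilde\bbP}(X^i)\le\gep_\gb\,\tilde\bbE[X^i]^2$. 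But the repaired target also fails. Disintegrating $\tilde\bbP$ into $\bP_{d,f}$ for $\tau$ and $\hat\bbP_\tau$ for $\go$ given $\tau$, the law of total variance gives
\begin{equation*}
\Var_{\tilde\bbP}(X^i) \;=\; \bE_{d,f}\bigl[\Var_{\hat\bbP_\tau}(X^i)\bigr] \;+\; \Var_{\bP_{d,f}}\bigl(\hat\bbE_\tau[X^i]\bigr).
\end{equation*}
The first term is of order $3^q$ (Lemma~\ref{lem:newvarX}), negligible compared to $\tilde\bbE[X^i]^2$. The second term, however, is comparable to $\tilde\bbE[X^i]^2$: up to a deterministic factor of order $4^q$, the random variable $\hat\bbE_\tau[X^i]$ is the linear statistic $W_\ell$ of Lemma~\ref{lem:W}, which converges in distribution to $(2\pi)^{-1/2}|Z|$ with $Z$ standard Gaussian, a law with nondegenerate variance. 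So $\Var_{\tilde\bbP}(X^i)/\tilde\bbE[X^i]^2$ converges to $\Var(|Z|)/(\bE|Z|)^2 = \pi/2-1>0$, not to zero, and a second-moment bound under $\tilde\bbP$ can only yield $\tilde\bbP(X^i\ge e^{M^2})$ bounded away from zero, never close to $1$ as the lemma requires.

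The missing idea is the conditional decomposition. The paper never averages away the $\tau$-dependence prematurely: Lemma~\ref{lem:newX} shows $\bP_{d,f}(\hat\bbE_\tau[X]\ge 2^q)\ge 1-e^{-M}$ directly, by exploiting the weak convergence of $W_\ell$ to a law with no atom at $0$ together with the fact that the effective threshold $\eta^{-1/2}2^{-q}$ tends to zero with $\gb$. This is a qualitatively stronger input than a variance bound on $W_\ell$ (Paley--Zygmund on $W_\ell$ would again only give a constant). On the resulting high-probability event, $e^{M^2}\le 2^{q-1}$ sits a distance $\ge 2^{q-1}$ below the conditional mean, and Chebyshev is then applied under $\hat\bbP_\tau$ with $\tau$ \emph{fixed}, where the conditional variance from Lemma~\ref{lem:newvarX} really is small relative to $4^{q-1}$; this produces the residual error $4(3/4)^q$. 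The bound $e^{-M}+4(3/4)^q+e^{-M}\le e^{-M/2}$ then follows. In short: the Chebyshev step must be performed conditionally on $\tau$, and the dominant source of fluctuation in your unconditional variance, the $\tau$-randomness of $\hat\bbE_\tau[X]$, must be handled by a distributional (not second-moment) argument.
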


We prove this result in the next section and explain now how we deduce Proposition \ref{coarsegrained} from it.
Note that, as $g_i(\go)\le 1$, we always have
\begin{equation}
 \bbE[g_{i}(\go) Z_{d_i,f_i}] \le  \bbE[ Z_{d_i,f_i}]= 1.
\end{equation}
Hence, from \eqref{def:hatZ2}

\begin{multline}\label{Z33}
\bbE[g_{\cI}(\go)Z^{\cI}] \\ \le e^{|\cI|}\sumtwo{d_1,f_1 \in B_{i_1}}{d_1\leq f_1} \sumtwo{d_2,f_2 \in B_{i_2}}{d_2\leq f_2} \cdots \sum_{d_l \in B_{i_l}} 
\K(d_1) u(f_1-d_1) e^{-(M/2)\ind_{\{f_1-d_1\ge \eta \ell\}}} \K(d_2-f_1)  \\ \cdots\  \K(d_l-f_{l-1}) u(N-d_l ) e^{-(M/2)\ind_{\{N-d_l\ge \eta \ell\}}}. 
\end{multline}
Using the expression above we are going to show that given $\gamma>0$ and $\gd>0$, if $\ell$ is large enough we have (recall $i_0=0$), for some constant $C_\ell$
\begin{equation}\label{lepremier}
 \bbE[g_{\cI}(\go)Z^{\cI}]\le C_{\ell}(e\gamma^{2})^{|\cI|}\prod_{i=1}^l \frac{1}{(i_j-i_{j-1})^{3/2-2\delta}}.
\end{equation}
Thus, provided that $\delta\le 1/1000 $ and that $\gamma$ is sufficiently small, and together with \eqref{eq:holder} and \eqref{costchange}, it implies that (changing the value of $C_{\ell}$ if necessary)
\begin{equation}
\bbE\left[(Z^{\cI})^{3/4}\right]\le  C_{\ell}\,  \prod_{i=1}^l \frac{\gamma}{(i_j-i_{j-1})^{\frac{10}{9}}}\, .
\end{equation}
We split the proof of \eqref{lepremier} in three parts.
 The main part is to show that for $M$ sufficiently large and $\eta$ sufficiently small,
 for all $j\in \{1,\dots,l-1\}$, for all choices of $f_{j-1}$ ($f_0=0$) and $d_{j+1}$ one has 
 
\begin{multline}\label{gromov}
\sumtwo{d_j,f_j \in B_{i_j}}{d_j\leq f_j}  \sqrt{\K(d_j-f_{j-1})}\, u(f_j-d_j)\, e^{-(M/2)\ind_{\{f_j-d_j\ge \eta \ell\}}}
\sqrt{\K(d_{i+1}-f_{j})}  \\ 
\le \gamma  \left[ (i_j-i_{j-1})(i_{j+1}-i_{j})\right]^{\delta-3/4}
\end{multline}
where $\gamma$ can be made arbitrarily small by choosing $\eta$ small and $M$ large.
We also prove that for all $f_{l-1}$ we have
\begin{equation}\label{crac}
 \sum_{d_l=(i_j-1)\ell+1}^{N}  \sqrt{\K(d_l-f_{l-1})} \, u(N-d_l)\,  e^{-(M/2)\ind_{\{N-d_l\ge \eta\ell\}}} 
    \le \frac{C_{\ell}\gamma}{(m-i_{l-1})^{3/4-\delta}}.
\end{equation}
Finally,  for all $d_1\in B_{i_1}$, we have that
\begin{equation}\label{cric}
 \sqrt{\K(d_1)}\le  \frac{C_\ell}{i_1^{3/4-\delta}}.
\end{equation}
The result \eqref{lepremier} follows by using \eqref{Z33} and multiplying the inequality \eqref{gromov}, \eqref{crac} and \eqref{cric}.
Note that \eqref{cric} is obvious from the property of slowly varying functions.
We focus on the proof of \eqref{gromov} and then show how to modify it to get~\eqref{crac}.

\medskip

\textit{Proof of \eqref{gromov}.}
Let us first consider only the terms of the sum with $f_j$ in the second half of the block: $f_j> (i_j-1/2)\ell$. 
We prove that if $M$ is chosen sufficiently large and $\eta$ sufficiently small, we have for every (fixed) $f_j> (i_j-1/2)\ell$
\begin{equation}\label{karkov}
  \sum_{d_j=(i_j-1)\ell+1}^{f_j}  \sqrt{\K(d_j-f_{j-1})}\,  u(f_j-d_j) \, e^{-(M/2)\ind_{\{f_j-d_j\ge \eta\ell\}}} \\
  \le \frac18\, \gamma\, \frac{\sqrt{\phi((i_j-i_{j-1})\ell)}}{ \ell^{1/4}\, (i_j-i_{j-1})^{3/4} \, \phi(\ell)}. 
\end{equation}
To see this we have to split the sum into two contributions: $d_j\le(i_j-3/4)\ell$ and $d_j\ge(i_j-3/4)\ell+1$.

We observe first that, uniformly on 
$d_j\ge(i_j-3/4)\ell+1$
(so that $d_j- f_{j-1}\geq \frac14 (i_j-i_{j-1})\ell$), we have (provided that $\ell$ is large enough)
\begin{equation}
  \sqrt{\K(d_j-f_{j-1})} \le 4 \frac{\sqrt{\phi((i_j-i_{j-1})\ell)} }{\big( \ell\, (i_j-i_{j-1}) \big)^{3/4}}\, . 
\end{equation}
Hence by summing over $d_j\geq (i_j-3/4)\ell+1$, and using that $u(n) \sim \frac{1}{\sqrt{n} \gp(n) }$ see \eqref{Doney1/2}, we obtain that 
\begin{multline}
\label{minsk0}
 \sum_{d_j=(i_j-3/4)\ell+1}^{f_j}  \sqrt{\K(d_j-f_{j-1})}\,  u(f_j-d_j)\,  e^{-(M/2)\ind_{\{f_j-d_j\ge \eta\ell\}}}\\
 \le  10 \left(\sqrt{\eta}+e^{-M/2}\right)\frac{\sqrt{\ell}}{\phi(\ell)} \, \frac{\sqrt{\phi((i_j-i_{j-1})\ell)} }{\big( \ell\, (i_j-i_{j-1}) \big)^{3/4}}\, . 
\end{multline}

Let us now treat the case of
$d_j\le (i_j-3/4)\ell$, 
in which range $f_j-d_j\ge \ell/4$: one hence has that
 $$ u(f_j-d_j)\ge  3 \, \ell^{-1/2}\phi(\ell)^{-1}.$$ 
Furthermore (one checks separately the cases $i_j-i_{j-1}=1$ and $i_j-i_{j-1}\ge 2$) we have, uniformly in $f_{j-1}$,
\begin{equation}\label{minsk}
\sum_{d_j=(i_j-1)\ell+1}^{(i_j-3/4)\ell}  \sqrt{\K(d_j-f_{j-1})}
  \le   \frac{4 \, \ell^{1/4}\sqrt{\phi((i_j-i_{j-1})\ell)}}{(i_j-i_{j-1})^{3/4}}\, .
\end{equation}
Hence, provided $\eta\le 1/4$ (so that, for $f_j \geq (i_j-1/2)\ell$ and $d_j\leq (i_j-3/4)\ell$ considered, the exponential term is always $e^{-M/2}$), summing over $d_j$ we have
\begin{multline}
\label{minsk2}
\sum_{d_j=(i_j-1)\ell+1}^{(i_j-3/4)\ell}  \sqrt{\K(d_j-f_{j-1})}\, u(f_j-d_j)\, e^{-(M/2)\ind_{\{f_j-d_j\ge \eta\ell\}}}
\\ \le 
 12 \, e^{-M/2}\,  \frac{\sqrt{\phi((i_j-i_{j-1})\ell)}}{\ell^{1/4} \phi(\ell)\, (i_j-i_{j-1})^{3/4}}\, .
\end{multline}
Combining \eqref{minsk0} and \eqref{minsk2} concludes the proof of \eqref{karkov}.
Then, we are ready to sum over $f_j>(i_j-1/2) \ell$: similarly to \eqref{minsk}, we have 
\begin{equation}
 \sum_{f_j = (i_j-1/2)\ell+1}^{i_j\ell} 
 \sqrt{\K(d_{j+1}-f_{j})}\le \frac{4 \ell^{1/4}\,\sqrt{\phi((i_{j+1}-i_{j})\ell)}}{(i_{j+1}-i_{j})^{3/4}}\, .
\end{equation}
Hence, combining this with \eqref{karkov}, we get that
\begin{multline}
  \sumtwo{d_j,f_j \in B_{i_j}}{d_j\le f_j \, f_j >(i_j-1/2)\ell}  
  \sqrt{\K(d_j-f_{j-1})}\, u(f_j-d_j)\, e^{-(M/2)\ind_{\{f_j-d_j\ge \eta\ell\}}} \sqrt{\K(d_{i+1}-f_{j})} \\
  \le \frac12\, \gamma \, \frac{\sqrt{\phi((i_j-i_{j-1})\ell)} }{(i_j-i_{j-1})^{3/4}} \, \frac{1}{\gp(\ell)} \,  \frac{\sqrt{\phi((i_{j+1}-i_{j})\ell)}}{(i_{j+1}-i_{j})^{3/4}}\, . 
\end{multline}

Then we notice that, by symmetry, we can obtain the same bound for the sum over $d_j,f_j \in B_{i_j}$, $d_j\le f_j , d_j \le(i_j-1/2)\ell$,
and thus obtain 
\begin{multline}
  \sumtwo{d_j,f_j \in B_{i_j}}{d_j\le f_j }  
  \sqrt{\K(d_j-f_{j-1})}\, u(f_j-d_j)\, e^{-(M/2)\ind_{\{f_j-d_j\ge \eta\ell\}}} \sqrt{\K(d_{i+1}-f_{j})} \\
  \le \gamma \, \frac{\sqrt{\phi((i_j-i_{j-1})\ell)}}{(i_j-i_{j-1})^{3/4} \sqrt{\gp(\ell)} } \, \frac{\sqrt{\phi((i_{j+1}-i_{j})\ell)}}{(i_{j+1}-i_{j})^{3/4} \sqrt{\gep(\ell)}}\, .
\end{multline}
We deduce \eqref{gromov}  by remarking that, for any $\delta>0$, and if $\ell$ is large enough, one has for all $a\in \bbN$
\begin{equation}
 \frac{\phi(a\ell)}{\phi(\ell)}\le a^{\delta}.
\end{equation}
To prove \eqref{crac}, we notice that the proof of \eqref{karkov} implies that 
\begin{multline}
 \sum_{d_l=(i_j-1)\ell+1}^{N}  \sqrt{\K(d_l-f_{l-1})} \, u(N-d_l)\, e^{-(M/2)\ind_{\{N-d_l\ge \eta\ell\}}} 
  \\ \le \gamma \frac{ \sqrt{\phi((m-i_{l-1})\ell)} }{\ell^{1/4}\phi(\ell)\, (m-i_{l-1})^{3/4}}
  \le \frac{C_{\ell}\, \gamma}{(m-i_{l-1})^{3/4-\delta}}.
\end{multline}
\qed

\section{One block estimate: proof of Lemma \ref{lem:oneblock}}
\label{sec:oneblock}
Due to translation invariance, we may focus on the first block, and for simplicity we write $X(\go)$ (resp. $g(\go)$) instead of $X^1(\go)$ (resp. $g_1(\go)$).
We fix $d,f\in B_1$, with $f-d \geq \eta \ell$, and we set
$$\bP_{d,f}(\cdot) = \bP(\ \cdot  \ | \ d,f\in\tau, \  \tau \cap (B_1\setminus [d,f])=\emptyset ).$$ 
With this notation, one has
\begin{equation}
\bbE[g(\go) Z_{d,f}] = \bE_{d,f}\left[ \bbE\bigg[ g(\go) \, \exp \bigg( \sum_{n=d}^f (\gb\go_n - \gl(\gb) \gd_n)\bigg) \bigg]  \right].
\end{equation}
Note that, given a fixed realization of $\tau$, the exponential in the above quantity averages to one under $\bbP$ and can thus be considered
as a probability density. One introduces the probability measure $\hat{\bbP}_\tau$ 
whose density with respect to $\bbP$ is given by
\begin{equation}
\label{def:Phat}
\hat{\bbP}_\tau (\dd \go) := \exp \bigg( \sum_{n=d}^f (\gb\go_n -\gl(\gb) )\gd_n\bigg)\, \bbP(\dd \go),
\end{equation}
so that
\begin{equation}
\label{eq:newZ}
\bbE[g(\go) Z_{d,f}]  = \bE_{d,f}\left[  \hat{\bbE}_\tau[ g(\go)] \right] .
\end{equation}
Note that, under $\hat\bbP_\tau$, 
$\go$ is still a sequence of independent random variables, but they are no longer identically distributed as the law of $(\go_n)_{n\in \tau\cap \{d,\dots,f\}}$ 
has been exponentially tilted.
This implies in particular a change of mean and variance:
for $d\leq n\leq f$, we have
\begin{equation}
\label{eq:newexpnewvar}
\hat\bbE_\tau [\go_n] =\gl'(\gb) \gd_n,  \quad \Var_{\bbP_\tau} [\go_n] = 1+ (\gl''(\gb)-1)\gd_n 
\end{equation}
where $\gl'$ and $\gl''$ denote the first two derivatives of $\gl$ \eqref{defgl}.
Because of our assumptions on the first two moments of $\go$, one has $\gl'(\gb) \sim \gb$ and $\gl''(\gb) \to 1$ as $\gb\downarrow0$. One hence has, for $\gb$ sufficiently small,
\begin{equation}\label{okbathin}
\gl''(\gb)\in [1/2,2] \quad  \text{ and } \gl'(\gb)/\gb\in [1/2,2].
\end{equation}
For the remainder of the paper, we will always assume that $\gb_0$ is such that \eqref{okbathin} is satisfied for all $\gb<\gb_0$.
We use the notation
\begin{equation}
 \m_{\gb}:=\gl'(\gb) .
\end{equation}
We need to estimate $\bE_{d,f}\left[ \hat{\bbE}_\tau[ g(\go)] \right]$. With the definition \eqref{def:g}, we have 
$$g(\go)\leq \ind_{\{X(\go)\leq e^{M^2}\}} + e^{-M},$$ 
and hence
\begin{equation}
\label{gg}
\bE_{d,f}\left[ \hat{\bbE}_\tau[ g(\go)] \right] \leq \bE_{d,f}\left[ \hat{\bbP}_\tau \big( X(\go)\leq e^{M^2} \big) \right] +e^{-M}.
\end{equation}
One therefore needs to show that under $\bP_{d,f}$, for most realizations of $\tau$, $X(\go)$ is larger than $e^{M^2}$ with $\hat{\bbP}_\tau$-probability close to $1$. 
We obtain this result by estimating the first and second moment of $X(\go)$.
The proof of these two results is quite technical and is postponed to Section \ref{firstmoment} and \ref{secondmomnet} respectively.

\begin{lemma}
\label{lem:newX}
For any $M>0$ and $\eta>0$, there exists some $\gb_0$ such that: for all  $\gb\leq \gb_0$, and for all $d$ and $f$ with $f-d \geq \eta \, \ell_{\gb,A}$,
\begin{equation}
\bP_{d,f}\bigg(  \hat\bbE_\tau[ X ] \geq  2^q \bigg) \geq 1- e^{-M}.
\end{equation}
\end{lemma}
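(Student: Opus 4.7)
The plan is a second-moment argument applied to the nonnegative $\tau$-measurable functional $Y(\tau) := \hat\bbE_\tau[X]$. Since under $\hat\bbP_\tau$ the $\go_n$'s remain independent with $\hat\bbE_\tau[\go_n] = \m_\gb \gd_n$ and the coordinates of any $\u i \in J_{\ell,t}$ are distinct, one obtains the explicit formula
$$Y(\tau) = \frac{\m_\gb^{q+1}}{\ell^{1/2}\, D(t)^{q/2}} \sum_{\u i \in J_{\ell,t}} U(\u i)\, \ind_{\{\u i \subset \tau\}}.$$
The conclusion will follow from two estimates: (i) $\bE_{d,f}[Y] \gg 2^q$ as $\gb\to 0$, and (ii) $\bE_{d,f}[Y^2] = (1+o(1))\bE_{d,f}[Y]^2$. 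With these in hand, Chebyshev's inequality yields $\bP_{d,f}(Y < 2^q) \leq 4\,\Var_{\bP_{d,f}}(Y)/\bE_{d,f}[Y]^2 = o(1)$, which is $\leq e^{-M}$ upon choosing $\gb_0$ small enough depending on $M$ and $\eta$.

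For (i), I would use the Markov property of the renewal conditioned to hit both $d$ and $f$: for $\u i \subset [d,f]$, $\bP_{d,f}(\u i \subset \tau) = u(i_0-d)u(f-i_q)\, U(\u i)/u(f-d)$. Changing variables to increments $j_k = i_k-i_{k-1}$ and boundary distances $(a,b) = (i_0-d, f-i_q)$ with $a+b = f-d-\sum j_k$, the sum factorizes: the inner sum $\sum_{j_k \leq t} \prod_k u(j_k)^2$ produces $D(t)^q$, while the convolution $\sum_{a+b=n} u(a)u(b)$, evaluated via Doney's asymptotic \eqref{Doney1/2}, equals $\pi\phi(\ell)^{-2}(1+o(1))$ uniformly in $n \asymp \ell$ (valid since $qt \ll \ell$ and $f-d \geq \eta\ell$). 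Combined with $u(f-d) \asymp 1/(\phi(\ell)\sqrt{\ell})$, this gives
$$\bE_{d,f}[Y] \gtrsim \frac{\m_\gb}{\phi(\ell)}(\m_\gb^2 D(t))^{q/2} \gtrsim \frac{\gb}{\phi(\ell)}(4e^2)^q \gtrsim \gb\,(4e)^q,$$
where I use $\m_\gb \geq \gb/2$ and $D(t) \geq A\gb^{-2} = 64e^4\gb^{-2}$ (so $\m_\gb^2 D(t) \geq 16e^4$), together with $\phi(\ell) \leq e^q$ built into the definition \eqref{def:q} of $q$. Since $\ell_{\gb,A} \geq \exp(c\gb^{-2})$ forces $q = q_\ell \gtrsim \log(1/\gb^2)$ as $\gb\to 0$, the ratio $\bE_{d,f}[Y]/2^q \gtrsim \gb (2e)^q$ diverges.

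For (ii), I would expand $\bE_{d,f}[Y^2]$ as a double sum over $(\u i,\u i')$ and decompose according to the overlap $k := |\u i \cap \u i'| \in \{0,\ldots,q+1\}$. The disjoint case $k=0$ factorizes via another application of the Markov property and reproduces a contribution $(1+o(1))\bE_{d,f}[Y]^2$. For $k \geq 1$, the number of pairs sharing a prescribed pattern of $k$ indices is controlled by $q^{2k}/k!$, while each shared position removes one $u(j)^2$ factor from the free summation and therefore contributes at most $O(D(t)^{-1})$; the total cross-contribution is thus bounded by $\bE_{d,f}[Y]^2 \sum_{k\geq 1}(Cq^2/D(t))^k$, which is $o(1)\bE_{d,f}[Y]^2$ because $q_\ell \sim \log D(\ell)$ grows much more slowly than $D(t)^{1/2} \gtrsim \gb^{-1}$. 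The main obstacle will be making this bookkeeping rigorous uniformly in $q = q_\ell$: since $q$ now diverges with $\ell$ (unlike the fixed-$q$ regime of \cite{cf:GLT11}), the combinatorial analysis of overlap patterns is substantially more delicate, and the tight choice $t = \lfloor\ell^{1/4}\rfloor$, which produces both $D(t) \asymp A\gb^{-2}$ and $qt \ll \ell$, together with the bounded-increment restriction in $J_{\ell,t}$, is precisely what will guarantee the geometric decay of the overlap series.
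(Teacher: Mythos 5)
Your first-moment computation is correct, but the concentration claim (ii) is false, and this sinks the whole second-moment strategy. The functional $Y=\hat\bbE_\tau[X]$ is a (multilinear) renewal functional under $\bP_{d,f}$, and for $\alpha=1/2$ such functionals do \emph{not} concentrate around their mean: the ratio $\Var_{\bP_{d,f}}(Y)/\bE_{d,f}[Y]^2$ tends to a strictly positive constant, not to $0$. This is exactly what the paper's own Lemma \ref{lem:W} encodes: after the reductions, the relevant normalized functional $W_\ell$ converges in distribution to $\tfrac{1}{\sqrt{2\pi}}|Z|$ with $Z\sim\cN(0,1)$, a non-degenerate random variable with $\Var(|Z|)/\bE[|Z|]^2 = (\pi-2)/2>0$. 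Even if the ratio were merely bounded, Chebyshev would only give $\bP_{d,f}(Y<2^q)\leq C$ for a fixed constant $C$ once $2^q/\bE_{d,f}[Y]$ is small, which cannot be pushed below $e^{-M}$ for arbitrary $M$.

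The source of the error is the claim that the disjoint-overlap case $k=0$ ``factorizes and reproduces $(1+o(1))\bE_{d,f}[Y]^2$.'' For a renewal process with $\alpha=1/2$, disjoint index tuples are \emph{not} asymptotically independent: if $\u i$ and $\u i'$ are disjoint but interleaved, then
\[
\bP_{d,f}\big(\gd_{\u i}\gd_{\u i'}=1\big)
\]
factorizes over the \emph{merged} increasing sequence $\u i\,\u i'$ via $u$-factors between consecutive points, which is entirely different from the product $\bP_{d,f}(\gd_{\u i}=1)\,\bP_{d,f}(\gd_{\u i'}=1)$. Already at the level of two single points, $\bE[\gd_a\gd_b]=u(a)u(b-a)$ versus $\bE[\gd_a]\bE[\gd_b]=u(a)u(b)$, and the ratio $u(b-a)/u(b)$ is unboundedly large when $b-a\ll b$. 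It is precisely these long-range positive correlations that make the limit $\tfrac{1}{\sqrt{2\pi}}|Z|$ non-degenerate; they do not form a geometrically decaying ``overlap series'' that can be absorbed into $o(1)$.

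The argument that actually works, and which the paper uses, is qualitatively different: after reducing via the preliminary steps (a)--(d) of the paper to a statement under $\bP$, one shows that the normalized renewal functional $W_\ell$ converges in distribution to $\tfrac{1}{\sqrt{2\pi}}|Z|$, and then exploits the fact that the threshold $2^{-q-2}$ vanishes as $\gb\to 0$. Since the limit law has no atom at zero, $\bP(W_\ell<\eta^{-1/2}2^{-q})\to 0$, and hence can be made $\leq e^{-M}/c_3$ by taking $\gb$ small. A second moment does enter, but only to show that $W_\ell$ is close (in $L^2$) to the simpler functional $\frac{\phi(n)}{\sqrt n}\sum_j\gd_j$, whose distributional limit is classical; this is the content of Lemma \ref{lem:Y}, which is a quite different computation from an overlap expansion of $\bE_{d,f}[Y^2]$. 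You would need to replace your step (ii) entirely by this distributional-convergence mechanism.
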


\begin{lemma}
\label{lem:newvarX}
There exists some $\gb_0$ such that, for $\gb\leq \gb_0$ one has
\begin{equation}
\bE_{d,f} \hat\bbE\big[\big( X- \hat\bbE_\tau [X] \big)^2 \big] \leq 3^q.
\end{equation}
\end{lemma}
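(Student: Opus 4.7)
The plan is to perform a chaos-style expansion of $X$ around its $\hat\bbP_\tau$-mean, exploiting the independence of the $(\go_n)$ under $\hat\bbP_\tau$. Write $\tilde\go_n := \go_n - m_\gb \gd_n$, which is centered under $\hat\bbP_\tau$ with variance $1+(\gl''(\gb)-1)\gd_n \in [1/2,2]$ (by \eqref{okbathin}). Expanding $\prod_{k=0}^q(\tilde\go_{i_k} + m_\gb \gd_{i_k})$ inside the definition \eqref{def:X} of $X$, then squaring and applying $\hat\bbE_\tau$, only the diagonal-in-indices terms survive: if $A:=\{i_0,\dots,i_q\}$ and $B:=\{j_0,\dots,j_q\}$ and the $\tilde\go$-subindices used in $\u{i}$ are $S_1 \subseteq A$, those in $\u{j}$ are $S_2 \subseteq B$, then independence plus the fact that $\hat\bbE_\tau[\tilde\go_n]=0$ forces $S_1=S_2=:S\subseteq A\cap B$ and $|S|\geq 1$ (otherwise the contribution would have been absorbed by $\hat\bbE_\tau[X]$).

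The resulting expression is
$$
\hat\bbE_\tau\big[(X-\hat\bbE_\tau X)^2\big] = \frac{1}{\ell\, D(t)^q}\sum_{\u{i},\u{j}\in J_{\ell,t}} U(\u{i})U(\u{j}) \sumtwo{S\subseteq A\cap B}{|S|\geq 1} m_\gb^{2(q+1-|S|)} \prod_{n\in (A\cup B)\setminus S}\gd_n \prod_{n\in S}(1+(\gl''-1)\gd_n).
$$
Bounding $1+(\gl''-1)\gd_n\le 2$ and taking $\bE_{d,f}$ replaces $\prod \gd_n$ by a renewal probability: for a set $T\subseteq[d,f]$,
$$
\bP_{d,f}(T\subseteq \tau)=\frac{1}{u(f-d)}\prod_{\text{consecutive gaps in }T\cup\{d,f\}} u(\cdot) \le C\sqrt{\ell}\prod u(\cdot),
$$
using \eqref{Doney1/2} and $f-d\geq \eta\ell$.

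Next, reorganize the double sum by classifying pairs $(\u{i},\u{j})$ according to their overlap set $C=A\cap B$ and the chosen $S\subseteq C$. The $U(\u{i})U(\u{j})$ weight, combined with the $u(\cdot)$ factors coming from the renewal estimate, yields a product where common indices in $C\setminus S$ contribute sums of the form $\sum_n u(n)^2 \le D(t)$, indices in $S$ also contribute factors of $D(t)$-type (via the $U\cdot U$ weights near anchors), while indices in $A\triangle B$ contribute sums of $u$ against single $\gd_n$ factors paired with $m_\gb^2$. The crucial input is the definition \eqref{def:correllength} of $\ell_{\gb,A}$: it forces $\gb^2 D(t) \le 2A$ for $\gb$ small, so that each $(m_\gb^2 D(t))$-block contributes a bounded constant (at most $8A$). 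After careful bookkeeping the full sum contracts to a product over $q+1$ coupled positions, each giving a factor bounded by a constant, and summing over $|S|\geq 1$ with $2^{|S|}$ yields at most $3^{q+1}$ (the $3$ coming from summing $2^s$ times the per-slot contribution, plus the factor to absorb the $\sqrt{\ell}/u(f-d)$ prefactor using the extra normalization $\ell^{-1}$).

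\textbf{Main obstacle.} The delicate step is the combinatorial accounting in the reorganization: the multi-index gap constraint $i_k-i_{k-1}\le t$ couples positions non-trivially, so the decomposition of $U(\u{i})U(\u{j})$ along the overlap pattern does not literally factorize over positions. One has to group consecutive runs of common / non-common indices, handle the ``gluing'' of $u$-factors across $S$ and $C\setminus S$, and use $\sum_{n\leq t} u(n)^2 = D(t)$ repeatedly without incurring extra $q$-dependent factors. Keeping the final bound at $3^q$ rather than some faster-growing $e^{O(q\log q)}$ is what requires the precise choice of $q=q_\ell$ in \eqref{def:q} and the truncation to gaps $\le t$; sloppy bookkeeping would introduce factors like $(q+1)^{q+1}$ that would ruin the estimate.
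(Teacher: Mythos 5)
Your expansion of $X$ into a chaos over the centered variables $\hat\go_i = \go_i - \m_\gb\gd_i\ind_{\{d\le i\le f\}}$ is exactly the paper's: expand $\prod_k(\hat\go_{i_k}+\m_\gb\gd_{i_k})$, square, take $\hat\bbE_\tau$, and observe that independence and centering force the $\hat\go$-supports of the two copies to coincide as a set $S\subseteq A\cap B$ with $|S|\ge 1$ (the $|S|=0$ term cancels $(\hat\bbE_\tau X)^2$). That part is correct and matches the paper's equations \eqref{kouskous}--\eqref{bigsomme}.

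The gap is in the bookkeeping, which you rightly flag as the hard step, but the mechanism you sketch is not the one that makes the bound work, and I do not believe it can be made to. You propose that each of the $q+1$ positions contributes a per-slot factor bounded by a small universal constant, so that summing $2^{|S|}$ over $S$ gives $\le 3^{q+1}$. The paper does something structurally different: it isolates the $r=0$ term ($|S|=q+1$, $A=B$), which is identically $2^{q+1}\bbE[X^2]\le 2^{q+1}$, and then shows the entire remainder ($|S|\le q$) is \emph{polynomially small in $\ell$} — of order $q\,\ell^{-1/4}$ — not merely per-slot bounded. This dichotomy is necessary. Indeed, the per-step constant in the recursive bound on the triple sum $\sum U(\u i)U(\u j)U(\u i\,\u j)$ is $c_5 D(t)$ (respectively $3c_5 D(t)$ in Lemma \ref{lem:ontrime}), where $c_5$ is the constant such that $u(m)\le c_5\,u(n)$ for $m\ge n$; this is a fixed but potentially large constant depending on $\phi$, nothing like $3$. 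A naive per-slot argument of the kind you describe would yield $(C_1)^{q+1}$ for this $C_1$ and would be useless. What rescues the $r\ge 1$ terms in the paper is instead the combination of $\gb^{2r}$ against $D(t)^{r-1}$ (using $\gb^2 D(t)\le 2A$), the prefactor $\sqrt{t\ell}/(\phi(t)\phi(\ell))$ from the two open ends of the renewal probability, and the crucial polynomial factor $\ell^{-3/8}$ coming from $t\le\ell^{1/4}$, which beats the slowly-varying $(8C_2^2A)^{q+1}$ growth (see \eqref{boundvariance} and the line after). Relatedly, what actually has to be controlled is not $\sum_n u(n)^2\le D(t)$ slot by slot, but a genuine triple product $U(\u i)U(\u j)U(\cdot)$ with entangled gap constraints; this is precisely the content of Lemma \ref{lem:ontrime}, proved by a triple induction peeling off the largest index, and even that produces an extra factor $(1+s)$ linear in $q$. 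So the proposal correctly reproduces the expansion but supplies the wrong explanation for why the bound is $3^q$; the essential estimate (paper's Lemmas \ref{lestermesenr} and \ref{lem:ontrime}) is the part missing.
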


Note that, due to our definition of $q(A,\gb)$, we have 
\begin{equation}
 \lim_{\gb\to 0+} q(A,\gb)=+\infty\, ,
 \end{equation}
and hence one can always choose $\beta_0$ sufficienty small to have
$$\forall \gb\in (0,\gb_0] \quad e^{M^2} \leq  2^{q-1}.$$

Then, we have
\begin{equation}
\bE_{d,f} \hat\bbP_\tau (X \leq e^{M^2}) \leq \bP_{d,f}\Big(  \hat\bbE_\tau[X] \leq  2^q \Big) 
    + \bE_{d,f} \hat\bbP_\tau \Big( X-\hat\bbE_\tau [ X]  \leq \! - 2^{q-1} \Big).
\end{equation}
We use Lemma \ref{lem:newX} to bound  the first term. The second term can be controlled using  Lemma \ref{lem:newvarX} and Chebychev's inequality. In the end, one obtains
\begin{equation}
\label{eq:afternewX}
 \bE_{d,f} \hat\bbP_\tau (X \leq e^{M^2}) \le e^{-M}+4(3/4)^q.
\end{equation}
Hence from \eqref{gg} we have 
\begin{equation}
 \bE_{d,f}\left[  Z_{d,f}g(\go) \right]\le  2e^{-M}+4(3/4)^q\le e^{-M/2},
\end{equation}
where the last inequality holds if $\gb\leq \gb_0$ with $\gb_0$ chosen sufficiently small.
\qed

\subsection{Proof of Lemma \ref{lem:newX}}\label{firstmoment}

Let us introduce the notation
\begin{equation}
\gd_{\u{i}}:= \prod_{k=1}^q \gd_{i_k} = \prod_{k=1}^q \ind_{\{i_k \in\tau\}}.
\end{equation}
We have
\begin{equation}
\hat\bbE_\tau \left[X \right]= \frac{\m_\gb^{q+1}}{\ell^{1/2}\, (D(t))^{q/2}}\sumtwo{\u{i}\in J_{\ell,t}}{ d\le i_0<i_q \le f} U(\u{i})\gd_{\u{i}}.
\end{equation}
Notice that with our choices for $q$ and $\ell$ we have 
\begin{equation}
\label{usefulbounds}
\gp(\ell)\leq e^{q},   \quad   \sqrt{D(t)}\leq e^q, \quad \text{ and } \gb^2 \geq A/D(t).
\end{equation}
As $\m_{\gb}\ge \gb/2 \geq \tfrac12 \sqrt{A}/\sqrt{D(t)}$, we have
\begin{multline}
\label{simplifyEtauX}
\hat\bbE_\tau [X] \geq \Big(\frac12 \Big)^{(q+1)} A^{(q+1)/2} \frac{1}{\sqrt{D(t)}} \, \frac{1}{\sqrt{\ell} \, D(t)^{q}} \sumtwo{\u{i}\in J_{\ell,t}}{ d\le i_0<i_q \le f} U(\u{i})\gd_{\u{i}}\\
\geq  \left( \frac{\sqrt{A} }{2 e^{2}}\right)^{q+1} \ \times \ \frac{\gp(\ell)}{\sqrt{\ell} \, D(t)^{q}} \sumtwo{\u{i}\in J_{\ell,t}}{ d\le i_0<i_q \le f} U(\u{i})\gd_{\u{i}}\, ,
\end{multline}
Recalling our choice $A=64 e^4$, the r.h.s.\ becomes 
\begin{equation}
4^{q+1} \frac{\gp(\ell)}{\sqrt{\ell} \, D(t)^{q}} \sumtwo{\u{i}\in J_{\ell,t}}{ d\le i_0<i_q \le f} U(\u{i})\gd_{\u{i}}\, .
\end{equation}
Hence to prove Lemma \ref{lem:newX}, it is sufficient to show that
\begin{equation}\label{aaaaaahhhhh}
\bP_{d,f}\Bigg( \frac{\gp(\ell)}{\sqrt{\ell} \, D(t)^{q}} \sumtwo{\u{i}\in J_{\ell,t}}{ d\le i_0<i_q \le f} U(\u{i})\gd_{\u{i}} \ \ \geq\ 2^{-q-2} \Bigg) \geq 1-e^{-M}.
\end{equation}

\medskip

Following \cite[Sec. 5]{cf:GLT11}, we show that we can replace the probability  $\bP_{d,f}$ by $\bP$ and modify slightly the set of indices $J_{\ell,t}$.
This allows to reduce the proving Lemma \ref{lem:W} below. For the sake of completeness we recall the steps.

\begin{itemize}
\item[(a)] By translation invariance, the probability that we have to bound is equal to
\begin{equation}
\bP\Bigg( \frac{\gp(\ell)}{\sqrt{\ell} \, D(t)^{q}} \sumtwo{\u{i}\in J_{\ell,t}}{ 1\le i_0<i_q \le f-d} U(\u{i})\gd_{\u{i}} \ \ \leq\ \ 2^{-q-2} \ \Bigg| \  f-d\in \tau\Bigg).
\end{equation}

\item[(b)] In order to remove the conditioning, we restrict the summation to indices $\u{i}$ such that $i_q\leq (f-d)/2$ and we get an upper bound on the probability. Then 
we use \cite[Lemma A.2]{cf:GLT10} which compares $\bP( \ \cdot \ |  \ n\in \tau)$ to $\bP$, to get that there exists a constant $c_3>0$
such that

\begin{multline}
\label{enlevecontrainte}
\bP\Bigg( \frac{\gp(\ell)}{\sqrt{\ell} \, D(t)^{q}} \sumtwo{\u{i}\in J_{\ell,t}}{ 1\le i_0<i_q \le (f-d)/2} U(\u{i})\gd_{\u{i}} \ \ \leq\ \ 2^{-q-2} \ \Bigg| \  f-d\in \tau\Bigg) \\
\leq c_3 \, \bP\Bigg( \frac{\gp(\ell)}{\sqrt{\ell} \, D(t)^{q}} \sumtwo{\u{i}\in J_{\ell,t}}{ 1\le i_0<i_q \le (f-d)/2} U(\u{i})\gd_{\u{i}} \ \ \leq\ \ 2^{-q-2}\Bigg) .
\end{multline}

\item[(c)] Then, setting $n_{\ell}:=\tfrac14 \eta \ell  \le \frac{f-d}{4}$, we can restrict the summation to indices such that $i_0\leq n$, which automatically ensures that $i_q \leq n+tq \leq (f-d)/2$, provided that $\ell$ is large enough (since $tq\ll (f-d)/4$). Hence, we can replace $J_{\ell,t}$ with 
\begin{equation}
J'_{n,t}:= \left\{ \u{i}=(i_0,\dots, i_q)\in \bbN^{q+1} \ | \ i_0\le n\ ; \  
\forall j\in \{1,\dots q\}, i_j-i_{j-1}\in (0,t]\right\},
\end{equation}
and get an upper bound in the probability \eqref{enlevecontrainte}.

\item[(d)]
Finally, from the definition of $n$ we have,  if $\ell$ is large enough
$$\frac{\gp(\ell)}{\sqrt{\ell} }\ge \frac{\sqrt{\eta}}{4}  \frac{\gp(n)}{\sqrt{n} }\, .$$
Hence \eqref{aaaaaahhhhh} holds provided that we can prove that for all $n\ge \eta \ell$ 
\begin{equation}
\label{aaaaaaaaaaah2}
\bP\Bigg( \frac{\gp(n)}{\sqrt{n} \, D(t)^{q}} \sum_{\u{i}\in J'_{n,t}} U(\u{i})\gd_{\u{i}} \ \ \leq\ \ \eta^{-1/2}2^{-q} \Bigg)\leq \frac{1}{c_3}\, e^{-M}.
\end{equation}
\end{itemize}
Let us set 
\begin{equation}
\label{def:W}
W_{\ell}:=\frac{\phi(n)}{\sqrt{n} \, D(t)^q} \sum_{\u{i}\in J'_{n,t}} U(\u{i})\gd_{\u{i}} \, ,
 \end{equation}
where the dependence in $\ell$ is also hidden in $t$, $q$ and $n$.
We are left to showing the following result which easily yields  \eqref{aaaaaaaaaaah2} for $\gb$ sufficiently small.

\begin{lemma}
\label{lem:W}
Under probability $\bP$, we have 
\begin{equation}
W_{\ell}\ \ \stackrel{ \ell \to\infty}{\Longrightarrow} \quad \frac{1}{\sqrt{2\pi}} |Z|\, , 
 \end{equation}
 where $Z\sim \cN(0,1)$, and $\Longrightarrow$ denotes convergence in distribution.
\end{lemma}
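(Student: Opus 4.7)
The plan is to prove the distributional convergence of $W_\ell$ by a moment-matching argument, using the $\alpha=1/2$ stable limit for the renewal counting function $N_n:=|\tau\cap[1,n]|$ (namely $N_n/\sqrt n\Rightarrow\mathrm{const}\cdot|Z|$, a direct consequence of $1-\bbE[e^{-\lambda\tau_1}]\sim c_\gp\sqrt{\lambda/\pi}$ as $\lambda\downarrow 0$ together with the duality $\{N_n\geq m\}=\{\tau_m\leq n\}$) as the underlying source of randomness.

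\medskip

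First I would compute $\bbE[W_\ell]$. Factoring out $\gd_{i_0}$ and writing
$$W_\ell=\frac{\gp(n)}{\sqrt n\,D(t)^q}\sum_{i_0=1}^n\gd_{i_0}\,T_{i_0},\qquad T_{i_0}:=\sum_{i_1,\ldots,i_q}U(\u i)\prod_{k=1}^q\gd_{i_k},$$
the renewal shift property (conditionally on $i_0\in\tau$, the set $(\tau-i_0)\cap\bbN$ is distributed as $\tau$) gives $\bbE[T_{i_0}\mid i_0\in\tau]=\sum_{\u j\in(0,t]^q}\prod_k u(j_k)^2=D(t)^q$, hence $\bbE[W_\ell]=(\gp(n)/\sqrt n)\sum_{i_0=1}^n u(i_0)$ converges by \eqref{Doney1/2}. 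Analogous computations for $\bbE[W_\ell^2]$ and higher moments are obtained by expanding into a sum over $p$-tuples of chains and classifying according to the overlap pattern of the combined ordered index set along $\tau$; the idea is that only the ``Gaussian-type'' overlap patterns survive in the limit, yielding the moments of $|Z|/\sqrt{2\pi}$, after which the conclusion follows by a classical moment-convergence (or characteristic-function) argument.

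\medskip

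The main obstacle is precisely this overlap analysis. Each tuple of chains contributes a product of $u$-factors depending on how the chains intertwine along the renewal, and one must separate the dominant contributions (which produce the moments of the claimed limit) from a combinatorially very large family of subdominant ones. The controlling per-layer estimate is the iterated convolution bound $\sum_{j,j'\in(0,t]}u(j)u(j')u(|j-j'|)$, whose $\alpha=1/2$-specific asymptotics are sharp; the diagonal piece alone contributes $\bigl(\sum_{j\leq t}u(j)^3\bigr)^q=O(C^q)$, negligible against $D(t)^{2q}$ since $\sum u(j)^3<\infty$ for $\alpha=1/2$. The crucial design feature of the scales is the separation $D(t)\geq A\gb^{-2}\gg q=q_\ell=\Theta(\logg\ell)$ built into \eqref{def:correllength}, \eqref{def:q}: it ensures that the small per-layer corrections in the overlap accounting, when compounded over the $q$ layers, still produce a vanishing relative error. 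Organizing this combinatorics of overlap patterns cleanly—so that all terms can be summed and matched to the Gaussian moments—is the delicate technical step.
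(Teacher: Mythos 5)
You propose a method-of-moments argument: compute $\bbE[W_\ell^p]$ for every $p\geq 1$, match the limits to the moments of the half-normal, and conclude. This is a genuinely different route from the paper's. The paper never needs to control any moment of $W_\ell$ beyond the second: it introduces the plain counting variable $\frac{\phi(n)}{\sqrt n}\sum_{j=1}^n\gd_j$, identifies its limit directly from the duality $\{\sum_{j\leq n}\gd_j<m\}=\{\tau_m>n\}$ and the classical $1/2$-stable limit (this is \eqref{convsimple}), and then shows that the difference $\gD W_\ell$ of \eqref{def:DW} vanishes in $L^2$, after which Slutsky's theorem finishes. Your observation that conditioning on $i_0\in\tau$ collapses $\bbE[T_{i_0}\mid i_0\in\tau]$ to $D(t)^q$ is correct and is exactly the mechanism that makes $\bbE[\gD W_\ell]=0$ — but the paper only needs to push this to the second moment, via the covariance estimate of Lemma~\ref{lem:Y}.

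The genuine gap is the step you flag yourself as delicate: the overlap combinatorics for all orders $p\geq 2$ is not carried out, and it is not a routine extension of the first-moment identity. You would have to classify $p$-tuples of chains $\u{i}^{(1)},\dots,\u{i}^{(p)}\in J'_{n,t}$ by the ordered pattern of their merged index set along $\tau$, prove that only pairing-type patterns survive after dividing by $D(t)^{pq}$, and sum a combinatorially exploding family of subdominant terms uniformly in $q=q_\ell\to\infty$. The diagonal bound $\sum_{j\leq t}u(j)^3=O(1)$ handles one kind of degeneracy but does not control the full hierarchy of partial overlaps among $p$ chains; and one must also verify moment determinacy (harmless for the half-normal, but it should be said). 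Note that even at second order the paper requires a nontrivial inductive bookkeeping (Lemma~\ref{lem:Y}, proved via \eqref{eq:sumUUU}, refined in Section~\ref{sec:adapt} into Lemma~\ref{lem:Y2}); reproducing an analogous estimate at every order $p$ would be substantially harder than the statement demands, which is precisely what the $L^2$-comparison-plus-Slutsky route sidesteps.
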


\begin{proof}
First one remarks that the following convergence holds:
\begin{equation}
\label{convsimple}
\frac{\phi(n)}{\sqrt{n}}\sum_{j=1}^{n}  \gd_{j} \ \  \stackrel{n \to\infty}{\Longrightarrow}  \ \ \frac{1}{\sqrt{2\pi}} |Z|\, , \qquad (Z\sim \cN(0,1)) .
\end{equation}
This is a standard result, since $$\left\{\sum_{j=1}^{n}  \gd_{j} <m \right\} = \{\tau_m >n\}\, ,$$ 
%so that the distributional limit of $$\frac{\phi(n)}{\sqrt{n}}\sum_{j=1}^{n}  \gd_{j}$$ is the marginal distribution of an $1/2$-stable subordinator at time $1$.
so that 
\[\lim_{n\to\infty} \bP\left( \frac{\phi(n)}{\sqrt{n}}\sum_{j=1}^{n}  \gd_{j} > t\right) = \bP(\sigma_{1/2} >1/t^2) \, ,\]
where $\sigma_{1/2}$ is an $1/2$-stable subordinator at time $1$.
(such a remark was already made in \cite{cf:GLT11}, see Equations (5.12)-(5.16)).

\medskip

The lemma is thus proved if one can show that the difference 
\begin{equation}
\label{def:DW}
  \gD W_{\ell}:=\frac{\phi(n)}{\sqrt{n}} \left( \sum_{j=1}^{n}  \gd_{j}-\frac{1}{D(t)^q}\sum_{\u{i}\in J'_{n,t}} U(\u{i})\gd_{\u{i}}\right),
\end{equation}
converges to zero in probability, thanks to Slutsky's Theorem. 
We simply prove that the second moment of  $\gD W_{\ell}$ tends to zero.
Set
\[ J'_{n,t}(j):= \left\{ \u{i} \in  J'_{n,t} \ | \ i_0=j \right\} \]
and
\begin{equation}
\label{def:Y}
Y_j:=\gd_{j}-\frac{1}{D(t)^q}\sum_{\u{i}\in J'_{n,t}(j)} U(\u{i})\gd_{\u{i}},
\end{equation}
so that $\gD W = \frac{\phi(n)}{\sqrt{n}}  \sum_{j=0}^n Y_j$.

\begin{lemma}
\label{lem:Y}
 We have the following estimates:
 
 \begin{itemize}
  \item [(i)] for $|j_1-j_2| > tq$, \quad
$ \bE[Y_{j_1}Y_{j_2}]=0$ ;

 \item [(ii)] there exists a constant $C_1>0$ such that for all $j\geq 0$, \quad
$ \bE [Y^2_{j}]\le (C_1)^q \bE[\delta_j]  =  (C_1)^q\, u(j). $
 \end{itemize}
\end{lemma}
 Using this result we have
\begin{multline}
\label{eq:afterY}
 \bE [\gD W^2]=\frac{\phi(n)^2}{n}\sum_{j_1,j_2=0}^{n}   \bE[Y_{j_1}Y_{j_2}]
\leq  \frac{2\phi(n)^2}{n}\sum_{j_1=0}^{n} \sum_{j_2=j_1}^{j_1+tq}  \bE[Y_{j_1}Y_{j_2}] \\
\leq  \frac{2\phi(n)^2}{n} (C_1)^q \sum_{j_1=0}^{n} \sum_{j_2=j_1}^{j_1+tq}  u(j_1)^{8/9} u(j_2)^{1/9} ,
 \end{multline}
where in the first inequality, we used (i), and in the second one we used H\"older's inequality, together with (ii). Since there exists a constant $c_4$ such that $u(j)\leq c_4 (1+j)^{-9/20}$ for all $j\geq 0$,  we have that, provided that $\ell$ is large enough,
\begin{equation}
 \bE [\gD W^2] \leq  \frac{2 c_4\phi(n)^2}{n} (C_1)^q  ( tq +1) \sum_{j_1=0}^{n} (1+j_1)^{- 2/5} \leq 10 c_4 \phi(n)^2 (C_1)^q t q n^{-2/5}.
\end{equation}
Note that with our choice of $q$, 
$$(C_1)^q = \left( \max \big\{ \sup_{x\leq \ell} \gp(x), D(\ell) \big\} \right)^{\log C_1}$$ 
is a slowly varying function of $\ell$. Since $t=\lfloor \ell^{1/4} \rfloor$ and $n\geq \tfrac14 \eta \ell$ we obtain 
\begin{equation}
\bE[\gD W^2] \leq \big(10 c_4 \phi(n)^2 q \, (C_1)^q \eta^{-2/5} \big)  \times \ell^{-3/20}\, ,
\end{equation}
which goes to $0$ as $n\to\infty$.
\end{proof}

 \subsection{Proof of Lemma \ref{lem:Y}}\label{secondmomnet}
 We introduce a new notation. 
 If $\u{i}$ and $\u{j}$ are finite increasing sequences of finite cardinal $q+1$ and $q'+1$ we let $\u{i}\u{j}$ denote the increasing sequence whose image is given by the union of that of $\u{i}$ and $\u{j}$. Note that the cardinal of $\u{i}\u{j}$ is not necessarily equal to $q+q'+2$, as it is possible that $i_{k}$ and $j_{k'}$ coincide.
 
 \medskip
 
 We also extend the definition $U(\u{i})$ to increasing sequences  $(i_k)^{0\leq k \leq r}$ of arbitrary (finite) cardinal (recall that $u(0)=1$ by convention)
 
 \begin{equation}
 U(\u{i}):=\prod_{k=0}^{r} u(i_{k}-i_{k-1})\, .
 \end{equation}

 For item $(i)$, we write
 \begin{equation}
   \bE[Y_{j_1}Y_{j_2}]= 
   \bE\left[ Y_{j_1} \delta_{j_2} 
   \left( 1-\frac{1}{D(t)^q}\sum_{\u{i}\in J'_{n,t}(j_2)} U(\u{i})\gd_{\u{i}}\right) \right].
 \end{equation}
Conditioned to $\delta_{j_2}=1$, and assuming  that $j_2> j_1+tq$, one has that $Y_{j_1}$ and
\[ \Big( 1-\frac{1}{D(t)^q}\sum_{\u{i}\in J'_{n,t}(j_2)} U(\u{i})\gd_{\u{i}}\Big)\]
are independent. The latter term have mean zero (condintionally on $\gd_{j_2}=1$), hence the conclusion.

\medskip

For item $(ii)$ conditioning to $\delta_j=1$ and using translation invariance, one obtains

\begin{equation}
  \bE [Y^2_{j}]=\bE[\delta_j]
  \left(\frac{1}{D(t)^{2q}}\sum_{\u{i}, \u{k} \in J'_{n,t}(0)} U(\u{i}) U(\u{k})
  \bE\left[\gd_{\u{i}}\, \gd_{\u{k}} \right]-1\right)
\end{equation}

In order to keep track of the role of $q$ in the definition of $J'_{n,t}(0) (\subset \bbN^{q+1})$, we now write $J'_{n,t,q}$ instead.
We prove that there exists a constant $C_1$ such that, for any couple $q, q'$
 \begin{equation}
 \label{eq:sumUUU}
 \sumtwo{\u{i}  \in J'_{n,t,q}(0)}{\u{k}  \in J'_{n,t,q'}(0)}\!\!  U(\u{i}) U(\u{k})U(\u{i}\, \u{k})\le (C_1)^{q+q'} D(t)^{q+q'}.
 \end{equation}
 
 This is obviously true if $q=q'=0$, and we proceed recursively on $q+q'$.
 We decompose the sum into two components according to whether $i_q$ or $k_{q'}$ is larger.
 In the case $i_q\ge k_{q'}$
 one obtains
 \begin{equation}
 \sumtwo{\u{i}  \in J'_{\ell,t,q-1}(0)}{\u{k}  \in J'_{n,t,q'}(0)}\!\! U(\u{i}) U(\u{k})U(\u{i}\, \u{k}) \sum_{i_q=\max(i_{q-1},k_{q'})}^{i_{q-1}+t} u(i_q-i_{q-1})u\big( i_q-\max(i_{q-1},k_{q'}) \big) \, .
 \end{equation}
Now, note that thanks to \eqref{def:u} there exists a constant $c_5$ such that, for all $m\geq n$, one has $u(m)\leq c_5 u(n)$. Therefore, uniformly in the choice of $\u{i}$ and $\u{k}$, we have that 
 \begin{multline}
  \sum_{i_q=\max(i_{q-1},k_{q'})}^{i_{q-1}+t} u(i_q-i_{q-1})u\big(i_q-\max(i_{q-1},k_{q'})\big) \\
  \le  c_5 \sum_{i_q=\max(i_{q-1},k_{q'})}^{i_{q-1}+t} u \big(i_q-\max(i_{q-1},k_{q'})\big)^2 \leq 
  c_5 D(t).
 \end{multline}
By symmetry, we conclude that
\begin{equation}
\label{eq:lastsumUUU}
\sumtwo{\u{i}  \in J'_{n,t,q}(0)}{\u{k}  \in J'_{n,t,q'}(0)} \!\! U(\u{i}) U(\u{k})U(\u{i}\, \u{k})\le  2c_5 
\max\left\{  \sumtwo{\u{i}  \in J'_{n,t,q-1}(0)}{\u{k}  \in J'_{n,t,q'}(0)}\!\!  U(\u{i}) U(\u{k})U(\u{i}\, \u{k}) \ ; \!\!\!\!  \sumtwo{\u{i}  \in J'_{n,t,q}(0)}{\u{k}  \in J'_{n,t,q'-1}(0)} \!\! U(\u{i}) U(\u{k})U(\u{i}\, \u{k})\right\}\, ,
\end{equation}
which in turns gives \eqref{eq:sumUUU} by induction, with $C_1=2c_5$.

\subsection{Proof of Lemma \ref{lem:newvarX}}
\label{sec:newvarX}

We set $$\hat\go_i = \go_{i} -\m_\gb \gd_i \ind_{\{d\leq i\leq f\}}.$$
 Under $\hat\bbP_\tau$, the $\hat\go_i$'s are independent, centered random variables, with $\bbE[\hat\go_i^2] \leq 2$ (recall  \eqref{eq:newexpnewvar}). 
 We have
\begin{equation}\label{kouskous}
\Var_{\hat\bbP_{\tau}}\big[ X\big]
 = \frac{1}{\ell\, (D(t))^{q}} \hat\bbE_{\tau} \Bigg[
 \bigg( \sum_{\u{i}\in J_{\ell,t}} U(\u{i})\prod_{k=0}^q  (\hat\go_{i_k} + \m_\gb \gd_{i_k})\bigg)^2 \Bigg]
 -  \frac{\m_\gb^{2(q+1)} }{\ell\, (D(t))^{q}} \Bigg( \sum_{\u{i}\in J_{\ell,t}}  U(\u{i}) \gd_{\u{i}}  \Bigg)^2 
 \end{equation}
 
 One can develop the product, for some fixed $\u{i}\in J_{\ell,t}, d\leq i_0 < i_q\leq f$
 \begin{equation}
  \prod_{k=0}^q  (\hat\go_{i_k} + \m_\gb \gd_{i_k})=   \sum_{r=0}^{q+1} \m^r_\gb \sumtwo {A\subset \{0,\dots, q\}}{ |A|=r}
  \bigg(\prod_{j\in A} \gd_{i_j}\bigg)  \bigg(\prod_{k\in  \{0,\dots, q\}\setminus A} \hat\go_{i_k}\bigg),
 \end{equation}
 so that, when developing the square, and taking the expectation we have
\begin{multline}\label{bigsomme}
\hat\bbE_{\tau} \Bigg[ \bigg( \sum_{\u{i}\in J_{\ell,t}}  U(\u{i})\prod_{k=0}^q  (\hat\go_{i_k} + \m_\gb \gd_{i_k})\bigg)^2  \Bigg] \\
= \sum_{\u{i} ,\u{i'}\in J_{\ell,t}}    U(\u{i}) U(\u{i}') 
\sum_{r}^{q+1} \m^{2r}_\gb \sumtwo{A,B\subset \{0,\dots, q\}}{ |A|=|B|=r}
  \bigg(\prodtwo{j\in A}{j'\in B} \gd_{i_j} \gd_{i'_{j'}}\bigg)  \hat\bbE_{\tau} \left[ \prodtwo{k\in  \{0,\dots, q\}\setminus A}{k'\in  \{0,\dots, q\}\setminus B} \hat\go_{i_k} \hat \go_{i'_{k'}}\right].
\end{multline} 
We have used the fact that only $|A|$ and $|B|$ with the same cardinality have non-zero expectation.
Note that the sum of the terms with $r=q+1$ corresponds exactly to $\bbE\big[ X\big]^2$ and thus just cancels the second term in the r.h.s of \eqref{kouskous}.
 
 \medskip

 To get a good bound on the expected value of \eqref{bigsomme} we must reorganize it. In the process we will also add some positive term, 
 but this is not a problem since we work on an upper bound.
In  \eqref{bigsomme}, because of the last factor, the non-zero terms have to satisfy
$$(i_k)_{k\in \{0,\dots ,q\} \setminus A}=(i_k)_{k\in \{0,\dots ,q\} \setminus B}\, .$$ 
 For a given $s$, we define the set $M_{s}$, which includes all the 
  values that can be taken by $(i_k)_{k\in  \{0,\dots, q\}\setminus A}$, when $q+1-|A|=s$.
  
  \medskip

We notice that
 $$  \bE_{d,f} \bigg(\prodtwo{j\in A}{j'\in B} \gd_{i_j} \gd_{i'_{j'}}\bigg) =0$$
 if one of the $i_j$-s or $i'_{j'}$-s is out of the interval $[d,f]$.
 Hence if $r\ge 0$, the non-zero terms must also satisfy 
 \begin{equation}
 i_0\ge d-tq, \quad i_q\le f+tq.
 \end{equation}
 We include this condition in the definition of $M_s$ 
 \begin{equation}
  M_{s}:= \big\{ \u{i}\in \bbN^{s} \ | \   d-tq \le i_0<\dots<i_{s}\le  f+tq \,  , \, \forall k, l\in\{1,\dots, s\}, | i_k-i_{l} |\le t q \big\}. 
 \end{equation}
 Note that this definition will result in adding extra terms in the sum  if either $d< tq$  or  $f>\ell -tq$  (as we dropped the condition $i_0\le 0$).
 Then, given $\u{i}$, we define $N_{r}(\u{i})$  which includes all the 
  values that can be taken by $(i_j)_{j\in A}$ with $|A|=r$.
 We say that an increasing sequence of integer  $\u{m}=(m_0,\dots, m_a)$ is \textit{$t$-spaced} if 
 \begin{equation}
 \forall k\in \{ 1,\dots, a\}, m_k-m_{k-1} \in (0,t] \, .
 \end{equation}
We set 
  \begin{equation}
  N_r(\u{i} ):= \{ \u{j}\in \bbN^{r} \ | \  d \le j_1<\dots<j_{r}\le f,   \quad \u{i} \cap \u{j}=\emptyset, \quad  \u{i} \u{j} \text{ is $t$-spaced}  \}. 
 \end{equation}
 where  $\u{i} \cap \u{j}=\emptyset$ means that the images of the sequences $\u{i}$ and  $\u{j}$ are disjoint.

\medskip

 With this notation, and using  \eqref{okbathin} or more specifically
 $$\hat\bbE_\tau[\hat\go_{i_k}^2] \leq 2 \quad \text{ and } \quad \m_{\gb}\leq 2\gb\, ,$$ 
 we have  
 \begin{equation}
 \label{VarXdeveloped}
 \Var_{\hat \bbP_{\tau}}\big( X\big)
 \leq  \frac{2^{q+1}}{\ell\, (D(t))^{q}}  \sum_{\u{i}\in J_{\ell,t}} U(\u{i})^2 
 + \frac{4^{q+1}}{\ell\, (D(t))^{q}} \sum_{r=1}^{q} \gb^{2r} 
\sum_{\u{i}\in M_{q-r}}  \ \sum_{\u{j},\u{k}\in N_r(\u{i})} U(\u{i}\u{j})U(\u{i}\,\u{k}) \delta_{\u{j}}\delta_{\u{k}}
\end{equation}
where we isolated the term with $r=0$, and we have used  the concatenation notation
$\u{i}\u{j}$ introduced in the proof of Lemma \ref{lem:Y}.
The first term in the r.h.s.\ is equal to (recall \eqref{eq:varX})
\begin{equation}
2^{q+1} \bbE\left[X^2 \right]\le 2^{q+1}.
\end{equation}
We end the proof by controlling the contibution to the sum of the others terms, which turns out to be ridiculously small in expectation.

\begin{lemma}\label{lestermesenr}
There exists constants $C_2$ and $c_6$ such that for all $r\in \{1,\dots, q-1\}$
\begin{equation}
\label{lasttobebounded}
\sum_{\u{i}\in M_{q-r}}\sum_{\u{j},\u{k}\in N_r(\u{i})} U(\u{i}\u{j})U(\u{i}\,\u{k}) \bE_{d,f}\big[\delta_{\u{j}}\delta_{\u{k}}\big]
\le  c_6\, q\, (C_2 D(t))^{q+r-1} \frac{\sqrt{t\ell}}{\phi(t)\phi(\ell)}.
\end{equation}
\end{lemma}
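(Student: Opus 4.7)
The plan is to bound the left-hand side of \eqref{lasttobebounded} by iterating a recursion in the spirit of the proof of \eqref{eq:sumUUU}, peeling off one vertex at a time from the combined tuple $\u{i}\u{j}\u{k}$. The first step is to estimate $\bE_{d,f}[\delta_{\u{j}}\delta_{\u{k}}] = \bP_{d,f}(\u{j}\u{k}\subset\tau)$: using Lemma A.2 of \cite{cf:GLT10} to compare $\bP_{d,f}$ with $\bP$ at a multiplicative cost, together with the renewal identity $\bP(\u{m}\subset\tau) = u(m_1) U(\u{m})$, one obtains
$$\bE_{d,f}[\delta_{\u{j}}\delta_{\u{k}}] \ \leq\ C^{2r} \, u(j_1-d) \, U(\u{j}\u{k}),$$
where the boundary factor $u(j_1-d)$ will eventually feed into the $\sqrt{\ell}/\phi(\ell)$ gain.

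The second step is to reorganise the product $U(\u{i}\u{j})U(\u{i}\u{k})U(\u{j}\u{k})$ as a product of $u$-weights along the gaps of the sorted combined sequence $\u{i}\u{j}\u{k}$, where most gaps are counted with multiplicity two or three. Decomposing according to the overlap $s = |\u{j}\cap\u{k}| \in \{0, \ldots, r\}$ and performing the recursive elimination---at each step summing the smallest available index against its neighbours via $\sum_{m\le t} u(m) u(m-a) \le c_5 D(t)$ exactly as in \eqref{eq:lastsumUUU}---one eliminates $q+r-1$ interior vertices (the exponent being the total number of vertices minus the two reserved for the outer summations, appropriately adjusted for the overlap $s$), each elementary step producing a factor $C_2 D(t)$.

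The two remaining outer summations then account for the factor $\sqrt{t\ell}/(\phi(t)\phi(\ell))$. The leftmost index of $\u{i}$ may sit anywhere in $[d-tq, f+tq]$---an interval of length $O(\ell)$---so the corresponding summation of the boundary $u(j_1-d)$ term contributes $\sum_{n=1}^{\ell} u(n) \sim \sqrt{\ell}/\phi(\ell)$ by \eqref{Doney1/2}. A second free summation, confined to a $t$-range by the $t$-spacing constraint, contributes $\sum_{n=1}^{t} u(n) \sim \sqrt{t}/\phi(t)$. The multiplicative factor $q$ arises as a combinatorial overhead: the choice of which of the $q+1$ positions in the combined tuple plays the role of the ``bridge'' vertex linking the $t$-spaced bulk to the macroscopic $\ell$-scale.

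The main obstacle I anticipate is the careful bookkeeping of the overlap between $\u{j}$ and $\u{k}$: a shared index contributes a single $\delta$-factor but up to three $u$-weights (one from each of $U(\u{i}\u{j})$, $U(\u{i}\u{k})$, $U(\u{j}\u{k})$), and one must ensure that no $D(t)$-factor is double-counted. While the case $s\geq 1$ is morally easier---fewer vertices to eliminate---one must rewrite the product of $u$-weights in a form compatible with the recursion. A second delicate point is the boundary behaviour at $d$ and $f$: when $\u{i}$ straddles the endpoints of $[d,f]$---which is allowed by $M_{q-r}$, though $\u{j},\u{k}\in N_r(\u{i})$ are forced to stay inside---the elementary renewal bound must be applied with care (using the slow variation of $\phi$) to avoid picking up spurious $1/\phi(\ell)$ factors that would damage the sharp target.
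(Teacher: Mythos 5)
Your high-level plan matches the paper's: peel off a single boundary renewal weight whose sum over a range of order $\ell$ yields the factor $\sqrt{\ell}/\phi(\ell)$, then control the remaining sum by a vertex-elimination recursion producing $(CD(t))^{q+r-1}$ together with one residual $\sqrt{t}/\phi(t)$ factor and a polynomial-in-$q$ overhead. The exponent bookkeeping ($q+r-1$ eliminated vertices out of $q+r+1$) is correct. However, the proposal has three genuine gaps.

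\textbf{The boundary comparison is not resolved.} The exact renewal identity gives
\[
\bE_{d,f}\big[\delta_{\u{j}}\delta_{\u{k}}\big]
= U(d\,\u{j}\u{k})\,\frac{u\big(f-\max(j_r,k_r)\big)}{u(f-d)}\, ,
\]
and the last ratio is \emph{not} bounded: when $\max(j_r,k_r)$ approaches $f$ it is of order $\sqrt{\ell}\,\phi(\ell)$. Lemma A.2 of \cite{cf:GLT10} only controls events confined to the first half of the interval, so it does not give a uniform constant here — nor should it produce the $C^{2r}$ you write, which is not what that lemma provides. "Slow variation of $\phi$" does not address this, since the blowup is polynomial, not logarithmic. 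The paper circumvents this by an exact reflection symmetry of the full sum about $(d+f)/2$: the contribution with $i_0\le(d+f)/2$ equals that with $i_{q-r}\ge(d+f)/2$, so one may restrict to $i_0\le(d+f)/2$. There, $t$-spacing forces $\max(j_r,k_r)\le i_0+tq\le(d+f)/2+tq$, hence $f-\max(j_r,k_r)\ge(f-d)/4$ for large $\ell$, and the ratio is uniformly bounded by a constant $c_7$. You gesture at the difficulty but do not supply this step.

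\textbf{The recursion is the crux and is left unproved.} The bound on
\[
\sum_{\u{i}\in M_{s}(a)}\sum_{\u{j}\in N_{r_1}(\u{i})}\sum_{\u{k}\in N_{r_2}(\u{i})}U(\u{i}\u{j})U(\u{i}\,\u{k})U(\u{j}\u{k})
\]
is precisely Lemma \ref{lem:ontrime}, which is proved by a triple induction on $(s,r_1,r_2)$, at each step eliminating the \emph{largest} of $i_s,j_{r_1},k_{r_2}$ (not the smallest) via the local estimate $\sum u\cdot u\le c_5 D(t)$, with nontrivial base cases $\Sigma(0,r_1,0)$, $\Sigma(s,0,0)$, $\Sigma(s,1,0)$. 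Your "decompose by overlap $s=|\u{j}\cap\u{k}|$" plan is a different bookkeeping that is not carried out; the concatenation notation $\u{i}\u{j}$ already absorbs coincidences, and the paper's induction never tracks the overlap explicitly. As you anticipate, the shared-index case is where the accounting is most delicate, and nothing in the sketch demonstrates that no $D(t)$-factor is double-counted.

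\textbf{The origin of the $q$ factor is misattributed.} It is not a combinatorial choice of a "bridge vertex" among $q+1$ positions. It is the factor $(1+s)$ with $s=q-r$ that emerges from the base case $\Sigma(s,1,0)$: the two-term recursion $\Sigma(s,1,0)\le c_8\sqrt{t}/\phi(t)\,\Sigma(s,0,0)+c_5D(t)\,\Sigma(s-1,1,0)$, iterated $s+1$ times with $\Sigma(s,0,0)=D(t)^s$, produces a sum of $s+1$ comparable terms. One then bounds $1+s=q-r+1\le q$.

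In short: correct skeleton, but the two load-bearing steps — the symmetry argument controlling the boundary renewal ratio, and the triple induction of Lemma \ref{lem:ontrime} — are missing.
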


We have thus 
\begin{multline}
\label{boundvariance}
  \frac{4^{q+1}\gb^{2r}}{\ell\, (D(t))^{q}}
  \sum_{\u{i}\in M_{q-r}}\sum_{\u{j}\, ,\, \u{k}\in N_r(\u{i})} U(\u{i}\u{j})U(\u{i}\,\u{k}) \bE_{d,f}\big[\delta_{\u{j}}\delta_{\u{k}}\big]\\
  \le c_6\, q 4^{q+1} (C_2 )^{2q}( D(t))^{r-1} \gb^{2r} \frac{\sqrt{t}}{ \sqrt{\ell} \phi(t)\phi(\ell)}\, .
\end{multline}
Now by the definition \eqref{def:correllength} of $\ell$ we have  $\gb^2 D(t) \leq 2 A$. Using also that $t \leq \ell^{1/4}$, the above sum is smaller than 
\begin{equation}
 c_6 \, q \gb^2  ( 8 C_2^2 A )^{q+1}\frac {\ell^{-3/8} }{\phi(t)\phi(\ell)}\le  \ell^{-1/4}.
\end{equation}
The last inequality is valid provided that $\ell$ is large enough, since $q$, $(8 C_2^2 A)^q$, $\gp(t)$ and $\gp(\ell)$ are slowly varying functions.
Hence, from \eqref{VarXdeveloped}, we have 

\begin{equation}
 \bE_{d,f}\left( \Var_{\hat \bbP_{\tau}}\big( X\big) \right)=  2^{q+1}+ q  \ell^{-1/4}\, ,
\end{equation}
which concludes the proof of Lemma \ref{lem:newX}, provided that $\ell$ is large enough. \qed

\begin{proof}[Proof of Lemma \ref{lestermesenr}]
Remark that if $i_0\le (d+f)/2$, then $j_r,k_r \leq (d+f)/2 +tq$ so that $f- \max(j_r,k_r) \geq (f-d)/4$ (provided $\ell$ is large enough). Since there exists  a constant $c_7>0$ such that $u(m)\leq c_7 u(n)$ whenever $m\geq \tfrac14 n$ (recall \eqref{def:u}), one obtains
\[\bE_{d,f}\left[\delta_{\u{j}}\delta_{\u{k}}\right] =  U(d \u{j} \u{k}) \, \frac{u(f- \max(j_r,k_r)) }{u(f-d)} \leq c_7 U(d \u{j}\u{k}) \, .\]

By symmetry, the contribution to the sum of  \eqref{lasttobebounded}  of $\u{i}\in M_{q-r}$ such that $i_0\le (d+f)/2$ is equal to that of 
$\u{i}\in M_{q-r}$ such that $i_{q-r}\ge (d+f)/2$, and hence the whole sum is bounded above by

\begin{multline}
\label{starttotrim}
2c_7 \sumtwo{\u{i}\in M_{q-r} }{d-tq \leq i_0\leq (d+f)/2}
\sum_{\u{j}\, ,\, \u{k}\in N_r(\u{i})}  U(\u{i}\u{j})U(\u{i}\, \u{k})U(d\u{j} \u{k})\\
 \leq 2c_7 c_5  \sum_{i_0=d-tq}^{d+f/2} u(\max(i_0-tr-d,0))\sum_{\u{i}\in M_{q-r}(a)}
\sum_{\u{j}\, ,\, \u{k}\in N_r(\u{i})}   U(\u{i}\u{j})U(\u{i}\, \u{k})U(\u{j}\u{k}) ,
\end{multline}
where we used that $\min\{j_1,k_1\} \geq \max\{ i_0 -tr,d\}$, so that $u(\min\{j_1,k_1\} -d) \leq c_5 u(\max\{ i_0 -tr,d\} -d) $.
We also used the notation 
$$M_s(a) = \{\u{i}\in M_s  \, ;\, i_0=a\}.$$
Then, one has that there exists a constant $c_8$ such that for any $\ell$
\begin{equation}
\label{eq:c8}
\sum_{i_0=d-tq}^{d+f/2} u(\max(i_0-rt-d,0))\le \sum_{n=0}^{\ell} u(n)  \le c_8 \frac{\sqrt{\ell}}{\phi(\ell)},
\end{equation}
which follows from the fact that  $(d-f)\le \ell$ and classical properties of regularly varying functions.
Combining this with \eqref{starttotrim}, \eqref{eq:c8} and  Lemma \ref{lem:ontrime} below proves Lemma \ref{lestermesenr}, with the constant $c_6 = 4 c_7 c_5 (c_8)^2$ and $C_2= 3c_5$.
\end{proof}

\begin{lemma}
\label{lem:ontrime}
For any $a\in \bbZ$, and any $s\geq 1$, $r_1,r_2\geq 1$ 
 \begin{equation}
 \sum_{\u{i}\in M_{s}(a)} \sum_{\u{j}\in N_{r_1}(\u{i})} \sum_{\u{k}\in N_{r_2}(\u{i})}     U(\u{i}\u{j})U(\u{i}\, \u{k})U(\u{j}\u{k}) \le (1+s)\,  (3 c_5 D(t))^{s+r_1+r_2-1} \times 2c_8
 \frac{\sqrt{t}}{ \phi(t)}  \, .
\end{equation}
We recall that the constant $c_5$ is chosen such that for all couples of integers such that $m\ge n$, we have  $u(m)\leq c_5 u(n)$, and the constant $c_8$ appears in \eqref{eq:c8}.
 \end{lemma}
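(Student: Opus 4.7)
The plan is to prove Lemma \ref{lem:ontrime} by induction on the total index count $N := s + r_1 + r_2$, in the same spirit as the two-sequence estimate \eqref{eq:sumUUU} used in Lemma \ref{lem:Y}. The new feature compared to \eqref{eq:sumUUU} is that the right-hand side carries exponent $N-1$ (rather than $N$) in $D(t)$, and is multiplied by an additional factor $\sqrt{t}/\phi(t)$. This reflects the fact that, in the triple product $U(\u{i}\u{j})U(\u{i}\u{k})U(\u{j}\u{k})$, one element of the union contributes a single bare $u(\cdot)$ factor rather than $u(\cdot)^2$, whose summation over a range of size $t$ is governed by the estimate \eqref{eq:c8}, $\sum_{n\leq t} u(n) \leq c_8\sqrt{t}/\phi(t)$.

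At each inductive step I would peel off the rightmost element $m$ of the union $\u{i}\cup\u{j}\cup\u{k}$, splitting the sum into four cases according to which of the three sequences contain $m$: (A) $m\in\u{i}\setminus(\u{j}\cup\u{k})$, i.e.~$m=i_s$; (B) $m\in\u{j}\setminus\u{k}$, i.e.~$m=j_{r_1}$; (C) $m\in\u{k}\setminus\u{j}$, symmetrically; (D) $m\in\u{j}\cap\u{k}$, i.e.~$m=j_{r_1}=k_{r_2}$. In cases (A)--(C), $m$ appears as the last point of exactly two among the three combined sequences $\u{i}\u{j},\u{i}\u{k},\u{j}\u{k}$, hence it contributes two $u$-factors measuring its distances to the respective predecessors; the monotonicity inequality $u(a)\leq c_5 u(b)$ for $a\geq b$ lets one collapse these into $c_5\, u(d)^2$ for some $d\leq t$, and summation over $m$'s $\leq t$ permissible positions yields at most $c_5 D(t)$. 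In case (D), $m$ contributes three $u$-factors; two collapse into $c_5^2\, u(d)^2$ after bounding the third $u(\cdot)\leq 1$, and the sum over $m$ is again at most $c_5^2 D(t)$. After peeling, the reduced configuration fits the inductive hypothesis with one (in (A)--(C)) or two (in (D)) of $s,r_1,r_2$ decreased by one. Choosing $C_2=3c_5$ large enough and summing the four contributions recovers the target prefactor $(1+s)(C_2 D(t))^{N-1}\cdot 2c_8\sqrt{t}/\phi(t)$; the crucial observations are that (A) multiplies the inductive bound by $c_5 D(t)/(C_2 D(t))$ and changes the $(1+s)$ factor into $s$, while (B), (C) keep $(1+s)$ but leave the exponent one short and therefore contribute the same type of small ratio, and (D) is subleading by an extra $1/D(t)$.

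The base case of the recursion is reached when the chain of peelings collapses the configuration to a one-dimensional sum of the form $\sum_{n\leq t} u(n)\leq c_8\sqrt{t}/\phi(t)$, which is exactly where the factor $\sqrt{t}/\phi(t)$ in the statement originates. The $(1+s)$ prefactor is preserved throughout: it only strictly decreases to $s$ under case (A), and remains constant under (B), (C), (D). The main obstacle is the careful bookkeeping of constraints when $m$ is removed: one must verify that the new rightmost element of each combined sequence still satisfies the $t$-spacing and disjointness conditions defining $M_{s'}(a)$, $N_{r'_1}(\u{i}')$ and $N_{r'_2}(\u{i}')$, and that the endpoint conditions $d\leq j_1$, $k_{r_2}\leq f$ transport correctly to the smaller problem. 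A secondary subtlety, which dictates the specific choice $C_2=3c_5$, is to ensure that the constants $c_5^e$ ($e\in\{1,2\}$) picked up from the monotonicity step fit comfortably inside the geometric factor $(C_2 D(t))^{N-1}$ uniformly in $N$.
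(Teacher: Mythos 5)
Your proposal is essentially the same approach as the paper's: induct by peeling off the largest index, obtaining the factor $c_5 D(t)$ at each step, and tracking the prefactor $(1+s)$ through the distinction between peeling from $\u{i}$ (which reduces it to $s$) versus peeling from $\u{j}$ or $\u{k}$ (which leaves it unchanged). You correctly identify that this trichotomy is what makes the prefactor grow only linearly in $s$ rather than exponentially in $s+r_1+r_2$. One cosmetic difference: the paper does not treat the tie $j_{r_1}=k_{r_2}$ (your case D) as a separate configuration, but simply writes $\Sigma\le\Sigma_i+\Sigma_j+\Sigma_k$ and accepts the overcount, which is cleaner; your separate analysis of (D) with an extra $1/D(t)$ gain also works but is not needed.

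The genuine gap is in the base cases, which you describe only as ``the chain of peelings collapses to a one-dimensional sum.'' This understates a non-trivial part of the argument. The recursion from step (1) is valid only when $s,r_1,r_2\ge 1$; when one index hits zero you enter a regime with its own recursion (the paper's step (2), with side conditions such as $r_1\ge 2$), and you ultimately land on three different terminal shapes: $\Sigma(0,1,0)$, $\Sigma(0,r_1,0)$, $\Sigma(s,0,0)$, plus the mixed case $\Sigma(s,1,0)$. These are not interchangeable. In particular $\Sigma(s,0,0)=D(t)^s$ carries no $\sqrt{t}/\phi(t)$ factor at all, and the crucial estimate
\[
\Sigma(s,1,0)\le c_8\frac{\sqrt t}{\phi(t)}\,\Sigma(s,0,0)+c_5D(t)\,\Sigma(s-1,1,0)
\]
is where the $(1+s)$ and $\sqrt{t}/\phi(t)$ factors are jointly established, via a separate one-parameter induction in $s$ that accumulates $s+1$ copies of the $\sqrt{t}/\phi(t)$ term. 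Your plan needs to spell this out; without it the claim that the $\sqrt{t}/\phi(t)$ factor ``originates from the base case'' is not justified, because a naive peeling down to $\Sigma(s,0,0)=D(t)^s$ produces no such factor. You would also need the convention that $N_0(\u{i})$ is nonempty only when $\u{i}$ is itself $t$-spaced (the paper's Remark~\ref{precision}), which silently constrains the allowed $\u{i}$ when $r_1$ or $r_2$ reaches zero and is what makes $\Sigma(s,0,0)=D(t)^s$ rather than something larger.
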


 \begin{rem}\label{precision}\rm
The result is proved by induction and we also have to consider the case  where either $r_1$, $r_2$ or $s$ is equal to zero.
When $r_1$ or $r_2$ are equal to zero, the definition of $N_r$ is extended as follows: 
$N_{0}(\u{i})=\{\emptyset\}$ if $\u{i}$ is $t$-spaced and $N_{0}(\u{i})=\emptyset$ if not. We will also use the convention 
$U(\emptyset)=1$.
 \end{rem}

\begin{proof}[Proof of Lemma \ref{lem:ontrime}]
Note that there is in fact no dependence in $a$, and one can as well set $a=0$.
We now proceed with a triple induction on the indices $s$, $r_1$ and $r_2$.
Let us start with the induction hypothesis.
We set 
\begin{equation}
 \Sigma(s,r_1,r_2):=  \sum_{\u{i}\in M_{s}(0)} \sum_{\u{j}\in N_{r_1}(\u{i})} \sum_{\u{k}\in N_{r_2}(\u{i})}     U(\u{i}\u{j})U(\u{i}\, \u{k})U(\u{j}\u{k})
\end{equation}

{\bf (1)} We first show that if $r_1,r_2, s\ge 1$, we have
\begin{equation}
\label{recursilon}
  \Sigma(s,r_1,r_2)\le   c_5 D(t)\Big[\Sigma(s,r_1,r_2-1)+  \Sigma(s,r_1-1,r_2)+  \Sigma(s-1,r_1,r_2)\Big]
\end{equation}

To see this we decompose the sum  $\Sigma(s,r_1,r_2)$ into three sums $\Sigma_{k}$, $\Sigma_j$ and $\Sigma_i$ 
corresponding to the respective contributions  of the triplets $\u{i}$, $\u{j}$, $\u{k}$ satisfying
$k_{r_2}\ge  \max(i_s,j_{r_1})$,  $j_{r_1}\ge  \max(i_s,k_{r_2})$, and $i_{s}\ge  \max(j_{r_1},k_{r_2})$ respectively.
As we are counting several times the cases of equality between $j_{r_1}$ and $k_{r_2}$,
 we have
 
   $$\Sigma(s,r_1,r_2)\le  \Sigma_{k}(s,r_1,r_2)+\Sigma_j (s,r_1,r_2)+ \Sigma_i(s,r_1,r_2).$$

To bound   $\Sigma_{k}$ from above, we notice that because of the restriction of the sum to the  $\u{i}\, \u{k}$ which are $t$-spaced, we have
\begin{multline}
 \Sigma_{k}(s,r_1,r_2) =\sum_{\u{i}\in M_{s}(0)} \sum_{\u{j}\in N_{r_1}(\u{i})} \sum_{\u{j}\in N_{r_2-1}(\u{i})} \\
 \sum_{k_{r_2}=
 \max(i_{s}+1,j_{r_1},k_{r_2-1}+1)}^{\max(i_s, k_{r_2-1})+t}u\left(k_{r_2}-\max(i_s,k_{r_2-1})\right)
u\left(k_{r_2}-\max(j_{r_1},k_{r_2-1})\right).
\end{multline}
Then for any value of $i_s$, $k_{r_2-1}$ and $j_{r_1}$ we have
\begin{multline}\label{cacpadur}
\sum_{k_{r_2}= \max(i_s+1,j_{r_1},k_{r_2-1}+1)}^{\max(i_s, k_{r_2-1})+t}u\left(k_{r_2}-\max(i_s,k_{r_2-1})\right)
u\left(k_{r_2}-\max(j_{r_1},k_{r_2-1})\right)\\
\le c_5  \sum_{k_{r_2}= \max(i_s+1,j_{r_1},k_{r_2-1}+1)}^{k_{r_2-1}+t}
u\left(k_{r_2}-\max(i_s+1,j_{r_1},k_{r_2-1}+1)\right)^2 \le c_5 D(t).
\end{multline}

In the case where $r_2=1$, we just have to drop $k_{r_2-1}$ from the $\max$ and sum until 
$i_s+t$. We therefore have that
\begin{equation}
  \Sigma_{k}(s,r_1,r_2)\le c_5 D(t) \Sigma_{j}(s,r_1,r_2)\le c_5 D(t)\Sigma(s,r_1-1,r_2)\, ,
\end{equation}
and the exact same proof yields 
\begin{equation}
\begin{split}
 \Sigma_{j}(s,r_1,r_2)\le c_5 D(t)\Sigma(s,r_1-1,r_2)\, ,\\
  \Sigma_{i}(s,r_1,r_2)\le c_5 D(t)\Sigma(s-1,r_1,r_2)\, .
\end{split}
\end{equation}

{\bf (2)} Now let us treat the case where either $r_1$, $r_2$ or $s$ are equal to $0$.
The technique of spliting the sum according to the type of the largest index as above still works and gives
\begin{equation}
\label{recursilon2}
\begin{split}
 \Sigma(s,r_1,0)&\le  c_5 D(t) \Big[\Sigma(s,r_1-1,0)+  \Sigma(s-1,r_1,0)\Big],\\
  \Sigma(0,r_1,r_2)&\le  c_5 D(t) \Big[\Sigma(0,r_1,r_2-1)+  \Sigma(0,r_1-1,r_2)\Big],
\end{split}
\end{equation}
provided that $s\ge 1$, $r_1\ge 2$ in the first case, and $r_2,r_1\ge 1$ in the second case (note that from Remark \ref{precision} we sum only over $t$-spaced $\u{i}$).

{\bf (3)} To finish the induction we are left with proving bounds on $\Sigma(0,r_1,0)$, and $\Sigma(s,1,0)$.

{\bf a.} For the first one, when $r_1\ge 2$ ,
we split the sum into two contribution  $j_{r_1} > 0$ or $j_{r_1}\le 0$.
They are respectively equal to
\begin{equation*}\
\begin{split}
\Sigma_{>0}(0,r_1,0)&=\sumtwo{\u{j}\in N_{r_1-1}(0)}{j_{r_1}\geq 0} U(\u{j})U(0\u{j}) \sum_{j_{r_1}=\max(0,j_{r_1-1})+1}^{\max(j_{r_1-1},0)+t}u(j_{r_1}-j_{r_1-1})u(j_{r_1}-\max(j_{r_1-1},0)),\\
\Sigma_{\le 0}(0,r_1,0)&=\sumtwo{ \u{j}\in N_{r_1-1}(0) }{ j_{r_1}\le 0 } U(\u{j})U(0\u{j}) \sum_{j_{0}=j_{1}-t}^{j_1} u(j_{1}-j_{0})^2.
\end{split}
\end{equation*}
And similarly to \eqref{cacpadur}, it is sufficient to conclude that 
\begin{equation}
 \Sigma(0,r_1,0)\le (c_5+1) D(t)\Sigma(0,r_1-1,0) \leq 2c_5\, D(t)\,  \Sigma(0,r_1-1,0) \ .
\end{equation}
Moreover, one also has that
\begin{equation}
\label{eq:3}
 \Sigma(0,1,0)\le 2\sum_{j_{1}=1}^{t} u(|j_{1}|)\le 2 c_8 \frac{\sqrt{t}}{ \phi(t)}\, .
\end{equation}
Then, one easily has by induction that for any $r_1\geq 1$ (a similar result holds for $r_2\geq 1$), 
\begin{equation}
\label{eq:4}
\Sigma(0,r_1,0) \leq \big( 2 c_5 D(t) \big)^{r_1-1} \times 2 c_8 \frac{\sqrt{t}}{ \phi(t)}  \leq  \big( 3 c_5 D(t) \big)^{r_1-1} \times 2 c_8 \frac{\sqrt{t}}{ \phi(t)} \, .
\end{equation}

{\bf b.}
It is straightforward to check that 
\begin{equation}
\label{eq:5}
 \Sigma(s,0,0)= \sumtwo{\u{i}\in M_s(0)}{\u{i} \ t-\text{spaced}} U(\u{i})^2= D(t)^s.
\end{equation}
Moreover, for all $s\ge 1$, decomposing the sum according to whether $j_1$ or $i_s$ is larger, we have
\begin{multline}
\label{eq:6}
  \Sigma(s,1,0)\le \sumtwo{\u{i}\in M_s(0)}{\u{i} \ t-\text{spaced}} U(\u{i})^2\sum_{j_1=i_s+1}^{i_s+t} u(j_1-i_s) 
  \\
  + \sumtwo{\u{i}\in M_{s-1}(0)}{\u{i} \ t-\text{spaced}}\sum_{\u{j}\in N_1{\u{i}}} U(\u{i})U(\u{i}\u{j})  \sum_{i_s=\max(i_{s-1},j_s)+1}^{\max(i_{s-1},j_1)+1} u(i_s-i_{s-1})
  u(i_s-\max(i_{s-1},j_1))
  \\ \le
  c_8 \frac{\sqrt{t}}{ \phi(t)}\, \Sigma(s,0,0) + c_5 D(t)\, \Sigma(s-1,1,0).
\end{multline}
Therefore, combining \eqref{eq:3}-\eqref{eq:5}-\eqref{eq:6}, one easily gets by induction that, for any $s\geq 0$,
\begin{equation}
\label{eq:7}
  \Sigma(s,1,0) \leq  (1+s)\, (c_5 D(t))^{s} \times 2c_8 \frac{\sqrt{t}}{ \phi(t)}  \leq (1+s)\, \big( 3 c_5 D(t) \big)^{s} \times 2 c_8 \frac{\sqrt{t}}{ \phi(t)} \, .
\end{equation}

{\bf (4)} One is now able to complete the induction by combining \eqref{recursilon}-\eqref{recursilon2} with \eqref{eq:4}-\eqref{eq:7}.

\end{proof}

\section{Upper bound of Theorem \ref{thm:gap}}
\label{sec:upper}

In this Section, we prove the following.
\begin{proposition}
\label{prop:uppergap}
For every $\gep>0$, there exists some $\gb_\gep$ such that, for all $\gb\leq \gb_{\gep}$, one has
\[h_c(\gb) \leq  D^{-1}\big( (1-\gep)/\gb^2 \big)^{-\frac12 (1-\gep)}  \, .\]
In the case where $\lim_{n\to\infty} \gp(n) = c_{\gp}$, we have 
\begin{equation}
\limsup_{\gb\to 0} \gb^2 \log h_c(\gb)\leq- \frac12 (c_\gp)^2  \, .
\end{equation}
\end{proposition}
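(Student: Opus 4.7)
The plan is a second-moment (Paley--Zygmund) argument in the spirit of \cite[Section~4.2]{cf:G} and \cite{cf:Lmart}, carried out directly without martingales. Fix $\gep>0$ and set
\[
N = N(\gb) := D^{-1}\bigl((1-\gep)\gb^{-2}\bigr),\qquad h = h(\gb) := N^{-(1-\gep)/2},
\]
so that by definition $\gb^2 D(N) \le 1-\gep$. The goal is to show $\tf(\gb, h) > 0$ for $\gb$ sufficiently small, which forces $h_c(\gb) \le h(\gb)$ and yields the announced upper bound; the asymptotic statement then follows by substituting $D(N)\sim c_\gp^{-2}\log N$, hence $D^{-1}(x)\sim e^{c_\gp^2 x}$, and letting $\gep\downarrow 0$.

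The main computational input is a bound on the second annealed moment of $\hat Z_N^{\gb,h,\go}$. Using $\bbE[e^{\gb \go_n}] = e^{\gl(\gb)}$ and squaring over two independent renewals $\tau^{(1)},\tau^{(2)}$ one obtains $\bbE[\hat Z_N^{\gb,h,\go}] = Z_N^h$ together with
\[
\frac{\bbE\bigl[(\hat Z_N^{\gb,h,\go})^2\bigr]}{(Z_N^h)^2} \;=\; \bE^{\otimes 2}_{N,h}\Bigl[ \exp\bigl(\psi(\gb)\,|\tau^{(1)}\cap\tau^{(2)}\cap [1,N]|\bigr)\Bigr],
\]
where $\psi(\gb) := \gl(2\gb)-2\gl(\gb)\sim \gb^2$ as $\gb\to 0$, and $\bE^{\otimes 2}_{N,h}$ denotes the product of two independent copies of the pure pinning measure of parameter $h$ conditioned on $N\in\tau$. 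The scale $N$ is tuned so that, after expanding $e^{\psi x} = \sum_A(e^\psi-1)^{|A|}\prod_{n\in A}\ind_{\{n\in\tau^{(1)}\cap\tau^{(2)}\}}$, using the renewal identity $\bP_{N,h}(\{n_1,\dots,n_k\}\subset \tau) = \prod_i Z^h_{n_i-n_{i-1}}/Z^h_N$, and passing through the factorization $Z^h_m = e^{m\tf(h)} u^h(m)$ (where $u^h$ is the renewal mass of the tilted renewal $\tau^h$ with inter-arrival $K^h(n) = e^{h-n\tf(h)}K(n)$), the resulting $k$-fold sum over ordered configurations in $[0,N]$ reorganizes into a geometric series whose ratio is controlled by $\gb^2 D(N) \le 1-\gep<1$. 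Combined with the regular variation of $\gp$, this yields a uniform bound $\bbE[(\hat Z_N^{\gb,h,\go})^2] \le C(\gep)\,(Z_N^h)^2$.

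With this bound in hand, Paley--Zygmund gives $\bbP\bigl(\hat Z_N^{\gb,h,\go} \ge \tfrac12 Z_N^h\bigr) \ge p := 1/(4C(\gep)) > 0$. Combining with the universal deterministic lower bound $\log \hat Z_N^{\gb,h,\go} \ge \log K(N) + \gb \go_N - \gl(\gb)$ (from restricting $\tau$ to a single return at $N$), a Cauchy--Schwarz estimate on the complementary ``bad'' event, and the asymptotics $\log Z_N^h \asymp N\tf(h)$ with $\tf(h)\sim \pi h^2/c_\gp^2$ when $\gp\to c_\gp$ (obtained from a Tauberian computation giving $\tg(x)\sim c_\gp\sqrt{x/\pi}$, whence $N\tf(h)\asymp N^\gep\gg \log N$ in our regime), one infers $\bbE[\log \hat Z_N^{\gb,h,\go}]>0$ for $\gb$ small enough. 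Finally, the super-multiplicativity $\hat Z_{N_1+N_2}^{\gb,h,\go} \ge \hat Z_{N_1}^{\gb,h,\go}\cdot \hat Z_{N_2}^{\gb,h,\theta^{N_1}\go}$ (obtained by restricting to $\{N_1\in\tau\}$ and using the renewal property) combined with the law of large numbers gives $\tf(\gb,h) \ge N^{-1}\bbE[\log \hat Z_N^{\gb,h,\go}] > 0$, concluding $h_c(\gb)\le h(\gb)$.

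The main obstacle is the second-moment estimate: a naive bound $\bE^{\otimes 2}_{N,h}[e^{\psi|\tau'|}]\le e^{\psi\bE|\tau'|}$ is insufficient, both because $|\tau'|$ has non-Gaussian tails and because the pinning tilt enhances the intersection density. The expansion must be organized carefully, exploiting the renewal factorization and the strict inequality $\gb^2 D(N)\le 1-\gep<1$ to turn the $k$-fold sum into a convergent geometric series. Once this is done, the remaining ingredients (Paley--Zygmund, the crude lower bound on $\log \hat Z_N$, and super-multiplicativity) combine in a standard way.
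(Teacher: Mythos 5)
Your general strategy (second moment, Paley--Zygmund, super-multiplicativity) is in the spirit of the paper, but there is a genuine gap in the central estimate, and it is precisely the ``main obstacle'' you flag at the end without resolving: the second moment of the partition function \emph{at your chosen $h>0$} does not stay bounded.

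Concretely, with $h=N^{-(1-\gep)/2}$ one has $\tf(h)\asymp h^{2}=N^{-(1-\gep)}$, so the correlation length of the pure tilted renewal $\tau^{h}$ (inter-arrival $\K^h(n)=e^{h-n\tf(h)}\K(n)$) is of order $1/\tf(h)\asymp N^{1-\gep}\ll N$. Beyond that scale the tilted mass function $u^h(n)=\bP(n\in\tau^{h})$ converges to the positive constant $\tf'(h)\asymp h$, so the quantity that actually controls the geometric ratio in your chaos expansion is
\[
D^{h}(N):=\sum_{n=1}^{N}u^{h}(n)^{2}\;\approx\;D\bigl(N^{1-\gep}\bigr)+\bigl(N-N^{1-\gep}\bigr)\,\tf'(h)^{2}\;\asymp\;D(N)+N^{\gep},
\]
not $D(N)$. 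Since $D(N)\sim (c_\gp)^{-2}\log N$ is only of order $1/\gb^2$ while $N^{\gep}=e^{\gep\log N}$ is exponentially large in $1/\gb^2$, one gets $\gb^{2}D^{h}(N)\gg 1$; the series you want to sum is not geometric with ratio $\gb^{2}D(N)\leq 1-\gep$, and $\bbE\bigl[(\hat Z_{N}^{\gb,h,\go})^{2}\bigr]/(Z_N^h)^2$ blows up. There is also a tension built into the approach: taking $h$ small enough (of order $N^{-1/2}$) to keep $D^h(N)\asymp D(N)$ bounded makes $N\tf(h)=O(1)$, which is then too weak to overcome the $\log K(N)\asymp -\tfrac32\log N$ loss in the crude lower bound on $\log\hat Z_N$, so Paley--Zygmund no longer detects localization.

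The paper sidesteps this by \emph{never} computing a second moment at positive $h$. Instead, Lemma~\ref{lem:variance} bounds $\bbE[(Z^{\gb,0,\go}_{N,\free})^{2}]$ at $h=0$ (where the renewal is the untilted one and $D(N)\leq (1-\gep)/\gb^2$ really is the controlling quantity), Lemma~\ref{lem:measure} uses Paley--Zygmund at $h=0$ to show that the quenched measure $\bP^{\gb,0,\go}_{N,\free}$ still has of order $N^{(2-\gep)/4}$ contacts with non-vanishing probability, and Lemma~\ref{lem:finitevol} transfers this to positive $h$ via the convexity inequality $\log Z^{\gb,h,\go}_{N,\free}\geq h\,\partial_{u}\log Z^{\gb,u,\go}_{N,\free}\big|_{u=0}+\log Z^{\gb,0,\go}_{N,\free}$, the derivative being exactly the contact number under $\bP^{\gb,0,\go}_{N,\free}$. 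That convexity step is the key idea missing from your argument. A minor additional remark: the asymptotic $\tf(h)\sim \pi h^2/c_\gp^2$ in your last paragraph is off by a factor $4$ ($\tg(x)\sim c_\gp\sqrt{x}/(2\sqrt{\pi})$ gives $\tf(h)\sim 4\pi h^2/c_\gp^2$), but this does not affect the order of magnitude.
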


The proof we present here relies on ideas developped in \cite{cf:Lmart} but we got rid of the use of martingale result, to focus only on simple 
 second moment estimates. We  optimize it here in order to obtain the exact order for $\log h_c(\gb)$ when $\ga=1/2$.

First of all, one establishes a finite volume criterion for localization, see \eqref{eq:finitevolume}.
Then, one proves that the measure $\bP_{N}^{\gb,h=0,\go}$ is close enough to $\bP$ (in some specific sense, see Lemma \ref{lem:measure}), provided that the second moment of the partition function at $h=0$ is not too large. Then Lemma \ref{lem:variance} provides an estimate on $\bbE[(Z_{N}^{\gb,0,\go})^2]$, which, combined with the finite volume criterion, leads to an upper bound on the critical point.

\bigskip
In this section, for technical convenience, we work with the \emph{free} boundary condition. We introduce the measure
$\bP_{N,\free}^{\gb,h,\go}$, and its associated partition function $Z_{N,\free}^{\gb,h,\go} $, which corresponds to this boundary condition 
(in which the constraint $\ind_{\{N\in\tau\}}$ is dropped):
\begin{equation}
\frac{\dd \bP^{\gb,h,\go}_{N,\free}}{\dd \bP}(\tau):=\frac{1}{Z^{\gb,h,\go}_{N,\free}}\exp\left(\sum_{n=1}^N (\gb \go_n+ h-\gl(\gb))\gd_n\right) \, ,
\end{equation}
with 
\begin{equation}
 Z_{N,\free}^{\gb,h,\go} : = \bE \Big[ \exp \Big( \sum_{n=1}^N (\gb\go_n -\gl(\gb) +h )\gd_n\Big) \Big].
\end{equation}

\subsection{Finite-volume criterion for localization}

We notice that we can obtain a bound on the free-energy which is directly related to the contact fraction at the critical point

\begin{lemma}
\label{lem:finitevol}
For all $N$ sufficiently large, for all $h\ge 0$ and all $\gb\in[0,1]$ we have 
\begin{equation}
\tf(\gb,h) \ge \frac h N \bbE\bE^{\gb,0,\go}_{N,\free}\left[\sum_{i=1}^N \delta_n \right]- \frac{2\log N}{N}. 
\end{equation}
As a consequence, for all $N$ sufficiently large, for all $\gb\in [0,1]$ we have
\begin{equation}\label{hcc}
 h_c(\gb)\le \frac{2\log N}{\bbE\bE^{\gb,0,\go}_{N,\free}\left[\sum_{i=1}^N \delta_n \right]}.
\end{equation}
\end{lemma}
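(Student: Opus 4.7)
The plan exploits three standard ingredients: convexity of the log-partition function in $h$, super-additivity giving $\bbE\log Z_N^{\gb,h,\go}\le N\tf(\gb,h)$, and a deterministic lower bound on $Z_{N,\free}^{\gb,0,\go}$. Since $h\mapsto \log Z_{N,\free}^{\gb,h,\go}$ is convex in $h$ for every realization of $\go$, with derivative $\bE_{N,\free}^{\gb,h,\go}[\sum_{n=1}^N \delta_n]$, evaluating convexity at $h=0$ yields
\[\log Z_{N,\free}^{\gb,h,\go}\ \ge\ \log Z_{N,\free}^{\gb,0,\go}\,+\,h\,\bE_{N,\free}^{\gb,0,\go}\left[\sum_{n=1}^N\delta_n\right].\]
Taking $\bbE$ and dividing by $N$ already produces the contact-fraction term that appears in the statement, so what remains is (i) to lower-bound $\tfrac{1}{N}\bbE\log Z_{N,\free}^{\gb,0,\go}$, and (ii) to upper-bound $\tfrac{1}{N}\bbE\log Z_{N,\free}^{\gb,h,\go}$ by $\tf(\gb,h)$ plus a $\log N/N$ error.

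For (i), restricting the renewal expectation to the event $\{\tau_1>N\}$ collapses $\sum_{n=1}^N\delta_n$ to zero, so that the Boltzmann weight equals $1$, giving the deterministic lower bound $Z_{N,\free}^{\gb,0,\go}\ge \bP(\tau_1>N)$. Since $\bP(\tau_1>N)$ decays only polynomially, this yields $\bbE\log Z_{N,\free}^{\gb,0,\go}\ge -C_1\log N$ for $N$ large.

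For (ii), which is the main technical step, I would use the last-contact decomposition $Z_{N,\free}^{\gb,h,\go}=\sum_{k=0}^N Z_k^{\gb,h,\go}\bP(\tau_1>N-k)$ to write
\[\log Z_{N,\free}^{\gb,h,\go}\le\log(N+1)+\max_{k\le N}\log Z_k^{\gb,h,\go}.\]
The path-decomposition inequality $Z_N^{\gb,h,\go}\ge Z_k^{\gb,h,\go}\K(N-k)e^{\gb\go_N+h-\gl(\gb)}$ (obtained by forcing a jump from $k$ to $N$) allows one to solve for $\log Z_k^{\gb,h,\go}$; since $h\ge 0$ the resulting $-h$ has a favorable sign and can be discarded, yielding $\max_{k\le N}\log Z_k^{\gb,h,\go}\le \log Z_N^{\gb,h,\go}+C_2\log N+\gb|\go_N|+\gl(\gb)$ uniformly in $h$. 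Taking $\bbE$ and invoking the super-additivity bound $\bbE\log Z_N^{\gb,h,\go}\le N\tf(\gb,h)$ produces the desired estimate $\bbE\log Z_{N,\free}^{\gb,h,\go}\le N\tf(\gb,h)+C_3\log N$.

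Combining (i) and (ii) with the convexity inequality, and taking $N$ large enough to absorb the fixed constants into the factor $2$ in front of $\log N/N$, yields the first inequality. The second inequality \eqref{hcc} is then immediate: if $h$ strictly exceeds $2\log N/\bbE\bE_{N,\free}^{\gb,0,\go}[\sum\delta_n]$, the first inequality forces $\tf(\gb,h)>0$, hence $h>h_c(\gb)$, and infimizing concludes. The main obstacle is the uniform-in-$h$ upper bound in (ii); the trick that makes it succeed is that for $h\ge 0$ the boundary correction arising from the path-decomposition inequality carries a favorable $-h$ which can simply be discarded.
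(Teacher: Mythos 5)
Your proposal follows essentially the same route as the paper: convexity of $h\mapsto\log Z_{N,\free}^{\gb,h,\go}$ at $h=0$, the crude lower bound $Z_{N,\free}^{\gb,0,\go}\ge\bP(\tau_1>N)$, super-additivity of $\bbE\log Z_N^{\gb,h,\go}$, and a comparison between the free and constrained partition functions to transfer the super-additive bound to $Z_{N,\free}^{\gb,h,\go}$. The only place you diverge is step (ii): the paper invokes the sharper sandwich $Z_{N}^{\gb,h,\go}\le Z_{N,\free}^{\gb,h,\go}\le c_{10}\,N\,e^{\gb|\go_N|}\,Z_N^{\gb,h,\go}$ (cited from \cite[Ch.~4]{cf:GB}, which comes from comparing the two sums $\sum_k Z_k^{\gb,h,\go}\bP(\tau_1>N-k)$ and $\sum_k Z_k^{\gb,h,\go}\K(N-k)e^{\gb\go_N+h-\gl(\gb)}$ term by term rather than via a max), whereas your max-over-$k$ argument picks up an extra factor $\max_k\K(N-k)^{-1}\asymp N^{1+\ga}$, so your error term is $C\log N/N$ with $C>2$, which does not literally match the constant $2$ in the displayed inequality. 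This is harmless where the lemma is used (only the order $\log N/N$ matters, not the constant), but if you want the constant $2$ as written you should replace the crude bound $\log(N+1)+\max_k\log Z_k$ by the weighted ratio $Z_{N,\free}/Z_N=e^{-\gb\go_N-h+\gl(\gb)}\sum_k Z_k\bP(\tau_1>N-k)\big/\sum_k Z_k\K(N-k)\le c\,N\,e^{\gb|\go_N|+\gl(\gb)}$, using $\bP(\tau_1>m)\le c\,m\,\K(m)$ for large $m$ and keeping the weights $Z_k$ in both numerator and denominator.
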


\begin{proof}
It is the result of a simple computation (see \cite[Ch. 4]{cf:GB}) that there exists a constant $c_{10}$ such that, for all $h\ge 0$
\begin{equation}
Z_{N}^{\gb,h,\go} \leq Z_{N,\free}^{\gb,h,\go} \leq c_{10} N e^{\gb |\go_N|} \, Z_N^{\gb.h,\go},
\end{equation}

Then, by super-additivity of the expected log-partition function, we have
\begin{equation}
\label{eq:finitevolume}
\tf(\gb,h)  = \sup_{N\in\bbN} \frac1N \bbE \log Z_{N}^{\gb,h,\go} \geq \frac1N \bbE \log Z_{N,\free}^{\gb,h,\go} - \frac{\log(c_{10} N) +\gb}{N}
\end{equation}
the last inequality being valid for any $N\geq 1$. 
Finally, by convexity  we note that for any $h>0$
\begin{equation}\label{finad}
 \log Z_{N,\free}^{\gb,h,\go}\ge  h \, \partial_u \log Z_{N,\free}^{\gb,u,\go} |_{u=0}+   \log Z_{N,\free}^{\gb,0,\go}.
\end{equation}
The last term is larger than 
\begin{equation}
 \log \bP (\tau_1 > N) \ge -\log (N/ c_{10})+1,
\end{equation}
provided $N$ is large enoug.  Also, a basic computation yields
\begin{equation}
 \partial_u \log Z_{N,\free}^{\gb,u,\go} |_{u=0}= \bE^{\gb,0,\go}_{N,\free}\left[\sum_{i=1}^N \delta_n \right].
 \end{equation}
Hence we get the result by combining \eqref{eq:finitevolume} and \eqref{finad}.
\end{proof}

\subsection{Estimating the contact fraction at criticality}

Now, to estimate $\bbE\bE^{\gb,0,\go}_{N,\free}\left[\sum_{i=1}^N \delta_n \right]$, we need to compare it with the pure system.
The underlying idea is the following: for the pure system (for $h=0$ it is just the law $\bP$), the number of contact is of order $N^{1/2}\phi(N)^{-1}$.
We want to show that, as long as the second moment of the partition function $Z_{N,\free}^{\gb,0,\go}$ is not too big, the order of magnitude for the number of contacts remains the same for the disordered system.

\begin{lemma}
\label{lem:measure}
For all $\gep>0$, there exists some $N_\gep$ such that, if $N\geq N_\gep$ and $\bbE\big[ (Z_{N,\free}^{\gb,0,\go})^2 \big] \leq 10/\gep$, then
\begin{equation}
\bbE\bigg[  \bP_{N,\free}^{\gb,0,\go} \Big(  \sum_{n=1}^N \gd_n \geq N^{\frac{2-\gep}{4}}\Big) \bigg] \geq  \frac{\gep}{80}\, .
\end{equation}
\end{lemma}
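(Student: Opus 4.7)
The plan is to run a one-line second-moment/Paley--Zygmund argument against the auxiliary event
$$A_N := \Big\{ \sum_{n=1}^N \gd_n \geq N^{(2-\gep)/4} \Big\},$$
combined with the asymptotic result for the unperturbed renewal recalled in \eqref{convsimple}.

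First I would rewrite
$\bP_{N,\free}^{\gb,0,\go}(A_N) = Z_{N,\free}^{\gb,0,\go}(A_N)/Z_{N,\free}^{\gb,0,\go}$, where $Z_{N,\free}^{\gb,0,\go}(A_N) = \bE[\mathbf{1}_{A_N} e^{\sum_{n=1}^{N}(\gb\go_n - \gl(\gb))\gd_n}]$. Since the exponential factor has $\bbE$-expectation $1$ for any fixed $\tau$, Fubini gives
\[
\bbE\big[ Z_{N,\free}^{\gb,0,\go}(A_N) \big] = \bP(A_N),
\]
and since $Z_{N,\free}^{\gb,0,\go}(A_N) = \bP_{N,\free}^{\gb,0,\go}(A_N)\, Z_{N,\free}^{\gb,0,\go}$, the Cauchy--Schwarz inequality in $\bbE$ yields
\[
\bP(A_N)^2 \;\leq\; \bbE\big[\bP_{N,\free}^{\gb,0,\go}(A_N)^2\big] \cdot \bbE\big[(Z_{N,\free}^{\gb,0,\go})^2\big].
\]
Using $\bP_{N,\free}^{\gb,0,\go}(A_N)^2 \leq \bP_{N,\free}^{\gb,0,\go}(A_N)$ (Paley--Zygmund), and then plugging in the hypothesis $\bbE[(Z_{N,\free}^{\gb,0,\go})^2] \leq 10/\gep$, one obtains
\[
\bbE\big[\bP_{N,\free}^{\gb,0,\go}(A_N)\big] \;\geq\; \frac{\bP(A_N)^2}{\bbE[(Z_{N,\free}^{\gb,0,\go})^2]} \;\geq\; \frac{\gep}{10}\, \bP(A_N)^2.
\]

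It remains to show that $\bP(A_N) \to 1$ as $N\to\infty$, so that for $N$ large enough $\bP(A_N)^2 \geq 1/8$ and the announced bound $\gep/80$ follows. This is immediate from \eqref{convsimple}: since $\frac{\phi(N)}{\sqrt{N}}\sum_{n=1}^N \gd_n$ converges in distribution to $|Z|/\sqrt{2\pi}$, and since the rescaled threshold $\phi(N) N^{-\gep/4}$ tends to $0$ (because $\phi$ is slowly varying), we have $\bP(A_N) = \bP\big( \frac{\phi(N)}{\sqrt{N}}\sum_{n=1}^N \gd_n \geq \phi(N)N^{-\gep/4} \big) \to 1$. Picking $N_\gep$ large enough to guarantee $\bP(A_N)\ge 1/\sqrt{8}$ closes the argument.

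I do not expect any genuine obstacle here: the only subtlety is remembering to use the \emph{free} boundary condition consistently (so that $\bbE Z_{N,\free}^{\gb,0,\go}(A_N) = \bP(A_N)$ really is the plain renewal probability, with no conditioning on $N\in\tau$), and to verify that the threshold $N^{(2-\gep)/4}$ is indeed of smaller order than the typical contact number $\sqrt{N}/\phi(N)$ for every slowly varying $\phi$, which is exactly what the extra factor $N^{-\gep/4}$ buys us.
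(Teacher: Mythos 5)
Your proof is correct, and it takes a genuinely different route from the paper's. Both arguments are second-moment arguments built on the same ingredients (the convergence in \eqref{convsimple}, the identity $\bbE[Z^{\gb,0,\go}_{N,\free}]=1$, and the hypothesis $\bbE[(Z^{\gb,0,\go}_{N,\free})^2]\leq 10/\gep$), but they are organized differently. The paper works with the complement $A_N^c$, bounding pointwise
\[
\bP^{\gb,0,\go}_{N,\free}(A_N^c)\;\leq\;\ind_{\{Z^{\gb,0,\go}_{N,\free}\leq 1/2\}}+2\,Z^{\gb,0,\go}_{N,\free}(A_N^c),
\]
then averaging and applying the Paley--Zygmund inequality to $Z^{\gb,0,\go}_{N,\free}$ (via the truncation level $1/2$) and using $\bbE[Z^{\gb,0,\go}_{N,\free}(A_N^c)]=\bP(A_N^c)$; the paper therefore needs $\bP(A_N)\geq 1-\gep/160$. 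You instead work directly with $A_N$: you observe $\bbE[\bP^{\gb,0,\go}_{N,\free}(A_N)\,Z^{\gb,0,\go}_{N,\free}]=\bP(A_N)$, apply Cauchy--Schwarz, and use $p^2\leq p$ for a probability $p$, which gives $\bbE[\bP^{\gb,0,\go}_{N,\free}(A_N)]\geq \tfrac{\gep}{10}\,\bP(A_N)^2$ with no auxiliary truncation and under the much weaker requirement $\bP(A_N)\geq 1/(2\sqrt{2})$. Your route is cleaner and slightly more robust in the threshold it needs on $\bP(A_N)$; the conclusion and constants match. (Minor remark: what you label ``Paley--Zygmund'' is really just the elementary bound $p^2\leq p$; the Cauchy--Schwarz step is what plays the role of the Paley--Zygmund mechanism here, applied to $\bP^{\gb,0,\go}_{N,\free}(A_N)\cdot Z^{\gb,0,\go}_{N,\free}$ rather than to $Z^{\gb,0,\go}_{N,\free}$ itself.) Your justification that $\bP(A_N)\to 1$ via \eqref{convsimple} and the fact that $\phi(N)N^{-\gep/4}\to 0$ is the same as in the paper.
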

\begin{proof}
We denote $A_N=\big\{ \sum_{n=1}^N \gd_n \geq N^{\frac{2-\gep}{4}}\big\}$. from \eqref{convsimple}
we have
\begin{equation}
\lim_{N\to \infty} \bP(A_N)=1\, ,
\end{equation}
and hence we can find $N_\gep$ such that for all $N\geq N_\gep$ 
$$\bP(A_N)\geq 1- \gep/160.$$
Then, we observe that
\begin{equation}
\bP_{N,\free}^{\gb,0,\go}(A_N^c) \leq \ind_{\{Z_{N,\free}^{\gb,0,\go} \leq 1/2\}} + 2 \bE\Big[ \ind_{A_N^c} \, e^{\sum_{n=1}^N (\gb\go_n - \gl(\gb)) \gd_n}\Big] \, .
\end{equation}
Therefore, averaging over the disorder and using Paley-Zygmund's inequality for $Z_{N,\free}^{\gb,0,\go}$ (recall $\bbE[Z_{N,\free}^{\gb,0,\go}]=1$), we have that
\begin{equation}
\bbE\big[\bP_{N,\free}^{\gb,0,\go}(A_N^c) \big] \leq \bbP( Z_{N,\free}^{\gb,0,\go} \leq 1/2 ) + 2 \bP(A_N^c) \leq 1 -  \frac{1}{4 \bbE\big[ (Z_{N,\free}^{\gb,0,\go})^2\big]} + \frac{\gep}{80} \, .
\end{equation}

Hence, if $\bbE\big[ (Z_{N,\free}^{\gb,0,\go})^2\big] \leq  10/\gep$, one concludes that  
\begin{equation}
\bbE\big[\bP_{N,\free}^{\gb,0,\go}(A_N^c) \big] \leq 1- \gep/80.
 \end{equation}
\end{proof}

Given $\gep>0$, we set
\begin{equation}
\label{def:Nbeta}
N_{\gb,\gep} := \max \Big\{\, N  \, ; \,  \bbE\big[ (Z_{N,\free}^{\gb,h=0,\go})^2 \big] \leq 10/\gep\, \Big\} \, .
\end{equation}
If $\gb$ is chosen sufficiently small, we can ensure that $N_{\gb,\gep}\ge N_{\gep}$ of Lemma \ref{lem:measure}.
Hence we have 
\begin{equation}
 \bE^{\gb,0,\go}_{N_{\gb,\gep},\free}\left[\sum_{i=1}^{N_{\gb,\gep}} \delta_n \right]\ge \frac{\gep}{80}\,  N_{\gb,\gep}^{\frac{2-\gep}{4}}.
\end{equation}

And recalling Lemma \ref{lem:finitevol}, and in particular \eqref{hcc}, one has
\begin{equation}
h_c(\gb) \leq \frac{160}{\gep}\, (\log N_{\gb,\gep})N_{\gb,\gep}^{-\frac{2-\gep}{4}}\le N_{\gb,\gep}^{-\frac{1-\gep}{2}} \, ,
\end{equation}
where the last inequality holds provided $N_{\gb,\gep}$ is sufficiently large.

To conclude the proof of Proposition \ref{prop:uppergap}, we need a control of $N_{\gb,\gep}$.
\begin{lemma}
\label{lem:variance}
For every $\gep>0$, there exists $\gb_{\gep}$ such that for all $\gb\in (0,\gb_\gep]$ 
\begin{equation}
N_{\gb,\gep} \geq D^{-1}\big( (1-\gep)/\gb^2 \big) \geq N_{\gep}\, .
\end{equation}

\end{lemma}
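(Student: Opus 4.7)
The plan is to bound $\bbE[(Z_{N,\free}^{\gb,0,\go})^2]$ by $10/\gep$ for all $N\leq D^{-1}((1-\gep)/\gb^2)$, using the standard annealed second moment computation combined with a renewal-type geometric series. First, using independence of the $(\go_n)$ and $\bbE[e^{k\gb\go}]=e^{\gl(k\gb)}$ for $k\in\{1,2\}$, a direct computation gives
\[
\bbE\big[(Z_{N,\free}^{\gb,0,\go})^2\big] \;=\; \bE^{\otimes 2}\!\left[ \exp\!\big( \gl_2(\gb)\, T_N \big)\right],
\]
where $\gl_2(\gb) := \gl(2\gb) - 2\gl(\gb)$, $T_N := |\tau^{(1)} \cap \tau^{(2)} \cap [1,N]|$, and $\tau^{(1)}, \tau^{(2)}$ are independent copies of $\tau$. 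Taylor expanding $\gl$ at $0$ (with $\bbE[\go]=0$, $\bbE[\go^2]=1$) yields $\gl_2(\gb) = \gb^2 + O(\gb^3)$ as $\gb \to 0$.

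Next, I would exploit that $\ind_{\{n\in\tau^{(1)}\cap\tau^{(2)}\}}\in\{0,1\}$ to write $\exp(\gl_2(\gb) T_N) = \prod_{n=1}^N(1 + (e^{\gl_2(\gb)}-1)\ind_{\{n\in\tau^{(1)}\cap\tau^{(2)}\}})$, expand the product, take $\bE^{\otimes 2}$-expectations using the renewal identity $\bbP^{\otimes 2}(n_1,\dots,n_k\in\tau^{(1)}\cap\tau^{(2)}) = \prod_{i=1}^k u(n_i-n_{i-1})^2$ (with $n_0:=0$), and then bound the ordered sum by the straightforward factorization
\[
\sum_{1\leq n_1<\cdots<n_k\leq N} \prod_{i=1}^k u(n_i-n_{i-1})^2 \;\leq\; D(N)^k,
\]
obtained by dropping the global constraint $\sum(n_i-n_{i-1})\leq N$. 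Summing the resulting geometric series in $k$ yields
\[
\bbE\big[(Z_{N,\free}^{\gb,0,\go})^2\big] \;\leq\; \frac{1}{1 - (e^{\gl_2(\gb)}-1)\, D(N)},
\]
valid whenever the denominator is positive.

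Finally, for $N=D^{-1}((1-\gep)/\gb^2)$ one has $D(N)\leq (1-\gep)/\gb^2$ by definition of $D^{-1}$, and choosing $\gb_\gep$ small enough so that $e^{\gl_2(\gb)}-1\leq (1+\gep/4)\gb^2$ for $\gb\leq\gb_\gep$, we get $(e^{\gl_2(\gb)}-1)D(N)\leq(1+\gep/4)(1-\gep)\leq 1-\gep/2$. The second moment is thus at most $2/\gep\leq 10/\gep$, and by the maximality in \eqref{def:Nbeta} this yields $N_{\gb,\gep}\geq D^{-1}((1-\gep)/\gb^2)$. The auxiliary inequality $D^{-1}((1-\gep)/\gb^2)\geq N_\gep$ follows because $D^{-1}$ is non-decreasing and unbounded as $\gb\to 0$ (indeed, when $\phi\to c_\phi$, $D(N)\sim c_\phi^{-2}\log N$, so $D^{-1}$ grows exponentially).

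There is no serious obstacle; the whole argument reduces to two routine computations. The point requiring attention is rather to check that the constants are sharp enough to produce the optimal leading order $-\tfrac12 c_\phi^2$ in Proposition \ref{prop:uppergap}: this works because $\gl_2(\gb)/\gb^2\to 1$ is sharp, the bound $\sum\prod u^2\leq D(N)^k$ loses only subleading terms compared with $D(N)\sim c_\phi^{-2}\log N$, and the constant $(1-\gep)$ may be taken arbitrarily close to $1$.
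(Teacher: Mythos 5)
Your proposal is correct, and the second half of the argument is genuinely different from the paper's. Both proofs share the opening step: computing $\bbE\big[(Z^{\gb,0,\go}_{N,\free})^2\big]=\bE^{\otimes2}\big[\exp(\gl_2(\gb)\,|\tau'\cap[1,N]|)\big]$ with $\gl_2(\gb)=\gl(2\gb)-2\gl(\gb)\sim\gb^2$. After that you diverge. You expand $\exp(\gl_2 T_N)=\prod_{n\leq N}\bigl(1+(e^{\gl_2}-1)\ind_{\{n\in\tau'\}}\bigr)$, apply the renewal factorization $\bbP^{\otimes2}(S\subset\tau')=\prod u(n_i-n_{i-1})^2$, and use the elementary decoupling $\sum_{1\leq n_1<\cdots<n_k\leq N}\prod u(n_i-n_{i-1})^2\leq D(N)^k$ obtained by relaxing the constraint that the partial sums stay below $N$; summing the geometric series gives $\bbE[(Z)^2]\leq(1-(e^{\gl_2}-1)D(N))^{-1}$. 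The paper instead performs an Abel summation to write the homogeneous partition function of $\tau'$ as $1+\sum_k(e^{ku}-e^{(k-1)u})\bP^{\otimes2}(T_N\geq k)$, bounds $\bP^{\otimes2}(T_N\geq k)\leq\bP^{\otimes2}(\tau'_1\leq N)^k$, and invokes the Tauberian result of Bingham--Goldie--Teugels (Theorem~8.7.3) to get $\bP^{\otimes2}(\tau'_1\geq N)\sim1/D(N)$ before summing the geometric series. The two routes yield the same effective ratio $uD(N)$ and hence the same constant; yours is more self-contained, as it sidesteps the infinite-mean renewal asymptotics for $\tau'_1$ entirely, at the cost of a slightly cruder (but still lossless to leading order) combinatorial bound. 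Both arguments are complete and correct.
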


\subsection{Control of the second moment: proof of Lemma \ref{lem:variance}}

One needs to control the growth of $\bbE\big[ (Z_{N,\free}^{\gb,0,\go})^2\big]$: we show that if $N$ is such that $D(N)\leq (1-\gep)/\gb^2$, then $\bbE\big[ (Z_{N,\free}^{\gb,0,\go})^2\big]\leq 10/\gep$.

First of all, one writes 
\begin{equation}
(Z_{N,\free}^{\gb,0,\go})^2 = \bE^{\otimes 2}\Big[ \exp\Big( \sum_{n=1}^N (\gb\go_n - \gl(\gb) ) (\gd_n^{(1)} + \gd_n^{(2)})\Big) \Big],
\end{equation}
where $\tau^{(1)}$ and $\tau^{(2)}$ are two independent copies of $\tau$, whose joint law is denoted by $\bP^{\otimes 2}$ and $\delta^{(i)}=\ind_{\{n\in \tau^{(i)}\}}$.
Therefore, since
\begin{equation}
 \log\bbE[e^{(\gb\go_n - \gl(\gb) )p} ]=\begin{cases}
                                       0 &\text{ for } p= 0 \text { or } 1,\\
                                       \gl(2\gb)-2\gl(\gb) &\text{ for } p=2,
\end{cases}\end{equation}
we have
\begin{equation}
 \bbE\left[ Z_{N,\free}^{\gb,0,\go})^2\right] = 
 \bE^{\otimes 2} \Big[\exp\Big( \sum_{n=1}^N (\gl(2\gb) - 2\gl(\gb) ) (\gd_n^{(1)}\gd_n^{(2)})\Big) \Big].
\end{equation}

As
$$(\gl(2\gb) - 2\gl(\gb)) \stackrel{\gb\to 0}{\sim} \gb^2 \, ,$$ 
there exists some $\gb_{\gep}$ such that, if $\gb\leq \gb_{\gep}$, then
$$(\gl(2\gb) - 2\gl(\gb))\leq (1+\gep^2) \gb^2.$$
Hence we have 
\begin{equation}
\bbE\big[ (Z_{N,\free}^{\gb,0,\go})^2\big] \leq \bZ_{N}^{(1+\gep^2)\gb^2},
\end{equation}
where $\bZ_{N}^{u}$ is the partition function (with free-boundary condition)
of a homogeneous pinning model with parameter $u$  and underlying renewal $\tau'$, obtained by intersecting two independent copies of $\tau$,
$\tau' := \tau^{(1)} \cap\tau^{(2)}$ : 
\begin{equation}
 \bZ_{N}^u := \bE^{\otimes 2} \big[ e^{u \sum_{n=1}^N \ind_{\{n\in\tau'\}}} \big].
\end{equation}
We rewrite  $\bZ_{N}^u$ in the following manner 
\begin{equation}
\bZ_{N}^u=1+ \sum_{k=1}^N (e^{ku} - e^{(k-1)u}) \bP^{\otimes 2} (|\tau'\cap[0,N]| \geq k).
\end{equation}

Then, to obtain an upper bound we use the following trivial fact 
\begin{equation}
\bP^{\otimes 2} \big( |\tau'\cap[0,N]| \geq k \big) \leq \big( \bP^{\otimes 2} (\tau'_1 \le N) \big)^k\, .
\end{equation}
Hence we have
\begin{equation}\label{dsadsa}
 \bZ_{N}^u\le 1 + u \sum_{k=1}^N \exp \Big( k \left[u +\log \bP^{\otimes 2} (\tau'_1 \le N) \right] \Big).
\end{equation}
To estimate the tail of the distribution of $\tau'_1$ we use Theorem 8.7.3 in \cite{cf:BGT} (recall that $\bP^{\otimes 2}(n\in\tau') =u(n)^2$): we have that, since $D(N)$ is slowly varying,
\begin{equation}
\sum_{n=1}^N \bP^{\otimes 2}(n\in\tau') = D(N) \quad \Longrightarrow \quad \bP^{\otimes 2}(\tau'_1 \geq n) \stackrel{N\to\infty}{\sim} \frac{1}{D(N)} \, .
\end{equation}

%the standard Chernov bound (which combines the Markov inequality with the use of Laplace transforms).
%We have 
%\begin{equation}
% \log \bP^{\otimes 2} (\tau'_1 \le N)\le \min_{\gamma>0} \left\{ \log \bE^{\otimes 2}[e^{-\gamma \tau'_1}]+ \gamma N\right\}\, .
%\end{equation}
%Then choosing $\gamma=N^{-1}$ and using standard properties for convolution of Laplace transforms (recall here that $\bP^{\otimes 2}(n\in\tau') =u(n)^2$), we have 
%\begin{equation}
% 1-\bE^{\otimes 2}[e^{- \tau'_1/N}]=\frac{1}{\sum_{n\ge 0} u(n)^2\, e^{- n/N}} \stackrel{N\to \infty}{\sim} \frac{1}{D(N)}.
%\end{equation}
In the end, we obtain that, provided that $N$ is large enough,
\begin{equation}
  \log \bP^{\otimes 2} (\tau'_1 \le N)\le -\frac{\left(1- \frac{\gep}{4}\right)}{ D(N)},
\end{equation}
so that, from \eqref{dsadsa}, we get
\begin{equation}
\bZ_{N}^u \leq 1+ u\sum_{k=1}^N e^{\frac{k}{D(N)} (  u D(N)- (1-\gep/4))}.
\end{equation}
Now recall that we wish to use the inequality for $u= \gb^2(1+\gep^2)$.
If $D(N) \leq (1-\gep)/\gb^2$, then $D(N)\leq (1-\gep/2)/u$ provided that $\gep$ is small,
and we have $$ u D(N)- (1-\gep/4)\leq - \gep/4. $$ 
In the end, we obtain
\begin{equation}
\bZ_N^u \leq 1+ \, \frac{u}{1- \exp\left(-\frac{\gep}{4D(N)} \right)} \leq 1+ \frac{8}{\gep} u D(N) \leq 10/\gep,
\end{equation}
where in the second inequality, we used that $\gep/(4D(N))$ is small if $N$ is large enough. 
The last inequality is valid  if $D(N) \leq (1-\gep/2)/u$, and $\gep$ is small enough.
\qed

\section{Optimizing the lower bound for Theorem \ref{thm:gap}}

\label{sec:adapt}
In this Section, we sharpen the argument of Sections \ref{sec:coarse}-\ref{sec:chgtmeas}-\ref{sec:oneblock}, and prove the following (recall \eqref{Dinverse}).

\begin{proposition}
\label{prop:lowergap}
For all $\gep>0$, there exists $\gb_\gep>0$ such that, for all $\gb\leq \gb_{\gep}$, one has 
\[h_c(\gb) \geq  D^{-1}\big( (1+\gep)/\gb^2 \big)^{-\frac12 (1+\gep)} \, . \]
As a consequence when $\lim_{n\to\infty} \gp(n) = c_{\gp}$, we have 
\begin{equation}\label{grocrocro}
 \liminf_{\gb\to 0}\gb^2\log h_c(\gb)\ge - \frac12 ( c_{\gp})^2.
\end{equation}

\end{proposition}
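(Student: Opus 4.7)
The plan is to run the same three-step program (coarse-graining, fractional moment, change of measure) developed in Sections \ref{sec:coarse}--\ref{sec:oneblock}, but with every constant optimized and with a different calibration of the parameters. Fix $\gep > 0$ small. Set $t_\gb := D^{-1}\bigl((1+\gep)/\gb^2\bigr)$, so that $D(t_\gb) \sim (1+\gep)/\gb^2$, and aim to prove $\tf(\gb, h_\gb) = 0$ with $h_\gb = t_\gb^{-(1+\gep)/2}$. This is the natural target: by the definition of $D^{-1}$ and the estimate \eqref{Doney1/2}, this is exactly the threshold below which the second-moment method of Section \ref{sec:upper} stops giving a useful bound.

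The coarse-graining reduction (Proposition \ref{coarsegrained}), the H\"older-fractional-moment inequality \eqref{eq:holder}, and the structure of the penalty function $g_\cI(\go)$ from Section \ref{sec:chgtmeas} carry over. However, the relationship between $h$ and the coarse-graining length $\ell$ must be reconsidered: the old prescription $h = 1/\ell$ with $t = \ell^{1/4}$ forces $h \sim t^{-4}$, which is much smaller than the target $h_\gb \sim t_\gb^{-(1+\gep)/2}$. One must instead pick $\ell_\gb$ comparable to $t_\gb$ up to a polynomial factor so that $h_\gb \ell_\gb$ stays bounded (keeping \eqref{getridofh} affordable); accordingly the interaction range used in the multilinear form \eqref{def:X} should be taken as $t_\gb$ itself (rather than $\ell^{1/4}$). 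With this choice $D(t)$ is no longer a large power of a slowly varying function, but exactly $(1+\gep)/\gb^2$, which is the quantity one wants to see in the sharp exponent.

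The hard part is the sharpened one-block estimate replacing Lemma \ref{lem:oneblock}. Every crude constant in Section \ref{sec:oneblock} must be made $(1\pm\gep)$-close to its sharp value: the bound $\mu_\gb \ge \gb/2$ is replaced by $\mu_\gb \ge (1-\gep)\gb$, the bound $\lambda''(\gb) \le 2$ underlying $\hat\bbE_\tau[\hat\go^2] \le 2$ by $\le 1+\gep$, and most delicately, the brutal estimates $\sqrt{D(t)} \le e^q$ and $\phi(\ell) \le e^q$ used in \eqref{simplifyEtauX} must be discarded in favor of the actual values of these quantities (which are now polynomial in $1/\gb$). With the new parameters one obtains
\[
\hat\bbE_\tau[X] \;\gtrsim\; \bigl((1-\gep)\sqrt{A}\bigr)^{q+1} \cdot \frac{1}{\sqrt{D(t_\gb)}\, \phi(\ell_\gb)} \cdot W_{\ell_\gb},
\qquad \text{Var}_{\hat\bbP_\tau}(X) \;\lesssim\; (1+\gep)^{q+1},
\]
with $A = 1+\gep$. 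The main obstacle is then to choose $q = q_\gb \to \infty$ that balances these factors: $q_\gb$ must be large enough that the gap $\bigl((1-\gep)^2(1+\gep)/(1+\gep)\bigr)^{q/2} = \bigl(1-O(\gep)\bigr)^{q/2}$ between expectation-squared and variance is compensated by---and eventually dominates---the $O(1/\gb)$ cost from $\sqrt{D(t_\gb)}\phi(\ell_\gb)$, so that Chebyshev's inequality still yields $\hat\bbP_\tau(X \le e^{M^2})$ small in $\bP_{d,f}$-expectation.

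To push through this balance one must also upgrade the auxiliary Lemmas \ref{lem:W}, \ref{lem:Y}, and especially \ref{lem:ontrime}, whose factors $(3c_5 D(t))^{s+r_1+r_2-1}$ need to be tracked with the correct base (close to $D(t)$ rather than a large multiple) for the variance computation \eqref{boundvariance} to close with the sharp exponent. This is where the bulk of the technical work lies, since the combinatorial arguments driving these lemmas produce additional polynomial-in-$q$ losses that must be absorbed into the $(1+\gep)$-slack. Once these improvements are in place, the exact same coarse-graining argument of Sections \ref{sec:coarse}--\ref{sec:chgtmeas} delivers $\tf(\gb,h_\gb) = 0$, and hence $h_c(\gb) \ge h_\gb = D^{-1}((1+\gep)/\gb^2)^{-(1+\gep)/2}$, from which \eqref{grocrocro} follows by plugging in $D(N) \sim (c_\phi)^{-2} \log N$.
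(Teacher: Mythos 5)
The overall strategy you outline — optimize every constant in the coarse-graining/fractional-moment/change-of-measure scheme, take $A=1+\gep$ in place of $64e^4$, push the interaction range $t$ up to nearly $\ell$, let $q\to\infty$ slowly enough, and track $D(t)\approx(1+\gep)/\gb^2$ explicitly — is the right skeleton and matches the paper's Section~\ref{sec:adapt}. But there is one genuine error and one important omission.

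The error is the claim that one can ``pick $\ell_\gb$ comparable to $t_\gb$ up to a polynomial factor so that $h_\gb\ell_\gb$ stays bounded (keeping \eqref{getridofh} affordable).'' This is impossible. For the interaction range to make sense you need $t\le\ell$, and the target is $h_\gb \approx t_\gb^{-1/2}$ up to $\gep$-corrections; so $h_\gb\ell_\gb \gtrsim h_\gb t_\gb \approx t_\gb^{1/2}\to\infty$. Indeed the paper takes $\ell$ so that $t=\lfloor\ell^{1-\gep^2}\rfloor$ and $h=\ell^{-(2+\gep)/4}$, giving $h\ell=\ell^{(2-\gep)/4}\to\infty$, and explicitly records that ``\eqref{getridofh} is obviously not valid for our choice of $h$.'' You therefore cannot strip off $h$ as a harmless multiplicative constant before applying the one-block estimate; instead one must prove a version of Lemma~\ref{lem:oneblock} for $Z_{d,f}^h$ itself (the paper's Lemma~\ref{lem:oneblock2}), using H\"older to split off $\bE_{d,f}[e^{3h\sum\gd_n}]^{1/3}$ and the homogeneous-pinning estimate \eqref{troctoc} to show that factor stays of order $1$ because $f-d\le\ell$ is much smaller than the annealed correlation length $h^{-2}$. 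Without this step the reduction to the penalized expectation simply does not go through.

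The omission is in the auxiliary lemmas. You note that Lemmas~\ref{lem:W} and~\ref{lem:Y} must be upgraded but focus on tightening the base $D(t)$ in Lemma~\ref{lem:ontrime}; in fact the paper leaves Lemma~\ref{lem:ontrime} (and Lemma~\ref{lestermesenr}) untouched — the base $(3c_5D(t))$ is harmless because the whole variance contribution for $r\ge1$ is killed by the residual factor $\ell^{-\gep^2/2}$. What actually has to change is the covariance estimate. With $t\approx\ell^{1-\gep^2}$ the sum over $|j_2-j_1|\le tq$ in \eqref{eq:afterY} has far too many terms, and the H\"older bound via $\bE[Y_j^2]\le(C_1)^q u(j)$ no longer makes $\bE[\gD W^2]\to0$. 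One needs the sharper off-diagonal bound $\bE[Y_{j_1}Y_{j_2}]\le(C_3)^q\,u(j_1)\,u(j_2-j_1)$ (the paper's Lemma~\ref{lem:Y2}), which exploits decay in $j_2-j_1$ and is proved by a separate induction \eqref{eq:sumUUU2} with the extra shift $m=j_2-j_1$ in the index set. This is where the substantive new technical work lies, and your proposal does not identify it.
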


Comparing \eqref{grocrocro} to \eqref{soixante4}, we realize that we have to gain a factor $512\, e^4$ in the limit.
One can gain a factor two by choosing $h_{\ell}$ to be such that $\ell$ is much closer to the annealed correlation length
$1/\tf(h)$ which is, up to slowly varying correction asymptotically equivalent to $h^{-2}$ (cf. \ \eqref{groomit}).
We choose to have  
$$h := \ell^{-(1+\gep/2)\frac12}.$$

\smallskip

A factor $64 e^4$ is gained by choosing $A=1+\gep$ instead of $64 e^4$.
Also, instead of taking $t=\ell^{1/4}$, we choose $t$ to be a power of $\ell$ as close as $1$ as necessary: we take $t=\ell^{1-\gep^2}$, which yields a extra gain of a factor $4$.

This implies to introduce some modifications to optimize the proof, which we summarize below.

\begin{itemize}
\item [(i)] in the definition of $q$, if we choose to multiply by a large factor, say $\gep^{-2}$, we can avoid to losing 
the exponential factors in \eqref{usefulbounds}. The net benefit of the operation is a factor $e^4$ in the choice of $A$.

\item [(ii)] in \eqref{okbathin}, we can replace $2$ and $1/2$ by quantities which are arbitrarily close to $1$ provided that $\beta$ is chosen small enough (a factor $4$ is gained). More precisely we fix $\gb_{\gep}$ such that 
for all $\gb\in (0,\gb^{\gep})$
\begin{equation}\label{okbathin2}
 \gl''(\gb)\in \big[e^{-\gep^2},1+\gep^{3}/2 \big] \text { and }  \frac{1}{\gb} \gl'(\gb)\in \big[ e^{-\gep^2},1+\gep^{3}/2 \big].
\end{equation}

\item [(iii)] in  Lemmas \ref{lem:newvarX}, and \ref{lem:newX} we do not need $2^q$  and $3^q$ and they can be replaced by powers arbitrarily close to one.
\item[(iv)] in \eqref{aaaaaahhhhh}, we do not need $2^{-q-2}$, one can replace it with any quantity which tends to zero (factor $4$ again).
\end{itemize}
The change which has the more serious consequence is the modification of $t$: we need some refinements to prove Lemmas \ref{lem:W} and 
\ref{lem:Y}. After this brief sketch we now present the modifications in details.

\medskip

We set
\begin{equation}
\label{def:correllength2}
\ell_{\gb,\gep}:=\inf\big\{ n\in \bbN \ | \ D( \lfloor n^{1-\gep^2} \rfloor) \ge (1+\gep) /\gb^{2} \big\}.
\end{equation}

Let us take $h_{\gb,\gep}:= \ell_{\gb,\gep}^{-\frac{2+\gep}{4}}$, 
and prove that there exists some $\gb_\gep$ such that
\begin{equation}
\forall \gb\in (0, \gb_{\gep}], \quad \tf(\gb,h_{\gb,\gep})=0.
\end{equation}
This is enough to obtain Proposition \ref{prop:lowergap} provided that $\gep$ is small enough to satisfy 
$$(2+\gep) \leq 2(1+\gep)(1-\gep^2).$$

\subsection{Adaptation of the change of measure}

The coarse-graining and fractional moment arguments are identical, and no modification is needed there.
The change of measure argument works also in the same manner: the choice of $g_{\cI}$ is the same as in \eqref{def:g}, and the functional $X(\go)$ is also the same as in \eqref{def:X}, 
except for our choice of $t$ and $q$,
\begin{equation}
\label{def:tq2}
\begin{split}
t_{\ell}  &:= \lfloor \ell^{1-\gep^2} \rfloor  \, ; \\
q_{\ell} &:= \frac{1}{\gep^2} \max\Big\{   \log \Big(\sup_{x\leq \ell} \gp(x)\Big) \, ;\, \log D(\ell)  \Big\}\, .
\end{split}
\end{equation}
As \eqref{getridofh} is obviously not valid for our choice of $h$,
we have now to prove a variant
of Lemma \ref{lem:oneblock}, with a partition function which still includes the parameter $h$. 
\begin{lemma}
\label{lem:oneblock2}
For any $M\ge 10$ there exists some $\eta$ and some $\gb_\gep$, such that 
for any $\gb\leq \gb_\gep$,  $d,f \in B_i$, we have
\begin{equation}
\bbE[g_{i}(\go) Z_{d,f}^h] \leq
\begin{cases}e^{-M/2} \text{ if } f-d\ge \eta \ell,          \\                      
  2 \text{ if } f-d\le \eta \ell \, .                     
  \end{cases} 
\end{equation}
\end{lemma}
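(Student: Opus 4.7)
The plan is to reduce Lemma \ref{lem:oneblock2} to the analysis already done for Lemma \ref{lem:oneblock}, by showing that the extra factor $h$ appearing in $Z^h_{d,f}$ contributes only a negligible multiplicative error. Writing $N_{d,f}:=\sum_{n=d}^f \gd_n$ and introducing the tilted disorder measure $\hat\bbP_\tau$ of \eqref{def:Phat} exactly as in Section \ref{sec:oneblock}, one obtains the identity
\begin{equation*}
\bbE\bigl[g_i(\go)\, Z^h_{d,f}\bigr] \;=\; \bE_{d,f}\!\left[e^{h N_{d,f}}\, \hat\bbE_\tau[g_i(\go)]\right] .
\end{equation*}
The only new feature compared to what has already been analyzed is the factor $e^{hN_{d,f}}$, so the whole game is to show it is harmless on average.

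The key auxiliary estimate I would prove is
\begin{equation*}
\bE_{d,f}\bigl[e^{2hN_{d,f}}\bigr] \;\leq\; 2 \qquad \text{uniformly in } d,f\in B_i,
\end{equation*}
for $\gb$ sufficiently small. The renewal structure gives $\bE_{d,f}[e^{\lambda N_{d,f}}] = e^\lambda Z^\lambda_{f-d}/u(f-d)$, where $Z^\lambda_M:=\sum_{k\geq 1}e^{\lambda k}\K^{*k}(M)$ is the constrained homogeneous pinning partition function at pinning parameter $\lambda$. Using \eqref{groomit} with $\alpha=1/2$ one has $\tf(2h)\sim c\,h^2\hat\phi(1/h)$, and the choice $h=\ell^{-(2+\gep)/4}$ gives $\ell\,\tf(2h)=\ell^{-\gep/2+o(1)}\to 0$; since our range $f-d\leq \ell$ sits well below the pure correlation length $1/\tf(2h)$, a standard pinning-model estimate -- or, more concretely, a Taylor expansion together with the renewal moment bounds $\bE_{d,f}[N_{d,f}^k]\leq C_k(f-d)^{k/2}\phi(f-d)^{-k}$ (which follow from Doney's asymptotic \eqref{Doney1/2}) -- yields $Z^{2h}_M/u(M)\leq 1+o(1)$, whence the claim.

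With this in hand the two cases of the lemma are immediate. For a short block ($f-d\leq\eta\ell$), the trivial bound $\hat\bbE_\tau[g_i]\leq 1$ combined with Cauchy--Schwarz gives $\bbE[g_iZ^h_{d,f}]\leq \bE_{d,f}[e^{hN_{d,f}}]\leq \sqrt2 \leq 2$. For a long block ($f-d\geq\eta\ell$), Cauchy--Schwarz and the pointwise inequality $(\hat\bbE_\tau[g_i])^2\leq \hat\bbE_\tau[g_i]$ (valid because $0\leq g_i\leq 1$) yield
\begin{equation*}
\bbE\bigl[g_iZ^h_{d,f}\bigr] \;\leq\; \sqrt{\bE_{d,f}[e^{2hN_{d,f}}]}\;\cdot\;\sqrt{\bE_{d,f}[\hat\bbE_\tau[g_i]]} .
\end{equation*}
I would then rerun the arguments of Section \ref{sec:oneblock} with the modified choices \eqref{def:tq2} of $t$ and $q$ and the sharper constants permitted by items (i)--(iv) in the sketch above, in order to obtain $\bE_{d,f}[\hat\bbE_\tau[g_i]]\leq e^{-M}$ (an improvement by a multiplicative constant over Lemma \ref{lem:oneblock}, free because $M$ is arbitrary). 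Combined with the previous display, $\bbE[g_iZ^h_{d,f}]\leq \sqrt2\,e^{-M/2}$, and the spurious $\sqrt2$ is absorbed by enlarging $M$ by $\log 2$.

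The \textbf{hard part} is the uniform estimate on $\bE_{d,f}[e^{2hN_{d,f}}]$ when $f-d$ is of order $\ell$: there the conditional pinning partition function is genuinely delicate, and the bound hinges on the quantitative fact that $\ell\,\tf(2h)\to 0$, which is precisely why the scale was chosen as $h=\ell^{-(2+\gep)/4}$. A secondary task is to re-verify that Lemmas \ref{lem:newX}, \ref{lem:newvarX}, \ref{lem:Y} and \ref{lem:ontrime} still deliver the needed one-block bound $\bE_{d,f}[\hat\bbE_\tau[g_i]]\leq e^{-M}$ under the much larger scale $t=\lfloor\ell^{1-\gep^2}\rfloor$ and the enlarged $q$; this is precisely what modifications (i)--(iv) are designed to achieve, but requires careful re-tracking of the slowly-varying factors throughout the proof.
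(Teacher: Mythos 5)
Your proof follows essentially the same strategy as the paper's: both rewrite $\bbE[g_i Z^h_{d,f}]=\bE_{d,f}\bigl[e^{hN_{d,f}}\hat\bbE_\tau[g]\bigr]$, bound the extra factor $e^{hN_{d,f}}$ uniformly over the block via an exponential-moment estimate for the conditioned renewal (the paper cites \cite[Eq.~(A.12)]{cf:DGLT09} with exponent $3h$, while you sketch a direct Taylor/moment derivation with exponent $2h$, exploiting $h\sqrt{\ell}\to 0$), and then use a H\"older-type inequality to decouple $e^{hN}$ from $\hat\bbE_\tau[g]$ before invoking Lemmas~\ref{lem:newX2}--\ref{lem:newvarX2}. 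The only organizational difference is that the paper first splits $g\leq\ind_{\{X\leq e^{M^2}\}}+e^{-M}$ and applies H\"older with exponents $(3,3/2)$ to a single term, whereas you apply Cauchy--Schwarz to the whole product and then use $(\hat\bbE_\tau[g])^2\leq\hat\bbE_\tau[g]$; either route works, yours leaving a slightly larger multiplicative constant (the rerun of Section~\ref{sec:oneblock} yields roughly $3e^{-M}$ rather than $e^{-M}$, so the overhead is closer to $\sqrt{6}$ than the $\sqrt{2}$ you quote), which, as you correctly observe, is absorbed by the freedom in $M$.
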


This results allows to show that, similarly to \eqref{Z33},
\begin{multline}
\bbE[g_{\cI}(\go)Z^{\cI}] \le 2^{|\cI|}\sumtwo{d_1,f_1 \in B_{i_1}}{d_1\leq f_1} \cdots \sum_{d_l \in B_{i_l}} 
\K(d_1) u(f_1-d_1) e^{-(M/2)\ind_{\{f_1-d_1\ge \eta \ell\}}} \K(d_2-f_1)  \\ \cdots\  \K(d_l-f_{l-1}) u(N-d_l ) e^{-(M/2)\ind_{\{N-d_l\ge \eta \ell\}}},
\end{multline}
and we can then follow the proof of Section \ref{sec:chgtmeas} to conclude.

\medskip

The core of the proof of Lemma \ref{lem:oneblock2} is the use 
of adapted versions of Lemmas \ref{lem:newX} and \ref{lem:newvarX}.
\begin{lemma}
\label{lem:newX2}
With the updated choice of $X(\go)$ (with $\ell$ as in \eqref{def:correllength2}, $t$ and $q$ as in \eqref{def:tq2}), one has that, for any $M\geq 11$ and $\eta>0$, there exists some $\gb_\gep$ such that, for all $\gb\leq \gb_\gep$, and all $d\leq f$ with $f-d\geq \eta \ell$,
\[\bP_{d,f} \Big( \hat\bbE_\tau [X(\go)] \geq (1+\gep^2)^q\Big) \geq 1-e^{-M} .\] 
\end{lemma}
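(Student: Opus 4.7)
The plan is to follow the three steps of the proof of Lemma \ref{lem:newX}, but track multiplicative constants carefully so that each of the $q+1$ ``levels'' in the product loses only a factor $1+O(\gep^2)$, rather than the cruder $1/2$ and $A^{1/2}/(2e^2)$ of the original argument. Starting from
\[\hat\bbE_\tau[X] = \frac{\m_\gb^{q+1}}{\ell^{1/2}D(t)^{q/2}}\!\sumtwo{\u{i}\in J_{\ell,t}}{d\leq i_0<i_q\leq f}\! U(\u{i})\gd_{\u{i}},\]
I would combine three refined inputs: \eqref{okbathin2} yields $\m_\gb\geq e^{-\gep^2}\gb$; the new \eqref{def:correllength2} yields $\gb^2 D(t)\geq 1+\gep$; and the enlarged $q$ from \eqref{def:tq2} ensures $\phi(\ell)^{1/q}$ and $D(t)^{1/(2q)}$ are at most $e^{\gep^2}$. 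This gives, for $\gep$ small and $\ell$ large,
\[\hat\bbE_\tau[X] \geq \bigl((1+\gep)^{1/2}e^{-c\gep^2}\bigr)^{q+1}\cdot\frac{\phi(\ell)}{\sqrt{\ell}\,D(t)^q}\!\sumtwo{\u{i}\in J_{\ell,t}}{d\leq i_0<i_q\leq f}\! U(\u{i})\gd_{\u{i}},\]
for some absolute constant $c$. Since $(1+\gep)^{1/2}e^{-c\gep^2}>1+\gep^2$ for $\gep$ small, the prefactor exceeds $(1+\gep^2)^q$ by a factor growing like $\alpha^{-q}$, where $\alpha:=(1+\gep^2)/((1+\gep)^{1/2}e^{-c\gep^2})<1$.

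The combinatorial reductions (a)--(d) of Section \ref{firstmoment} are insensitive to the precise values of $t$ and $q$ and carry over verbatim, reducing the lemma to the estimate $\bP(W_\ell\leq c_\ell)\leq c_3^{-1}e^{-M}$, where $W_\ell$ is defined as in \eqref{def:W} for the new parameters and $c_\ell=O(\alpha^q)$ vanishes exponentially in $q$. Since $\bP(|Z|/\sqrt{2\pi}\leq x)\sim 2x$ for small $x$, an analog of the convergence in Lemma \ref{lem:W}, $W_\ell\Longrightarrow|Z|/\sqrt{2\pi}$ with a uniform control of the rate, suffices: it gives $\bP(W_\ell\leq c_\ell)\leq 2c_\ell+o(1)$, which is smaller than $c_3^{-1}e^{-M}$ for $\gb$ small enough.

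The main obstacle is therefore re-proving Lemma \ref{lem:W} with the new $t=\lfloor\ell^{1-\gep^2}\rfloor$ and $q$ from \eqref{def:tq2}. The original $L^2$-argument via Lemma \ref{lem:Y}'s bound $\bE[Y_j^2]\leq C_1^q u(j)$ gives $\bE[\Delta W_\ell^2]=O(\phi(n)^2 C_1^q tq/\sqrt{n})$, which diverges like $\ell^{1/2-\gep^2}$ up to slowly varying factors. A substantial sharpening of Lemma \ref{lem:Y} is needed, aiming at a bound $\bE[Y_j^2]\leq u(j)\cdot\mu_\ell$ with $\mu_\ell\to 0$ polynomially fast in $\ell$. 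The natural starting point is the exact identity
\[\bE[Y_j^2] = u(j)\biggl(\frac{1}{D(t)^{2q}}\sum_{\u{i},\u{k}\in J'_{n,t}(j)}U(\u{i})U(\u{k})\hat U\bigl(V(\u{i},\u{k})\bigr)-1\biggr),\]
where $V(\u{i},\u{k})$ denotes the sorted distinct union of the two sequences shifted by $-j$ and $\hat U(V)$ is the product of its gap probabilities. An adapted inductive bookkeeping in the spirit of Lemma \ref{lem:ontrime}, but aimed at extracting the cancellation with the subtracted $1$ rather than at producing a crude upper bound, should separate the main contribution (where $\u{i}$ and $\u{k}$ have disjoint images apart from $j$) from the overlap corrections (each coincidence $i_l=k_m$ costing a factor $D(t)^{-1}$), and thereby provide the required decay of $\mu_\ell$. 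Inserted into the variance computation, this yields $\bE[\Delta W_\ell^2]\to 0$ and delivers the convergence in distribution of $W_\ell$ needed to conclude.
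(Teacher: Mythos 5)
Your bookkeeping of the prefactor, matching the factor $\bigl((1+\gep)^{1/2}e^{-c\gep^2}\bigr)^{q+1}$ against $(1+\gep^2)^q$, is essentially the paper's computation, and you correctly observe that the combinatorial reductions (a)--(d) carry over and that the obstacle is the $L^2$ estimate for $\gD W_\ell$: the Hölder-based bound $\bE[Y_{j_1}Y_{j_2}]\leq (C_1)^q u(j_1)^{8/9}u(j_2)^{1/9}$ produces a factor $tq\approx q\ell^{1-\gep^2}$ which kills the estimate. However, the fix you propose does not work, and the reason is instructive.

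You aim for $\bE[Y_j^2]\leq u(j)\,\mu_\ell$ with $\mu_\ell\to 0$ polynomially in $\ell$. This cannot hold. Using your identity, $\mu_\ell = \frac{1}{D(t)^{2q}}\sum_{\u{i},\u{k}}U(\u{i})U(\u{k})\bigl[U(\u{i}\,\u{k})-U(\u{i})U(\u{k})\bigr]$, and every summand is non-negative (since merging two sequences only shortens gaps and $u$ is decreasing), so there is no cancellation to extract — the quantity is a genuine surplus. Already for $q=1$ with $u(n)\sim n^{-1/2}$ one finds $\sum_{a<b\leq t}u(a)^2u(b)\bigl[u(b-a)-u(b)\bigr]\asymp \log t$, against $D(t)^2\asymp(\log t)^2$, so $\mu_\ell\asymp 1/\log t$: the decay is only logarithmic, not polynomial, and the same logarithmic bottleneck ($\mu_\ell \gtrsim 1/D(t)$) persists for larger $q$. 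Plugged into the Hölder estimate, $\bE[\gD W^2]\lesssim \mu_\ell\,\gp(n)^2\,(C)^q\,tq\,n^{-2/5}\asymp q(C)^q\ell^{3/5-\gep^2}/\log\ell$, which still diverges. In short, no improvement of the \emph{diagonal} $\bE[Y_j^2]$, however sharp, can save the argument once $t$ is a positive power of $\ell$.

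The missing idea is to improve the \emph{off-diagonal} covariance. The paper's Lemma \ref{lem:Y2}(ii) establishes $\bE[Y_{j_1}Y_{j_2}]\leq (C_3)^q\,u(j_1)\,u(j_2-j_1)$ for $0\leq j_2-j_1\leq tq$, i.e.\ the covariance is controlled by $\bE[\gd_{j_1}\gd_{j_2}]$ itself rather than by a Hölder product of variances. The extra decaying factor $u(j_2-j_1)$ is summable over the off-diagonal window and replaces the damaging $tq$ by $\sum_{k=0}^{tq}u(k)\asymp \sqrt{tq}/\gp(tq)$, giving $\bE[\gD W^2]\lesssim \sqrt{q}(C_3)^q\,\ell^{-\gep^2/2}\to 0$. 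The proof of this covariance bound is an induction on $(q,q')$ as in Lemma \ref{lem:ontrime}, but relative to a shifted base point $m=j_2-j_1$; the base cases and the three-way decomposition according to the position of the rightmost index have to be redone to carry the $u(m)$ factor through. This is the substantive new work you would need, and it is orthogonal to the diagonal-sharpening route you sketched.
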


\begin{lemma}
\label{lem:newvarX2}
With the updated choice of $X(\go)$ (with $\ell$ as in \eqref{def:correllength2}, $t$ and $q$ as in \eqref{def:tq2}),
there exists some $\gb_\gep$ such that, for $\gb\leq \gb_\gep$ one has
\begin{equation}
\bE_{d,f} \hat\bbE\big[\big( X- \hat\bbE_\tau [X] \big)^2 \big] \leq (1+\gep^3)^{q}.
\end{equation}
\end{lemma}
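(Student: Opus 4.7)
The plan is to follow the architecture of the proof of Lemma~\ref{lem:newvarX} verbatim, but to replace the crude constants $2$ and $4$ (which arose from $\hat\bbE_\tau[\hat\go_i^2]\le 2$ and $\m_\gb\le 2\gb$) by constants arbitrarily close to $1$, using the refined bounds \eqref{okbathin2} available for $\gb\le\gb_\gep$. The gain $\sqrt{t/\ell}=\ell^{-\gep^2/2}$ coming from the new choice of $t$ then has to overpower the remaining slowly varying factors.

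First I decompose $\go_i=\hat\go_i+\m_\gb\gd_i\ind_{\{d\le i\le f\}}$ and carry out the expansion leading to \eqref{bigsomme}. Only terms with $|A|=|B|=r$ survive under $\hat\bbE_\tau$; the $r=q+1$ contribution cancels the $(\hat\bbE_\tau[X])^2$ subtraction, leaving a main term at $r=0$ and intermediate terms for $1\le r\le q$. For the main term, the sharper bound $\hat\bbE_\tau[\hat\go_i^2]\le 1+\gep^3/2$ from \eqref{okbathin2} gives
\[
\frac{(1+\gep^3/2)^{q+1}}{\ell\,(D(t))^q}\sum_{\u i\in J_{\ell,t}}U(\u i)^2\;\le\;(1+\gep^3/2)^{q+1}\,\bbE[X^2]\;\le\;(1+\gep^3/2)^{q+1},
\]
via \eqref{eq:varX}. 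For $\gep$ small and $q$ large this is already smaller than $\tfrac34(1+\gep^3)^q$.

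For the intermediate terms I reuse Lemma~\ref{lestermesenr}, whose statement is insensitive to the specific choice of $t,q,\ell$, together with $\m_\gb\le(1+\gep^3/2)\gb$. Tracking constants, the $r$-th contribution is bounded by
\[
c_6\, q\,(1+\gep^3/2)^{2(q+1)}\,C_2^{q+r-1}\,\gb^2\,(\gb^2 D(t))^{r-1}\,\frac{\sqrt t}{\sqrt\ell\,\phi(t)\phi(\ell)}.
\]
By the definition \eqref{def:correllength2} of $\ell$, $\gb^2 D(t)\le 1+\gep+o(1)$; summing over $r\in\{1,\dots,q\}$ produces a total of order
\[
q^2\,\gb^2\,(1+\gep^3/2)^{2(q+1)}\,(1+\gep)^{q-1}\,C_2^{2q-1}\,\ell^{-\gep^2/2}\,\phi(t)^{-1}\phi(\ell)^{-1}.
\]
With $q=\gep^{-2}\max\{\log\sup_{x\le\ell}\phi(x),\,\log D(\ell)\}$ the factor $(1+\gep)^q C_2^{2q}(1+\gep^3/2)^{2q}=\exp\bigl(O(\gep^{-2}\log D(\ell))\bigr)$ is a slowly varying function of $\ell$, hence dominated by $\ell^{\gep^2/4}$ once $\ell$ is large. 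The intermediate contribution is therefore $o(1)$, and combining with the main term yields the announced bound $(1+\gep^3)^q$ for $\gb\le\gb_\gep$ small enough.

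The delicate point is the control of $(1+\gep)^q$: because $q$ has been inflated by a factor $\gep^{-2}$ compared to Lemma~\ref{lem:newvarX}, $(1+\gep)^q\asymp D(\ell)^{\log(1+\gep)/\gep^2}$ is much larger than before, yet remains slowly varying and is dominated by the polynomial decay $\ell^{-\gep^2/2}$ once $\ell=\ell_{\gb,\gep}$ is sufficiently large. This trade-off is exactly what dictates the $\gep^{-2}$ factor in the definition \eqref{def:tq2} of $q$ (needed to absorb the exponential factors in \eqref{usefulbounds} that would otherwise appear in the sharpened version of Lemma~\ref{lem:newX}) and the $\gep^2$ exponent in the definition of $t$ (needed so that $\sqrt{t/\ell}$ provides genuine decay against any slowly varying contribution).
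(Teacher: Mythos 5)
Your argument is correct and follows essentially the same route as the paper: reuse the expansion and Lemma~\ref{lestermesenr} from the proof of Lemma~\ref{lem:newvarX}, sharpen only the $r=0$ term via $\hat\bbE_\tau[\hat\go_i^2]\le 1+\gep^3/2$ from \eqref{okbathin2} to get $(1+\gep^3/2)^{q+1}$, and observe that the $r\ge 1$ contribution is $o(1)$ because $\sqrt{t/\ell}=\ell^{-\gep^2/2}$ beats every slowly varying factor. The one cosmetic difference is that you also refine $\m_\gb\le(1+\gep^3/2)\gb$ in the intermediate terms, whereas the paper keeps the crude $4^{q+1}$ there since that part vanishes anyway; this is harmless and does not change the structure of the proof.
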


\begin{proof}[Proof of Lemma \ref{lem:oneblock2}]

Let us start with the second case $f-d\le \eta \ell$.
Note that for any choice of $d,f$, we have (as $g_i(\go)\le 1$)
\begin{equation}
\bbE[g_{i}(\go) Z_{d_i,f_i}^h] \leq \bE_{d,f}\left[e^{h \sum_{i=0}^{d-f} \gd_i} \right].
\end{equation}
Up to a factor $e^h$ this corresponds to the partition function of the homogeneous pinning model.
Now, as we have chosen $h$ such that $f-d$ is much smaller than the correlation length, we can use the bound from \cite[Equation (A.12)]{cf:DGLT09}
to obtain that for $\gb$ small enough, for all $f-d\le \ell_{\gb,\gep}$.
\begin{equation}\label{troctoc}
\bE_{d,f}\left[e^{3h \sum_{i=d}^{f} \gd_i} \delta_{f-d} \right]\le 2.
\end{equation}
For the case $f-d\ge \eta \ell$ with the same definition of $\hat\bbP_{\tau}$ as in \eqref{def:Phat}, we have that, similarly to \eqref{eq:newZ},
\begin{equation}
\bbE[g_{i}(\go) Z_{d,f}^h] = \bE_{d,f} \bigg[ e^{h\sum_{n=d}^f \gd_n} \hat\bbE_\tau [g(\go)]  \bigg].
\end{equation}
Hence, using the definition \eqref{def:g} of $g(\go)$, one gets that
\begin{multline}
\label{eq:oneblock2}
\bbE[g_{i}(\go) Z_{d,f}^h] \leq  \bE_{d,f} \Big[ e^{h\sum_{n=d}^f \gd_n}  \hat\bbP_\tau\big(X(\go) \leq e^{M^2} \big) \Big] + \bE_{d,f} \big[ e^{h\sum_{n=d}^f \gd_n}  \big] e^{-M}\\
\leq \bE_{d,f} \big[ e^{3h\sum_{n=d}^f \gd_n} \big]^{1/3} \bE_{d,f} \big[  \hat\bbP_\tau\big(X(\go) \leq e^{M^2} \big)^{3/2} \big]^{2/3} + 2 e^{-M}\\
\leq  2 \bE_{d,f} \big[  \hat\bbP_\tau\big(X(\go) \leq e^{M^2} \big) \big]^{2/3} + 2 e^{-M},
\end{multline}
where we  first used H\"older's inequality, and then \eqref{troctoc}.
The smallness of the first term is the r.h.s\ can be established by using the moment estimates 
from Lemmas \ref{lem:newX2} and  \ref{lem:newvarX2}.
\end{proof}

\subsection{Modifications needed to prove Lemmas \ref{lem:newX2}-\ref{lem:newvarX2}}

\begin{proof}[Proof of Lemma \ref{lem:newvarX2}]
One needs to modify very little of the proof of Lemma \ref{lem:newvarX} in order to obtain Lemma \ref{lem:newvarX2}.
Indeed, one only has to use \eqref{okbathin2} which ensures that $\bbE[\hat\go_i^2]\leq 1+\gep^3/2$ (cf. \eqref{eq:newexpnewvar}). 
Then we notice that the bound \eqref{boundvariance} remains valid, with $2^{q+1}$ replaced by $(1+\gep^3/2)^{q+1}$. Since Lemma~\ref{lestermesenr} also remains valid, we obtain
\begin{multline}
 \bE_{d,f}\left( \Var_{\hat \bbP_{\tau}}\big( X\big) \right) \leq (1+\gep^3/2)^{q+1} + \frac{4^{q+1}}{\ell (D(t))^{q}} \sum_{r=1}^{q} \gb^{2r} c_6 \, q (C_2 D(t))^{q+r-1} \frac{\sqrt{t\ell}}{ \gp(t) \gp(\ell)} \\
 \leq (1+\gep^3/2)^{q+1} + q^2 c_6 \gb^2 (8 C_2^2)^{q+1} \, \frac{1}{D(t) \gp(t) \gp(\ell)}\, \ell^{-\gep^2/2} \, ,
\end{multline}
where, in the last inequality, we used that $D(t)\leq 2/\gb^2$, and that $t\leq  \ell^{1-\gep^2}$. Then, since $q^2$, $(8 C_2^2)^{q+1}$, $D(t)$,  $\gp(t)$, $\gp(\ell)$ are slowly varying functions, the second term goes to $0$ as $\ell$ goes to infinity, and Lemma \ref{lem:newvarX2} is proven.
\end{proof}

\begin{proof}[Proof of Lemma \ref{lem:newX2}]
First of all, one has more refined bounds than \eqref{usefulbounds}: thanks to our choice of $q$ and $t$ in \eqref{def:tq2}, we have that
$\gp(\ell) \leq (e^{\gep^2})^q$, $D(t)\leq (e^{\gep^2})^q$, and  $\gb^2 \geq (1+\gep)/D(t)$. Moreover, if $\gb$ is small enough, one has that $\m_\gb \geq e^{-\gep^2} \gb$. Hence, one can replace \eqref{simplifyEtauX} with 
\begin{equation}
\hat\bbE_\tau [X] 
\geq  \left( \frac{\sqrt{1+\gep} }{ e^{3\gep^2}}\right)^{q+1} \ \times \ \frac{\gp(\ell)}{\sqrt{\ell} \, D(t)^{q}} \sumtwo{\u{i}\in J_{\ell,t}}{ d\le i_0<i_q \le f} U(\u{i})\gd_{\u{i}},
\end{equation}
and Lemma \ref{lem:newX2} follows if one shows that
\begin{equation}
\bP_{d,f}\Bigg( \frac{\gp(\ell)}{\sqrt{\ell} \, D(t)^{q}} \sumtwo{\u{i}\in J_{\ell,t}}{ d\le i_0<i_q \le f} U(\u{i})\gd_{\u{i}} \ \ \geq\ (1-\gep/4)^q \Bigg) \geq 1-e^{-M},
\end{equation}
where we used that for $\gep$ small enough, $$(1+\gep^2)\times \Big(\frac{\sqrt{1+\gep} }{ e^{3\gep^2}} \Big)^{-1} \geq 1-\gep/4.$$ 

Then, the steps \eqref{aaaaaahhhhh}-\eqref{eq:Wamontrer} are identical, and one simply needs to show that the following convergence still holds with our new choice of $t$  (recall  $n_{\ell}=\tfrac14 \eta \ell$).
\begin{equation}
\label{Wconverge}
W_{\ell}:= \frac{\gp(n)}{\sqrt{n} \, D(t)^{q}} \sumtwo{\u{i}\in J'_{n,t}}{ d\le i_0<i_q \le f} U(\u{i})\gd_{\u{i}} \quad \stackrel{\ell \to\infty}{\Longrightarrow} \quad \frac{1}{\sqrt{2\pi}} |Z|, \quad (Z\sim\cN(0,1))\, .
\end{equation}

\smallskip

Thanks to \eqref{convsimple}, one only needs to show that $\bE[\gD W^2]$ converges to $0$ as $\ell \to\infty$, where $\gD W$ is defined in \eqref{def:DW}.
To do so we need a finer control of the covariance terms $\bE[Y_{j_1} Y_{j_2}]$ when $|j_2-j_1| \leq tq$ 
(the definition of $Y_j$ is identical as in \eqref{def:Y}). We prove the following improvement of Lemma~\ref{lem:Y}.
\begin{lemma}
\label{lem:Y2}
We have:
\begin{enumerate}[(i)]
\item If $|j_1-j_2|> tq$, then $\bE[Y_{j_1}Y_{j_2}]=0$ \, ;
\item There exists a constant $C_3>0$ such that, if $j_2\geq j_1$ and $j_2-j_1 \leq tq$, one has
\[\bE[Y_{j_1} Y_{j_2}] \leq (C_3)^q \bE[\gd_{j_1} \gd_{j_2}] = (C_3)^q u(j_1) u(j_2-j_1) \, . \]
\end{enumerate}
\end{lemma}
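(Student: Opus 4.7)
The plan is to establish (i) with the same argument already used for Lemma~\ref{lem:Y}(i), and to prove (ii) by factoring $Y_j = \gd_j \tilde Y_j$ into an indicator and a residual, then reducing the task to a combinatorial sum estimate in the spirit of \eqref{eq:sumUUU}.

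For item (i), I would rewrite $Y_{j_2} = \gd_{j_2}\tilde Y_{j_2}$, where
\[
\tilde Y_j := 1 - D(t)^{-q} \sum_{\u{i}:\,i_0 = j} U(\u{i}) \prod_{k \geq 1} \gd_{i_k}
\]
depends only on $\tau \cap (j, j+qt]$. Conditioning on $\gd_{j_2}=1$, the regenerative property makes $\tilde Y_{j_2}$ independent of $Y_{j_1}$ whenever $j_2 > j_1+qt$, while $\bE[\tilde Y_{j_2}\,|\,\gd_{j_2}=1]=0$ follows from the identity $D(t)^{-q}\sum_{\u{k}} U(\u{k})^2 = 1$ already used in the proof of Lemma~\ref{lem:Y}(i).

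For item (ii), the same factorization yields
\[
\bE[Y_{j_1}Y_{j_2}] = u(j_1)\,u(j_2 - j_1)\cdot R, \qquad R := \bE\big[\tilde Y_{j_1}\tilde Y_{j_2}\,\big|\,j_1, j_2 \in \tau\big],
\]
so it suffices to bound $R \leq (C_3)^q$. Expanding $\tilde Y_{j_1}\tilde Y_{j_2} = (1-A_1)(1-A_2)$ with $A_i \geq 0$ the natural non-negative remainders, and noting that $\bE[A_2\,|\,j_1, j_2 \in \tau]=1$ by the computation of (i) applied to the fresh renewal past $j_2$, one obtains
\[
R = -\bE[A_1\,|\,j_1, j_2 \in \tau] + \bE[A_1 A_2\,|\,j_1, j_2 \in \tau] \leq \bE[A_1 A_2\,|\,j_1, j_2 \in \tau],
\]
using $A_1 \geq 0$ to drop the first term. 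A direct computation based on the renewal identity $\bE[\gd_S] = u(\min S)\prod u(\mathrm{differences})$ for any finite $S \subset \bbN$ then gives
\[
\bE[A_1 A_2\,|\,j_1, j_2 \in \tau] = \frac{1}{u(a)\,D(t)^{2q}} \sum_{\u{i},\u{k}} U(\u{i})\,U(\u{k})\,U(\u{i}\cup\u{k}),
\]
where $a = j_2 - j_1$, $\u{i}$ ranges over sequences with $i_0=j_1$ and $i_k - i_{k-1} \in (0,t]$, $\u{k}$ similarly from $j_2$, and $U(\u{i}\cup\u{k})$ denotes the product of $u$ over consecutive differences in the sorted union.

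The proof thus reduces to the combinatorial inequality
\[
\sum_{\u{i},\u{k}} U(\u{i})\,U(\u{k})\,U(\u{i}\cup\u{k}) \leq C^{2q}\,u(a)\,D(t)^{2q},
\]
which yields (ii) with $C_3 = C^2$. I would establish this by induction on $q_1+q_2$ (allowing $\u{i}$ of length $q_1+1$ and $\u{k}$ of length $q_2+1$), paralleling the argument for \eqref{eq:sumUUU}: at each step one sums over the larger of $i_{q_1}$ and $k_{q_2}$ and invokes $\sum_m u(m-n_1)\,u(m-n_2) \leq c_5 D(t)$. The base case $q_1=q_2=0$ equals $u(a)$ exactly, and the case $q_1=0$ gives $u(a)\,D(t)^{q_2}$ by direct factorization since $\u{i}\cup\u{k}$ then begins with the block $(j_1, j_2)$. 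The main obstacle will be the edge case $q_2=0$, in which the factor $u(a)$ does not appear automatically: one must bound sums such as $\sum_{m<a} F_{q_1}(m)\,u(a-m)$ with $F_{q_1}(m) := \sum_{\u{i}:\,i_{q_1}-j_1 = m} U(\u{i})^2$, which is handled using the regularly-varying asymptotics $u(n)\sim n^{-1/2}/\phi(n)$, a convolution estimate of the type $(u * u)(n) = O(\phi(n)^{-2})$, and the constraint $a \leq qt$ with $q$ slowly varying in $\ell$.
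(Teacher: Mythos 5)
Your outline follows the paper's proof closely, and the first several steps are correct and in fact slightly cleaner than the paper's presentation. The factorization $Y_j=\gd_j\tilde Y_j$, the reduction to $R=\bE[\tilde Y_{j_1}\tilde Y_{j_2}\mid j_1,j_2\in\tau]$, the use of $\bE[A_2\mid j_1,j_2\in\tau]=1$ combined with $A_1\geq 0$ to get $R\leq\bE[A_1A_2\mid\cdots]$, and the renewal computation yielding the factor $U(\u{i}\,\u{k})/u(a)$ are all right. (The paper instead conditions only on $\gd_{j_1}$, expands $\gd_m(1-A_0)(1-A_m)\leq \gd_m + \gd_m A_0A_m$, and ends up carrying an extra additive $u(j_1)u(m)$ term; your version avoids that, which is fine.) You also correctly identify the target combinatorial inequality, which is the paper's \eqref{eq:sumUUU2}, and the inductive step ``sum over whichever of $i_{q_1}$, $k_{q_2}$ is larger'' with the key estimate $\sum_m u(m-n_1)u(m-n_2)\leq c_5 D(t)$ is exactly the paper's step (2).

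The gap is in the $q_2=0$ edge case, which you correctly flag as the main obstacle but do not actually resolve. The proposed tools are not the right ones. First, the estimate $(u*u)(n)=O(\phi(n)^{-2})$ has the wrong shape: what the induction requires is a bound that produces a factor $D(t)\,u(\cdot)$, not $\phi(\cdot)^{-2}$, and the relevant quantity is a $u^2\cdot u$ sum, not a $u\cdot u$ sum. Concretely, the estimate one needs (when the new point $m=j_2-j_1$ sits beyond the last index $i_q$) is
\[
\sum_{i_q=i_{q-1}+1}^{\min(i_{q-1}+t,\,m)} u(m-i_q)\,u(i_q-i_{q-1})^2 \;\leq\; c\, D(t)\, u(m-i_{q-1}),
\]
proved by splitting $i_q$ at the midpoint of $(i_{q-1},m)$ and using the near-monotonicity $u(a)\leq c_5 u(b)$ for $a\geq b$; a plain convolution bound in $m$ alone cannot reproduce the $u(m-i_{q-1})$ on the right, which is essential for the induction to close. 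Second, invoking ``the constraint $a\leq qt$'' is a red flag: the inequality is actually uniform in $a$ and the proof does not (and should not) lean on $q$ being slowly varying; if your argument needs that constraint, it is not recovering the stated bound. Third, writing the $q_2=0$ case as $\sum_{m<a}F_{q_1}(m)u(a-m)$ only covers the sub-case $i_{q_1}<a$; when $a\leq i_{q_1}$ the point $j_2$ is inserted in the interior of $\u{i}$ and the factor $U(\u{i}\,\u{k})$ is $U(\u{i})\cdot u(j_2-i_l)u(i_{l+1}-j_2)/u(i_{l+1}-i_l)$ for the appropriate $l$, which is a different structure. The paper handles all of this in its step (1) by a three-way decomposition ($i_{q-1}\geq m$, $i_{q-1}<m\leq i_q$, $i_{q-1}<i_q<m$), peeling off one index per step so that the factor $u(m)$ emerges at the base $q=0$; to make your sketch a proof you would need to supply that trichotomy (or, if you insist on the $F_{q_1}$ route, first prove an intermediate bound of the type $F_{q_1}(m)\leq C^{q_1}D(t)^{q_1-1}u(m)^2$, itself by induction, and then bound the single remaining $u^2*u$ sum --- neither of which is in your sketch).
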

Thanks to this lemma, one obtains, similarly to \eqref{eq:afterY}
\begin{equation}
\bE[\gD W^2] = \frac{\gp(n)^2}{n} \sumtwo{j_1,j_2 =0}{|j_2-j_1| \leq tq}^n \bE[Y_{j_1} Y_{j_2}] \leq 2 \frac{\gp(n)^2}{n} (C_3)^q \sum_{j_1=0}^n u(j_1)\sum_{j_2=j_1}^{j_1+tq} u(j_2-j_1).
\end{equation}
Then, one uses that $\sum_{i=0}^k u(i) \leq c_8 \gp(k)^{-1} \sqrt{k}$ for all $k\geq 0$, to get that
\begin{equation}
\bE[\gD W^2]\leq 2 \frac{\gp(n)^2}{n} (C_3)^q c_8^2 \frac{\sqrt{n}}{\gp(n)} \frac{\sqrt{tq}}{\gp(tq)} \leq  \frac{2 c_8^2 \sqrt{q} (C_3)^q}{\gp(n) \gp(tq) \sqrt{\eta}}\,  \times \ell^{-\gep^2/4},
\end{equation}
where we used that $t\leq \ell^{1-\gep^2/2}$ and that $n\geq \eta \ell$ to get the last inequality.
One therefore gets that $\bE[\gD W^2]$ converges to $0$ as $n$ goes to infinity, since $\sqrt{q},\gp(n),\gp(tq)$ and $(C_3)^q$ are slowly varying functions.
\end{proof}

\begin{proof}[Proof of Lemma \ref{lem:Y2}]
Conditioning on $\gd_{j_1}$, and denoting $m=j_2-j_1$, one has
\begin{multline}
\bE[Y_{j_1} Y_{j_2}]  = \bE[\gd_{j_1}] \bE\bigg[ \gd_{m} \Big(1-\frac{1}{D(t)^q} \sum_{\u{i}\in J'_{n,t}(0)} U(\u{i}) \gd_{\u{i}} \Big) \Big(1- \frac{1}{D(t)^q} \sum_{\u{k}\in J'_{n,t}(m)} U(\u{k}) \gd_{\u{k}} \Big) \bigg]\\
 \leq \bE[\gd_{j_1}]  \frac{1}{D(t)^{2q}}\bE\bigg[ \sumtwo{\u{i} \in J'_{n,t}(0)}{\u{k} \in J'_{n,t}(m)} U(\u{i}) U(\u{k}) \gd_{\u{i}\,\u{k}}   \bigg] +\bE[\gd_{j_1}] \bE[\gd_m].
\end{multline}
Now, similarly to \eqref{eq:sumUUU}, we prove by induction that there exists a constant $C_3>0$ such that, for any couple $q,q'$ and any $m\geq 0$,
\begin{equation}
\label{eq:sumUUU2}
\sumtwo{\u{i} \in J'_{n,t,q}(0)}{\u{k} \in J'_{n,t,q'}(m)} U(\u{i}) U(\u{k}) U(\u{i}\, \u{k} ) \leq  (C_3)^{q+q'}\, D(t)^{q+q'} \, u(m),
\end{equation}
which is enough to prove Lemma \ref{lem:Y2}.

\medskip
{\bf (0)} The case $q=0$ is trivial: one has $\u{i}=\{0\}$, so that one has the bound
\begin{equation}
\label{eq:q0}
\sum_{\u{k} \in J'_{n,t,q'}(m)} U(\u{k}) U( 0\u{k} )   =  u(m) \sum_{\u{k} \in J'_{n,t,q'}(m)}  U( \u{k} )^2  \leq u(m) D(t)^q \, .
\end{equation}
We now assume that $q\geq 1$.

\medskip
{\bf (1)} We first show by induction on $q$ that there exists a constant $C_4$, such that when $q'=0$ (so that $\u{k}= \{m\}$),
\begin{equation}
\label{eq:q'0}
\sum_{\u{i} \in J'_{n,t,q}(0)} U(\u{i}) U(\u{i} \, m) \leq  (C_4)^{q} D(t)^q u(m).
\end{equation}
We decompose the sum according to whether $i_{q-1}\geq m$, $i_{q-1} < m \leq i_q$ or $i_{q-1}<i_q<m$.

{\bf a.} If $i_{q-1}\geq m $, one trivially has that $\sum_{i_q=i_{q-1}+1}^{i_{q-1}+t} u(i_q - i_{q-1})^2 \leq D(t)$, so that
\begin{equation}
\label{eq:sum1}
\sumtwo{\u{i} \in J'_{n,t,q}(0)}{i_{q-1} \geq m} U(\u{i}) U(\u{i}\, m) \leq   D(t) \sumtwo{\u{i} \in J'_{n,t,q-1}(0)}{i_{q-1} \geq m} U(\u{i}) U(\u{i}\, m)  .
\end{equation}

{\bf b.} If $i_{q-1} < m \leq i_{q}$, then using that $u(i_q-i_{q-1})\leq c_5 u(i_q - m)$, one gets that
\begin{equation}
\sum_{i_q=m}^{i_{q-1}+t} u(i_{q}-m) u(m-i_{q-1}) u(i_q-i_{q-1}) \leq c_5 D(t) u(m-{i_{q-1}}),
\end{equation}
so that
\begin{equation}
\label{eq:sum2}
\sumtwo{\u{i} \in J'_{n,t,q}(0)}{i_{q-1} < m \leq i_q} U(\u{i}) U(\u{i}\, m) \leq  c_5 D(t) \sumtwo{\u{i} \in J'_{n,t,q-1}(0)}{i_{q-1} < m} U(\u{i}) U(\u{i}\, m)  .
\end{equation}

{\bf c.} Now, if $ i_{q-1}<i_q <m$, we have that there exists a constant $c_9$ such that
\begin{equation}
\sum_{i_q=i_{q-1}+1}^{\min(i_{q-1}+t, m)} u(m - i_{q}) u(i_q-i_{q-1})^2 \leq c_9 D(t)\, u(m- i_{q-1}).
\end{equation}
Indeed, if we denote $x=m-i_{q-1}$, we can decompose the above sum into whether $i_q \leq i_{q-1} + x/2$ or $i_q> i_{q-1} + x/2$.
If $i_q \leq i_{q-1} + x/2$, then $m - i_{q} \geq \frac12 (m-i_{q-1})$, so that $u(m - i_{q}) \leq c_{7} u(m-i_{q-1})$ ($c_7$ is a constant such that $u(m)\leq c_7 u(n)$ whenever $m\geq \tfrac14 n$).
Then
\[
\sum_{i_q=i_{q-1}+1}^{\min(i_{q-1}+t, i_{q-1}+x/2)} u(m - i_{q}) u(i_q-i_{q-1})^2  \leq c_{7} u(m-i_{q-1}) \sum_{i_q=i_{q-1}+1}^{i_{q-1}+t}  u(i_q-i_{q-1})^2 \leq c_{7} u(m-i_{q-1}) D(t) .
\]
On the other hand, if $i_{q} \geq i_{q-1} + x/2$, then $u(i_q-i_{q-1}) \leq c_{7} u(m-i_{q-1})$, and since $m-i_{q} \leq x/2 \leq i_{q}-i_{q-1}$, one also has that $u(m-i_q) \leq c_5 u(i_{q}-i_{q-1})$. One then has that
\[
\sum_{i_q=i_{q-1}+x/2}^{\min(i_{q-1}+t,m)}u(m - i_{q}) u(i_q-i_{q-1})^2  \leq c_{7} c_5 u(m-i_{q-1}) \sum_{i_q= i_{q-1}+x/2}^{\min(i_{q-1}+t,m)}  u(i_q-i_{q-1})^2 \leq c_{7} c_5  u(m-i_{q-1}) D(t).
\]
Hence, we showed that
\begin{equation}
\label{eq:sum3}
\sumtwo{\u{i} \in J'_{n,t,q}(0)}{i_{q-1} <i_q< m } U(\u{i}) U(\u{i} \, m) \leq  (c_{7}+ c_{7} c_5) D(t) \sumtwo{\u{i} \in J'_{n,t,q-1}(0)}{i_{q-1} < m} U(\u{i}) U(\u{i}\, m) ,
\end{equation}
which is \eqref{eq:sum3} with $c_9 = c_{7}+ c_{7} c_5$.

Combining \eqref{eq:sum1}-\eqref{eq:sum2}-\eqref{eq:sum3}, and setting $C_4:= \max(c_5,c_9)$, we have that
\begin{equation}
\sum_{\u{i} \in J'_{n,t,q}(0)} U(\u{i}) U(\u{i} \, m) \leq  C_4 D(t) \sum_{\u{i} \in J'_{n,t,q-1}(0)} U(\u{i}) U(\u{i}\, m),
\end{equation}
which by iteration gives \eqref{eq:q'0}.

\medskip
{\bf (2)} Then, assume that $q'\geq 1$. As in the proof of \eqref{eq:sumUUU}, we decompose the sum into two components, according ot whether $i_q$ or $k_{q'}$ is larger, and one obtains, exactly as in \eqref{eq:lastsumUUU},
\begin{equation}
\sumtwo{\u{i}  \in J'_{n,t,q}(0)}{\u{k}  \in J'_{n,t,q'}(0)} \!\! U(\u{i}) U(\u{k})U(\u{i}\, \u{k})\le  2c_5 \max\Bigg\{  \sumtwo{\u{i}  \in J'_{n,t,q-1}(0)}{\u{k}  \in J'_{n,t,q'}(0)}\!\!  U(\u{i}) U(\u{k})U(\u{i}\, \u{k}) \ ; \!\!\!\!  \sumtwo{\u{i}  \in J'_{n,t,q}(0)}{\u{k}  \in J'_{n,t,q'-1}(0)} \!\! U(\u{i}) U(\u{k})U(\u{i}\, \u{k})\Bigg\}\, ,
\end{equation}
which in turns gives \eqref{eq:sumUUU2} by induction, thanks to \eqref{eq:q0}-\eqref{eq:q'0}, with  $C_3:=\max(2 c_5, C_4)$.
\end{proof}

\bibliographystyle{plain}

\end{document}